\crefname{hypothesis}{Hypothesis}{Hypotheses}
\definecolor{@gray}{HTML}{edc3c5}
\newcommand{\citet}[1]{\cite{#1}}
\crefname{algorithm}{Mechanism}{Mechanisms} 
\Crefname{algorithm}{Mechanism}{Mechanisms} 
\DeclareMathOperator*{\E}{\mathbb{E}}
\DeclareMathOperator*{\argmin}{arg\,min}
\newcommand{\lnash}{\lambda_{\mathtt{NASH}}}
\newcommand{\lrob}{\lambda_{\mathtt{ROB}}}
\newcommand{\mechanism}{\hyperref[alg:mechanism]{\color{black}Budgeted Robust Border}\xspace}
\newcommand{\BRB}{\hyperref[alg:mechanism]{\color{black}BRB}\xspace}
\newcommand{\generalmechanism}{\hyperref[alg:general_mechanism]{\color{black}Generalized Budgeted Border}\xspace}
\newcommand{\anytimemechanism}{\hyperref[alg:anytime_mechanism]{\color{black}Any Time Budgeted Border}\xspace}
\DeclareMathOperator{\poly}{poly}
\let\epsilon\varepsilon
\begin{document}

\newcommand{\acknowledgements}{\thanks{
David X. Lin was supported by NSF grant ECCS-1847393.
Siddhartha Banerjee was supported in part by AFOSR grant FA9550-23-1-0068, ARO MURI grant W911NF-19-1-0217, and NSF grants ECCS-1847393 and CNS-195599.
Giannis Fikioris was supported in part by the Google PhD Fellowship, the Onassis Foundation – Scholarship ID: F ZS 068-1/2022-2023, and ONR MURI grant N000142412742.
\'Eva Tardos was supported in part by AFOSR grant FA9550-23-1-0410, AFOSR grant FA9550-231-0068, and ONR MURI grant N000142412742.
}}

\title{\Large Robust Equilibria in Shared Resource Allocation via Strengthening Border's Theorem\acknowledgements}
    \author{David X. Lin\thanks{Cornell University (\email{dxl2@cornell.edu}, \email{sbanerjee@cornell.edu}, \email{gfikioris@cs.cornell.edu}, \email{eva.tardos@cornell.edu}).}
    \and Siddhartha Banerjee\footnotemark[2]    
    \and Giannis Fikioris\footnotemark[2]
    \and \'Eva Tardos\footnotemark[2]}

\date{}

\maketitle

\begin{abstract}
We consider repeated allocation of a shared resource via a non-monetary mechanism, wherein a single item must be allocated to one of multiple agents in each round. We assume that each agent has i.i.d. values for the item across rounds, and additive utilities. Past work on this problem has proposed mechanisms where agents can get one of two kinds of guarantees: $(i)$ (approximate) Bayes-Nash equilibria via linkage-based mechanisms which need extensive knowledge of the value distributions, and $(ii)$ simple distribution-agnostic mechanisms with robust utility guarantees for each individual agent, which are worse than the Nash outcome, but hold irrespective of how others behave (including possibly collusive behavior). 
Recent work has hinted at barriers to achieving both \emph{simultaneously}. 
Our work however establishes this is not the case, by proposing the first mechanism in which each agent has a natural strategy that is both a Bayes-Nash equilibrium and also comes with strong robust guarantees for individual agent utilities.
Our mechanism comes out of a surprising connection between the online shared resource allocation problem and implementation theory, and uses a surprising strengthening of Border’s theorem. In particular, we show that establishing robust equilibria in this setting reduces to showing that a particular subset of the Border polytope is non-empty. We establish this via a novel joint Schur-convexity argument. This strengthening of Border's criterion for obtaining a stronger conclusion is of independent technical interest, as it may prove useful in other settings.

\end{abstract}

\section{Introduction} 
\label{sec:intro}

Consider a single indivisible public resource being repeatedly allocated between multiple agents -- for example, a scientific instrument shared by multiple university labs.
Since the resource is public, its allocation should be determined ideally without using monetary transfers.
Moreover, the principal wants to allocate the resource in a way that is both efficient (so labs get the instrument only when their need is great) and fair (so that the amount of time each lab gets to is roughly proportional to some pre-determined share). Each individual agent is of course self-serving, and so some mechanism is required to encourage agents to request for the resource only when they need it the most. And in the absence of money, the principal may not be able to verify agents' reports, or have any strong beliefs about their valuations and actions. 

The above problem was first studied under a simple model by~\cite{guo2009competitive}, with a single indivisible item per round, and agents with random valuations across rounds. Early work on this setting~\cite{guo2009competitive,cavallo2014incentive,balseiro2019multiagent,gorokh2021monetary} focused on the question of how efficient a non-monetary mechanism could be in such a setting; as we discuss in~\cref{ssec:related}, this led to several mechanisms with near-efficient Bayes-Nash equilibria. All these mechanisms, however, critically depend on knowing the exact value distributions.

More recently,~\citet{gorokh2019remarkable} pointed out that in non-monetary settings, the inability to make interpersonal comparisons makes the knowledge of exact value distributions unlikely. This led to a line of work~\cite{gorokh2019remarkable, banerjee2023robust, fikioris2023online} focusing on mechanisms that, under minimal assumptions, have \emph{robust} individual-level guarantees, which hold irrespective of how other agents behave. On the positive side, such guarantees adhere closer to Wilson's doctrine that mechanisms should be as `detail-free' (i.e., distribution agnostic) as possible~\cite{wilson1985game}. On the other hand, it is unclear how predictive such results are of agent behavior; these works do not show if their mechanisms admit any simple equilibria, and in general, this is a difficult task in any repeated mechanism.  
More surprisingly, it was recently reported~\cite{chido} that in one of the simplest such mechanisms called Dynamic Max-Min Fair sharing (DMMF), there is in fact \emph{no equilibrium} under a natural class of strategies (roughly speaking, where each agent's actions only depend on their value in each round). 
While this result involves a technical characterization of a certain high-dimensional Markov chain induced under DMMF, it appears to support a natural critique of robustness results: that they involve strategies that are too pessimistic, and hence not supported in an equilibrium.

So is it possible to design mechanisms that unite the above two streams for the setting of~\cite{guo2009competitive}? We define this formally in \cref{ssec:ideal_benchmarks}, but at a high level, we seek mechanisms that support a \emph{good robust equilibrium}: a strategy that simultaneously satisfies the following: 
\begin{tcolorbox}[standard jigsaw,opacityback=0]
    \begin{enumerate}[leftmargin=*]
        \item \textbf{Equilibrium Performance}: Assuming all agents follow the strategy, no one has an incentive to deviate. Moreover, under this equilibrium, every agent enjoys high utility.
        
        \item \textbf{Robust Performance}: If an agent follows the strategy, but others act arbitrarily (maybe even collusively), then the agent still enjoys some (relatively high) utility guarantee.
    \end{enumerate}
\end{tcolorbox}
This definition is of course under-specified in terms of what we mean by `good': note that never allocating, or allocating uniformly at random, supports any strategy both as an equilibrium, and also gives every agent a (weak) robust guarantee. Our main result however is a \emph{new mechanism for repeated non-monetary allocation} which admits a simple strategy that is a Bayes-Nash equilibrium in the infinite-horizon limit (and close to Bayes-Nash equilibrium in the finite case) with high equilibrium and robust performance. Moreover, we achieve this via a surprising strengthening of Border's theorem, which may prove useful in other settings.


\subsection{Overview of our Mechanism and Main Result}

We consider the setting of~\citet{guo2009competitive}, with a horizon of $T$ rounds, a single indivisible item to allocate per round, and $n$ agents with random valuations. Agent $i$'s values $V_i[t]$ are i.i.d. across rounds and independent across agents.
Building on the ideas of~\cite{gorokh2019remarkable}, we eschew efficiency to focus on \emph{share-based} guarantees: each agent $i$ has a pre-determined fair share $\alpha_i$ (with $\sum_i\alpha_i=1$), which then allows us to define a per-user utility benchmark, rather than compare utilities across users. 
With an $\alpha_i$ share, agent $i$'s best hope is to get her favorite $\alpha_iT$ rounds (i.e., rounds $t\in[T]$ where $V_i[t]$ is in the top $\alpha_i$ quantile of her value distribution). Formally (\cref{def:alpha_ideal_utility}), an agent's \emph{ideal utility} $v_i^\star$ with fair share $\alpha_i$ is the maximum utility she can get if awarded the resource with probability at most $\alpha_i$. 


Our aim is to develop a mechanism that admits a \emph{robust equilibrium}: a joint profile of simple strategies for the agents $(\pi_1,\pi_2,\ldots,\pi_n)$ that is simultaneously a Bayes-Nash equilibrium where every player receives per-round utility at least $\lnash v^{\star}_i$ 
(which we refer to as a $\lnash$-good Bayes-Nash equilibrium) 
and at the same time the same strategy ${\pi}_i$ guarantees that agent $i$ receives per-round utility at least $\lrob v^{\star}_i$, irrespective of how other agents act, including possibly collusive actions (which we call $\lrob$-robust). 
We want strategies to be simple so that they are more meaningful in practice. 
Note also that since we benchmark against the ideal utility $v_i^{\star}$ instead of the first-best outcome, we may not be able to get $\lnash=1$.
For example, in the case where all agents have equal shares and i.i.d values $V_i[t]\sim \textrm{Bernoulli}(1/n)$, the best possible is $\lnash \leq 1- (1-1/n)^n \approx 1 - 1/e$ \cite{banerjee2023robust}.

Within this backdrop, we propose a new mechanism we call \textbf{Budgeted Robust Border (\BRB)} (see~\cref{alg:mechanism}). At a high level, the mechanism has two simple components:
\begin{enumerate}[nosep]
    \item \emph{Budget-regulated bidding}: Each agent has a budget of $B_i\approx \alpha_i T + o(T)$ tokens, which regulates the number of times she can bid for the item over $T$ rounds.
    \item \emph{Probabilistic Allocation}: At time $t$, given the set of bidders $S[t]=S\subseteq[n]$, the principal allocates the item to some agent $i\in S$ according to some pre-decided probability $p_i^S$.
\end{enumerate}
For the exact expression for the budget $B_i$ see~\cref{prop:approximate_nash}; the additional $o(T)$ budget is needed to get high probability guarantees. The allocation probabilities $\{p_i^S\}$ are more complex (and indeed, that is where the main novelty of our work lies, as we discuss next). 
However, at this point, we can already summarize our main result, as follows (see~\cref{thm:nash}):
\vspace{-0.1cm}
\begin{tcolorbox}[standard jigsaw,opacityback=0]
Under the \BRB Mechanism, the policy where player $i$ bids in round $t$ whenever her value $V_i[t]$ is in the top $\alpha_i$-quantile (and subject to her budget), is simultaneously:
\begin{itemize}[nosep]
    \item Robust with factor $\lrob \geq 1/2$
    \item A Bayes-Nash equilibrium with $\lnash=1-\prod_{j=1}^n(1-\alpha_j)\geq 1-1/e\approx 0.63$
\end{itemize}
\end{tcolorbox}
\noindent In~\cref{lem:centralallocationworstcase}, we prove the above $\lnash$ is minimax optimal for any vector of fair shares $\{\alpha_i\}_{[n]}$. $\lrob=1/2$ matches the one in~\cite{gorokh2019remarkable,banerjee2023robust,fikioris2023online}, which was the best-known until quite recently, when \citet{lin2025online} provided a more complicated strategy in the mechanism of \cite{gorokh2019remarkable} that obtains $\lrob=2-\sqrt2\approx 0.59$. However, everyone playing their robust strategy is not an equilibrium. Moreover, their mechanism is more complicated, so equilibrium behavior is hard to analyze, and only robustness guarantees are known.

\subsection{Overview of our Techniques}

Our main conceptual idea is a connection between repeated non-monetary allocation and implementation theory (particularly, Border's theorem). 
Our main technical novelty is modifying the Border flow network to select $1/2$-robust equilibria.
This strengthening of Border's theorem to show that this restricted flow polytope is also guaranteed to be non-empty can be of independent interest.
Our mechanism uses three main ideas: $(i)$ a repeated all-pay mechanism to control bidding rates, $(ii)$ guaranteeing good interim allocations via the regular Border condition, and $(iii)$ strengthening the Border condition to guarantee robustness.
We now briefly describe these.

We start by first trying to get good equilibria. In the infinite-horizon setting, our idea is as follows: We restrict agents to only bid or not in each round, and further restrict them to bid at most an $\alpha_i$ fraction of the time. 
Our mechanism aims to guarantee agent $i$ a good \emph{interim allocation} conditioned on them bidding (the probability that they are allocated), assuming every other agent $j$ bids independently with probability at most $\alpha_j$.
Finally, to pass to the finite horizon, we use an idea from~\cite{gorokh2021monetary} and employ an all-pay mechanism with budgets $\{\alpha_iT+o(T)\}$, to ensure that agents are not incentivized to bid at a rate higher than $\alpha_i$, while also not running out of budget with high probability.

To ensure good interim allocations, we use probabilistic allocation with preset probabilities $p_i^S$ for assigning the item to agent $i$ when the set $S$ of agents bid for each subset $S\subseteq[n]$, and agents $i\in S$. 
These are chosen to ensure each agent's bids are accepted with probability at least $\lnash$, assuming each agent $i$ bids independently with probability~$\alpha_i$. 
The bid budgets and probabilistic allocation serve to somewhat insulate each agent from the actions of others.
Consequently, in equilibrium, agent $i$'s best response is to bid when her value is in the highest $\alpha_i$ quantile, and which makes bids independent across agents (since valuations are independent). Border's theorem (see~\cref{thm:alpha_border}) provides necessary and sufficient conditions for which interim allocation probabilities are feasible;
in particular, we use this to show we can always guarantee $\lnash\geq 1-\prod_j(1-\alpha_j) \ge (1-1/e)\approx 0.63.$ (see \cref{thm:nash,thm:worst_case_interim_allocation_probabilities}). 
Our guarantee is in fact minimax optimal for any vector of fair shares $\{\alpha_i\}_{[n]}$: in \cref{lem:centralallocationworstcase} we show a valuation profile that gives a matching upper bound even if the principal knows agents' realized values.

The challenge however is that many interim allocation probabilities admitted by Border's theorem are not robust: if all other agents collude against agent $i$, they can severely limit agent $i$'s allocation. In fact, different allocation rules inducing the same interim allocation can have vastly different robustness guarantees (see~\cref{ssec:interim_allocation_probabilities_do_not_guarantee_robustness}).
Even computing $\lrob$ for a given allocation rule appears difficult, as we need to search over all potentially correlated bidding schemes by the other agents.

Our main technical novelty is in identifying the subset of interim allocations allowed by Border's theorem which are $1/2$-robust.
To do this, we consider any chosen agent $i$, and characterize the \emph{bang-per-buck} (in terms of blocking agent $i$ per bid token spent) of each possible set $S\subseteq [n]\setminus\{i\}$ of colluding agents bidding in a round.
Using this, we show that the critical case is allocating when only 2 agents bid (colluding agents can always make this the dominant case). We show that it is possible to make the mechanism $1/2$-robust (matching \cite{gorokh2019remarkable,banerjee2023robust,fikioris2023online}), by modifying the Border flow network by adding appropriate bounds to edges going out of each \emph{doubleton} subset $\{i,j\}\subset [n]$ (\cref{fig:modifiedborderscriterionflownetwork}). This leads to a modification of Border's criterion for ensuring the resultant network still supports the same maximum amount of flow (\cref{lem:specific_border_modification}). 
Finally, in~\cref{lem:key_lemma}, we demonstrate the resultant polytope is non-empty, via establishing the Schur-convexity/concavity of a certain bivariate function that comes out of our new criterion.

Finally, in \cref{sec:computation}, we demonstrate how to efficiently sample from a probabilistic allocation rule in the above polytope using the techniques of \cite{bhalgat2013optimal}.
Specifically, we design a linear program whose feasible region corresponds to the above polytope.
However, a solution of this LP is of exponential size (one probabilistic allocation rule for every $S \subseteq [n]$), which means that solving it is hopeless.
Instead, we calculate a dual solution of polynomial size, which we can use to calculate the desired allocation probabilities for a given set $S$.
Specifically, we calculate dual solutions by running an instance of the Hedge algorithm. These dual solutions can be used to sample from allocation probabilities $p_i^S$ that result in the $1/2$-robustness and equilibrium utility guarantees that are only $O(\sqrt{\log n/K)}$ away from the optimal, where $K$ is the number of steps of the Hedge algorithm.

Taken together, our arguments demonstrate allocation probabilities that simultaneously guarantee for all agents at least $(1-1/e)$ fraction of their ideal utility at Nash equilibrium and, at the same time, a $1/2$ fraction of their ideal utility even if other agents collude. While the allocation rules in the general case are somewhat involved, we illustrate them explicitly in two simple corner cases -- $2$ agents with arbitrary shares in \cref{sec:twoagents}, and $n$ agents with equal shares in \cref{sec:robustness}.

\subsection{Related Work}
\label{ssec:related}

Allocating shared resources without money is a foundational problem in economics. The intersection with worst-case analysis and approximation was first explored by \citet{procaccia2009approximate}, sparking renewed interest in this area. Repeated allocation of an indivisible item with random valuations was introduced by \citet{guo2009competitive}, and has since been successful in applications like course allocation \cite{budish2017course}, food banks \cite{prendergast2022allocation}, and cloud computing \cite{vasudevan2016customizable}.

Prior work in this setting falls into two main streams. The first focuses on emulating monetary mechanisms using non-monetary tools \cite{guo2010,cavallo2014incentive,gorokh2021monetary,balseiro2019multiagent}. This culminates in the black-box reduction of \citet{gorokh2021monetary}, which uses a repeated all-pay auction to emulate any monetary mechanism with vanishing efficiency loss. Their approach builds on the idea of “linking decisions” from \citet{jackson}, where simultaneous execution of multiple mechanisms induces equilibrium behavior. However, these works all require full knowledge of value distributions and offer no guarantees under off-equilibrium behavior.

The second stream, initiated by \citet{gorokh2019remarkable}, shifts focus to robust, individual-level guarantees without assuming distributional knowledge. They introduce the notion of an agent’s \emph{ideal utility} and show that a simple threshold strategy in a repeated first-price auction with artificial credits guarantees each agent at least half of their ideal utility. This $1/2$-robustness guarantee has since been matched and extended: \citet{banerjee2023robust} provide a simpler reserve-price mechanism, and \citet{fikioris2023online} match the guarantee via Dynamic Max-Min Fairness (DMMF), a non-market mechanism that allocates to the agent with the least normalized wins. DMMF even achieves stronger guarantees under mild distributional assumptions.
However, recent work by~\citet{chido} shows that DMMF lacks equilibrium support under natural threshold strategies, even with just two agents, thereby raising questions about its practical relevance. More recently, \cite{lin2025online} improved the $1/2$ factor to $2-\sqrt2\approx0.59$ by randomizing the amount of artificial credits bid when above the threshold (but again, with no equilibrium guarantees).

Our approach draws on two key tools: repeated all-pay auctions \cite{gorokh2021monetary} and the use of Border’s theorem \cite{border1991implementation,border2007reduced} to implement interim allocations. While Border’s theorem has been instrumental in algorithmic mechanism design \cite{cai2012optimal,cai2012algorithmic,alaei2012bayesian,bhalgat2013optimal}, our use is existential rather than algorithmic—we strengthen the theorem to establish the existence of robust equilibria. This is closest in spirit to \citet{che2013generalized}, who modify allocation mechanisms to enforce lower and upper bounds on allocation probabilities. However, our setting is more complex: we must modify the Border polytope to penalize arbitrary agent behavior while preserving feasibility, a challenge not present in their capacity-based formulation.


\section{Preliminaries}

We consider the following simple setting of repeatedly allocating a single, indivisible item among $n$ agents; this was first introduced in the work of~\cite{guo2009competitive}, and has since been widely studied, as we discuss in~\cref{ssec:related}.
At each discrete time-step $t=1,2,\dots,T$, a principal receives a new item and must select an agent $i\in[n]$ to allocate the item to (or not allocate the item).

We assume that each agent $i$ has a private value $V_i[t]$ for the item at time $t$, where $V_i[t]\sim\mathcal F_i$ for some value distribution $\mathcal F_i$ not depending on $t$.
We assume that the $V_i[t]$ are nonnegative, bounded, and independent across both agents and time.
Let $X_i[t]$ be the indicator that agent $i$ was allocated the item at time $t$.
At time $t$, the agent gets utility $U_i[t] = V_i[t]X_i[t]$.
We assume that utilities are additive over time so that an agent's total utility after $T$ time periods is $\sum_{t=1}^T U_i[t]$.

\subsection{Ideal Utility and Benchmarks} \label{ssec:ideal_benchmarks}

As in previous work in this setting \cite{gorokh2019remarkable,banerjee2023robust,fikioris2023online}, we assume each agent $i$ has an exogenously defined fair share $\alpha_i$, where $\alpha_i \ge 0$ and $\sum_{i=1}^n \alpha_i=1$.
The fair share $\alpha_i$ of agent $i$ represents the fraction of rounds we want agent $i$ to win the item under ideal circumstances.
As done by previous work, we use the benchmark of ideal utility. The ideal utility of an agent $i$ is the maximum expected utility they can obtain from a single round if they can obtain the item simply by requesting it, but they are only allowed to request it with probability at most their fair share $\alpha_i$.
Formally, the (per-round) ideal utility is the following.
\begin{definition}[Ideal Utility]
    \label{def:alpha_ideal_utility}
    Agent $i$'s ideal utility is the value of the following maximization problem over measurable $\rho_i:[0,\infty)\to[0,1]$:
    \begin{equation*}
        \begin{split}
            v_i^\star = \max\,\E_{V_i\sim\mathcal F_i}[V_i\rho_i(V_i)] \quad\text{subject to}\quad \E_{V_i\sim\mathcal F_i}[\rho_i(V_i)]\leq \alpha_i.
        \end{split}
    \end{equation*}
\end{definition}
We seek to define a mechanism $M$ where agent $i$ can achieve high average expected utility $\frac1T\sum_{t=1}^T \mathbb E[U_i[t]]$. Specifically, we want that agent $i$ to achieve a high fraction of her ideal utility, i.e., $\frac1T\sum_{t=1}^T \mathbb E[U_i[t]] \geq \lambda_i v_i^\star$ for some $\lambda_i$ as large as possible.
As discussed in~\cite{gorokh2019remarkable}, with additional knowledge of valuation distributions, such guarantees naturally translate into efficiency guarantees (by setting fair shares to maximize the ex-ante welfare relaxation). However these guarantees are equally compelling as individual-level guarantees in non-monetary settings, characterizing what fraction of her top rounds each agent $i$ can realize while leaving the resource on at least $(1-\alpha_i)T$ days for the others to use.

We study what fraction of her ideal utility an agent $i$ can guarantee in two settings:
First, we examine the performance of Bayes-Nash equilibria, where each agent is trying to maximize their own utility.
Second, we examine what fraction of her ideal utility agent $i$ can guarantee robustly, i.e., even if the other agents $j\neq i$ are playing adversarially and collude to harm agent $i$ without regard for their own utilities.
Critically, we want an agent to achieve both the above with the same (ideally simple) strategy.
As in~\cref{sec:intro}, we want a mechanism that achieves the following.
\begin{tcolorbox}[label=tcbox:goal,standard jigsaw,opacityback=0]
    \textbf{$\lnash$-Nash $\lrob$-robust equilibria}: A mechanism with policy profile $(\pi_1, \pi_2, \ldots, \pi_n)$ such that
    \begin{itemize}[leftmargin=*]
        \item $(\pi_1, \pi_2, \ldots, \pi_n)$ forms a Nash equilibrium as $T \to \infty$ in which each agent $i$ gets a $\lnash$ fraction of their ideal utility.
        \item Policy $\pi_i$ is \textit{$\lrob$-robust} with $\lrob$ not much smaller than $\lnash$: it guarantees agent $i$ a $\lrob$ fraction of her ideal utility even if agents $j \ne i$ collude and act adversarially.
    \end{itemize}
\end{tcolorbox}


To understand how the ideal utility benchmark behaves under Bayes-Nash equilibria, we first present a bound on what fraction $\lnash$ of ideal utility agents can get under ideal conditions. This will help later certify the minimax optimality of our mechanism in the equilibrium setting. Note that this bound is mechanism-agnostic, as in our example below, the mechanism has access to the agents' realized values.

\begin{lemma}
    \label{lem:centralallocationworstcase}
    If each agent $i$ has value distribution $\mathcal F_i = \mathrm{Bernoulli}(\alpha_i)$, it is impossible to guarantee every agent $i$ a $\lambda$ fraction of their ideal utility in expectation for $\lambda > 1 - \prod_{j\in[n]}(1-\alpha_j)$, even if the mechanism knows $V_i[t]$ for all $i,t$ before round $1$. Note that $\inf_{n, \alpha_1, \ldots, \alpha_n}1 - \prod_{j\in[n]}(1-\alpha_j) = 1 - 1/e$.
\end{lemma}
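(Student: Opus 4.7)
The plan is to prove the bound via a simple averaging argument across agents, exploiting the structure of Bernoulli valuations.

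First I would compute the ideal utility in this setting. Since $V_i \in \{0,1\}$, the maximization in \cref{def:alpha_ideal_utility} is maximized by setting $\rho_i(1) = 1$ and $\rho_i(0) = 0$, which uses exactly the budget $\alpha_i$ and yields $v_i^\star = \alpha_i$.

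Next, I would observe that for any (possibly prophetic) allocation rule and any round $t$, since at most one agent is allocated the item, the total utility awarded in round $t$ satisfies
\[
\sum_{i=1}^n X_i[t] V_i[t] \;\le\; \mathbf{1}\!\left[\max_{i\in[n]} V_i[t] = 1\right].
\]
Taking expectations over the independent Bernoulli valuations,
\[
\sum_{i=1}^n \E[X_i[t] V_i[t]] \;\le\; \Pr\!\left[\max_{i\in[n]} V_i[t] = 1\right] \;=\; 1 - \prod_{j\in[n]}(1-\alpha_j).
\]
Now suppose, for contradiction, that the mechanism guarantees each agent $i$ at least a $\lambda$ fraction of their ideal utility $\alpha_i$ in expectation per round. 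Averaging the above inequality over $t \in [T]$ (or applying it to any single round), and using $\sum_i \alpha_i = 1$, I get
\[
\lambda \;=\; \lambda\sum_{i=1}^n \alpha_i \;\le\; \sum_{i=1}^n \tfrac{1}{T}\sum_{t=1}^T \E[X_i[t]V_i[t]] \;\le\; 1 - \prod_{j\in[n]}(1-\alpha_j),
\]
which proves the desired upper bound on $\lambda$.

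For the infimum claim, I would use AM–GM: subject to $\sum_j \alpha_j = 1$, the product $\prod_j(1-\alpha_j)$ is maximized at $\alpha_j = 1/n$ for all $j$, giving value $(1-1/n)^n$. Since $(1-1/n)^n$ is increasing in $n$ and converges to $1/e$, we obtain $\sup_{n,\alpha}\prod_j(1-\alpha_j) = 1/e$, so $\inf_{n,\alpha}\bigl(1-\prod_j(1-\alpha_j)\bigr) = 1 - 1/e$. There is no real obstacle here; the only subtlety is ensuring that the bound holds even for mechanisms with full foreknowledge of the value realizations, which is precisely captured by the pointwise inequality $\sum_i X_i[t]V_i[t] \le \mathbf{1}[\max_i V_i[t] = 1]$ applied sample-path-wise before taking expectations.
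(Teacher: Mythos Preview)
Your proof is correct and follows essentially the same averaging argument as the paper: bound the total expected per-round utility by the probability that some agent has value $1$, compute $v_i^\star=\alpha_i$, and combine with $\sum_i\alpha_i=1$. You additionally spell out the AM--GM justification for the $1-1/e$ infimum, which the paper merely asserts.
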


We prove the lemma in \cref{ssec:appendix_central_allocation_worst_case_proof}.

\section{Mechanism and Proposed Strategy}
\label{sec:mech_and_strategy}

We now give our mechanism.
Every agent $i$ bids for the item or not in every round and can make (approximately) at most $\alpha_i T$ bids in total.
In every round where agents bid for the resource, our mechanism randomly allocates the resources to one of those agents.
For the simpler randomized allocation rule in the case where there are only $2$ agents, see \cref{sec:twoagents}\footnote{When $n=2$, in \cref{sec:twoagents}, we manage to get optimal performance $\lnash=\lrob=1 - (1-\alpha_1)(1-\alpha_2)$, as in \cref{lem:centralallocationworstcase}.}.
In general, if agents $S \subseteq [n]$ bid for the resource, we allocate the resource to agent $i \in S$ with probability $p_i^S$.
We call the $p_i^S$'s \textit{allocation probabilities} and describe their values later (\cref{thm:worst_case_interim_allocation_probabilities}).
Our formal mechanism can be found in \cref{alg:mechanism}, where we also provide a little extra budget for each agent, to ensure they do not run out with high probability when following our proposed equilibrium strategy, which we introduce next.

\floatname{algorithm}{Mechanism}
\begin{algorithm}
    \caption{Budgeted Robust Border (BRB) Mechanism}
    \label{alg:mechanism}
    \begin{algorithmic}
        \REQUIRE Fair shares $\alpha_i$, time horizon $T$, allocation probabilities $(p_i^S)_{i \in S}\; \forall S \subseteq [n]$, budget slack $\delta_i^T$.
        \STATE Endow each agent $i$ with a budget $B_i[1] = \alpha_i(1+\delta_i^T)T$ of bid tokens.
        \FOR{$t=1,2,\dots,T$}
            \STATE Agents submit bids $b_i^t\in\{0,1\}$ and budgets are enforced: $b_i^t\gets 0$ for each $i$ such that $B_i[t]\leq 0$.
            \STATE Let $S[t] = \{i:b_i^t=1\}$ be the set of \textit{bidding agents}.
            \STATE A winner $i^t$ is randomly selected from $S[t]$ according to the probability distribution $(p_i^{S[t]})_{i\in S[t]}$.
            \STATE Budgets get updated: $B_i[t+1] = B_i[t] - b_i^t$ for every agent $i$.
        \ENDFOR
    \end{algorithmic}
\end{algorithm}

The strategy class we propose is a simple and natural strategy wherein each agent bids whenever their value is above a certain threshold. As in~\cite{fikioris2023online}, we refer to this as a $\beta$-aggressive strategy. \begin{definition}[$\beta$-aggressive strategy]
\label{def:beta_aggressive}
    Agent $i$ follows a \textit{$\beta$-aggressive strategy} if she bids at time $t$ if and only if her value $V_i[t]$ is the top $\beta$-quantile of her value distribution\footnote{If the top $\beta$-quantile is not well-defined, as when the value distribution has atoms, the agent can bid with probability $\rho(V_i[t])$ where $\rho:[0,\infty)\to[0,1]$ maximizes $\E_{V_i\sim\mathcal Fi}[V_i\rho(V_i)]$ subject to $\E_{V_i\sim\mathcal F_i}[\rho(V_i)]\leq \beta$.}.
\end{definition}
Intuitively, for an agent with fair share $\alpha$, following a $\alpha$-aggressive strategy is akin to truthfully reporting her $\alpha T$ favorite rounds. Building on this, we propose that each agent $i$ follows an $\alpha_i$-aggressive strategy. Note that with no competition, this would realize the agent's ideal utility. Our first result is that everyone playing an $\alpha_i$-aggressive strategy is an approximate Nash equilibrium for any allocation probabilities.
\begin{definition}[$\epsilon$-approximate Nash equilibrium]
    For a profile of strategies $\pi = (\pi_1, \dots, \pi_n)$, let $\mathcal U_i(\pi)$ denote the per-rounded expected utility $\frac1T\sum_{t=1}^T \E[U_i[t]]$ of agent $i$ when players play according to $\pi$. We say that a profile of strategies $\pi = (\pi_1, \dots, \pi_n)$ forms an \textit{$\epsilon$-approximate Nash equilibrium} if for each agent $i$, for every profile of strategies $\pi' = (\pi_1', \dots, \pi_n')$ where $\pi_j = \pi_j'$ for each $j\neq i$, we have $\mathcal U_i(\pi') \leq \mathcal U_i(\pi) + \epsilon$.
\end{definition}
\begin{proposition}
\label{prop:approximate_nash}
    By setting $\delta_i^T = \sqrt{\nicefrac{6\ln T}{\alpha_i T}}$, for any fixed allocation probabilities $p_i^S$, each agent playing an $\alpha_i$-aggressive strategy is an $O(\sqrt{\nicefrac{\log T}{T}})$-approximate Nash equilibrium.
\end{proposition}
We defer the proof of this to~\cref{sec:nash}.
Note that the utility guarantees at this equilibrium depend on the choice of allocation probabilities $p_i^S$.
Since the agents' bids are i.i.d. in every round, we can calculate the fixed probability $p_i$ that agent $i$ will win the item conditioned on bidding (assuming everyone has budget remaining).
This $p_i$ is called agent $i$'s \textit{interim allocation probability}.

\begin{definition}[Interim allocation probability]
\label{def:alpha_interim_allocation_probability}
Allocation probabilities $p_i^S$ induce \textit{interim allocation probabilities} $p_i$ if $p_i$ is the probability that agent $i$ wins the item in a given round conditioned on agent $i$ bidding, and agents $j\neq i$ bidding independently with probability $\alpha_j$ each. Formally:
\begin{equation*}
		p_i = \sum_{S\subseteq[n]:i\in S}p_i^S\left(\prod_{j\in S\setminus\{i\}}\alpha_j\right)\left(\prod_{j\in[n]\setminus S}(1-\alpha_j)\right).
	\end{equation*}
\end{definition}

Our goal is to maximize the fraction of ideal utility every agent is guaranteed in the above equilibrium.
By definition of $p_i$, every agent in the equilibrium is guaranteed a $p_i$ fraction of their ideal utility.
The following theorem shows that we can set $p_i^S$ to achieve
the upper bound of \cref{lem:centralallocationworstcase}; we explain the theorem in more detail in terms of Border's theorem in \cref{sec:border}.

\begin{theorem}
    \label{thm:worst_case_interim_allocation_probabilities}
    There exist allocation probabilities $p_i^S$ such that the induced interim allocation probabilities $p_i$ are the same for all agents $i$. Specifically, we can make
    $ 
        p_i = 1 - \prod_{j=1}^n (1-\alpha_j)
    $ 
    for all $i$, matching \cref{lem:centralallocationworstcase}.
\end{theorem}

Having achieved optimal equilibrium performance, we turn our attention to robustness.
Unlike the equilibrium analysis, where we can use any allocation probabilities $p_i^S$ that induce the desired interim allocation probabilities $p_i$, under arbitrary competition, we must be more careful with our selection of $p_i^S$.
In \cref{ssec:interim_allocation_probabilities_do_not_guarantee_robustness}, we show that not carefully picking the $p_i^S$'s can lead to very poor robustness.
Our main technical contribution is in \cref{ssec:half_robustness}, where we show that under careful upper bounds on the $p_i^S$'s we can guarantee strong robust performance without changing the $p_i$'s and thus the equilibrium performance.
The following theorem summarizes our overall guarantees, subsuming \cref{thm:worst_case_interim_allocation_probabilities}.

\begin{theorem}
\label{thm:nash}
\label{thm:robustness_guarantee}
Consider the \BRB Mechanism (\cref{alg:mechanism}) with $\delta_i^T = \sqrt{\nicefrac{6\ln T}{\alpha_i T}}$. With a careful choice of allocation probabilities $p_i^S$ (that satisfy~\cref{thm:worst_case_interim_allocation_probabilities}), we have the following:
\begin{enumerate}
\item Each agent $i$ playing an $\alpha_i$-aggressive strategy is an $O(\sqrt{\nicefrac{\log T}{T}})$-approximate Bayes-Nash equilibrium where, with probability $1-O(1/T^2)$, agent $i$ realizes utility
\begin{equation*}
 \frac1T\sum_{t=1}^T U_i[t] \geq \left(1-\prod_{j=1}^n (1-\alpha_j)\right)v_i^\star - O\left(\sqrt{\frac{\log T}{T}}\right).
\end{equation*}

\item Regardless of behavior of others, playing an $\alpha_i$-aggressive strategy, with probability $1-O(1/T^2)$, gives agent $i$ utility
    \begin{equation*}
        \frac1T\sum_{t=1}^T U_i[t] \geq \left(\frac12 + \frac12\alpha_i^2\right)v_i^\star - O\left(\sqrt{\frac{\log T}{T}}\right).
    \end{equation*}
\end{enumerate}
\end{theorem}

Note that this implies that with high probability, each agent gets a $\lnash =  1-\prod_{j=1}^n (1-\alpha_j) - O(\sqrt{\nicefrac{\log T}{T}})$ fraction of their ideal utility at the approximate equilibrium, and a $\lrob = (\frac12 - O(\sqrt{\nicefrac{\log T}{T}}))$ fraction robustly.
We present in more detail the equilibrium claim in \cref{sec:nash} and the robustness claim in \cref{sec:robustness}.

In \cref{sec:nash,sec:robustness} we only show the existence of allocation probabilities $p_i^S$ and not how to efficiently compute them.
However, using techniques from \cite{bhalgat2013optimal}, we give an algorithm that, given a bidding set $S$, can efficiently sample from a distribution $(p_i^S)_{i\in S}$, where $((p_i^S)_{i\in S})_{S\subseteq[n]}$ are allocation probabilities that will give us the equilibrium and robustness guarantees of \cref{thm:nash}. We detail this algorithm and how it can be used with \BRB in \cref{sec:computation}.

By our choice of $\delta_i^T$ in \cref{thm:robustness_guarantee}, the probability of any agent running out of budget is $O(1/T^2)$ (similar to the probability of our bound not holding).
Due to this, we could modify \BRB's budget constraint to allow agent $i$ to bid at most $\alpha_i (1 + \delta_i^t) t$ times by round $t \in [T]$ to get any time guarantees.
Specifically, our utility bounds would hold for every round $t \in [T]$ with probability $1 - O(1/\sqrt t)$.
Enforcing the budget constraint at each time also allows us to obtain an exact Nash equilibrium in the infinite time horizon.
See \cref{sec:anytime} for details.

\section{Equilibrium and Good Allocation Probabilities}
\label{sec:nash}
\label{sec:border}

In this section, we prove our claim of \cref{sec:mech_and_strategy} that each agent $i$ following the $\alpha_i$-aggressive strategy in the \BRB Mechanism forms an approximate Nash Equilibrium.
For simplicity, we consider a simpler game where budgets are enforced in expectation.
While bounds in expectation are not enforceable, we focus on this case for simplicity of presentation, and only require that $\sum_{t=1}^T \mathbb E[b_i^t] \leq \alpha_i T$.
Roughly speaking, we prove that in this game, the best response of agent $i$ when agents $j \ne i$ are following an $\alpha_j$-aggressive strategy is to follow an $\alpha_i$-aggressive strategy also.
This proves that these strategies form an equilibrium.

\begin{lemma}
    \label{lem:knapsackproblemovertime}
    Fix an agent $i$ and assume all other agents $j\neq i$ are bidding i.i.d. $\mathrm{Bernoulli}(\alpha_j)$.
    Suppose agent $i$ is trying to maximize her expected utility subject to the constraint that she does not spend more than $\alpha_iT$ tokens in expectation; that is, she is choosing $(b_i^t)_t$ to solve
    \begin{equation*}
        \max\,\frac1T\sum_{t=1}^T\mathbb E[U_i[t]] \quad\text{subject to}\quad \sum_{t=1}^T \mathbb E[b_i^t]\leq \alpha_iT
    \end{equation*}
    Her optimal strategy is to choose $(b_i^t)_t$ that corresponds to bidding whenever her value is in the top $\alpha_i$-quantile of her value distribution, which yields agent $i$ expected utility
    $
        \frac1T\sum_{t=1}^T \mathbb E[U_i[t]] = p_i v_i^\star.
    $
\end{lemma}

To see why $\alpha_i$-aggressiveness is the best response, note that in each round, the probability that agent $i$ wins conditioned on bidding is exactly their interim allocation probability $p_i$.
The best solution for the agent is to bid in the $\alpha_iT$ rounds where the agent has the highest value.
Therefore, agent $i$ should follow an $\alpha_i$-aggressive strategy, bidding whenever her value is in the top $\alpha_i$-quantile of her value distribution. 
The full proof is in \cref{ssec:appendix_equilibrium_proofs}, where we also translate the result from this simplified game with expected budget constraints to the actual game where budgets are strictly enforced (but larger).

\subsection{Inducing Optimal Interim Allocation Probabilities}
\label{ssec:border}

Next we show how to set the allocation probabilities $p_i^S$
to achieve optimal equilibrium performance in our \BRB Mechanism.
Specifically, in \cref{thm:worst_case_interim_allocation_probabilities} we claimed that we can set each $p_i^S$ so that all interim allocation probabilities are equal: $p_i = 1 - \prod_j (1 - \alpha_j)$ for all $i$.
This result follows from a special case of Border’s theorem on the feasibility of reduced-form auctions; we present this special case here and defer the full statement of the theorem to \cref{sec:border_details}.

\begin{theorem}
\label{thm:alpha_border}
    Given arbitrary numbers $p_i\in[0,1]$, there exist allocation probabilities $p_i^S$ such that the $p_i$ are interim allocation probabilities induced by the $p_i^S$'s if and only if for every $I\subseteq[n]$,
	\begin{equation}
		\label{eq:introductory_border_new}
		\sum_{i\in I}p_i\alpha_i \leq 1 - \prod_{i\in I}(1-\alpha_i)
	\end{equation}
        with equality for $I = [n]$.
\end{theorem}

For completeness, we include a flow-based proof of Border’s theorem from \citet{che2013generalized} in \cref{sec:border_details}. Our special case (\cref{thm:alpha_border}) reduces the feasibility of interim allocations to a flow problem, illustrated in \cref{fig:introductory_border_flow_network}, where flows represent the probabilities of agent groups bidding and being allocated (see the figure caption for details). A feasible flow of value $\Pr(S' \neq \emptyset) = 1 - \prod_{i \in [n]} (1 - \alpha_i)$ exists if and only if there are allocation probabilities $p_i^S$ that induce the given $p_i$. The necessity of condition~(\ref{eq:introductory_border_new}) follows from interpreting the left-hand side as the total interim allocation to agents in set $I$, and the right-hand side as the probability that some agent in $I$ bids. Sufficiency follows by showing that these inequalities ensure every $(s,t)$ cut with finite capacity has at least the required flow value.


\begin{figure}
\centering
    \resizebox{.55\textwidth}{!}{\tikzset{every picture/.style={line width=0.75pt}} 

\begin{tikzpicture}[x=0.75pt,y=0.75pt,yscale=-1,xscale=1]

\draw   (99,105.5) .. controls (99,97.49) and (105.49,91) .. (113.5,91) .. controls (121.51,91) and (128,97.49) .. (128,105.5) .. controls (128,113.51) and (121.51,120) .. (113.5,120) .. controls (105.49,120) and (99,113.51) .. (99,105.5) -- cycle ;
\draw   (255,23) .. controls (255,14.16) and (262.16,7) .. (271,7) .. controls (279.84,7) and (287,14.16) .. (287,23) .. controls (287,31.84) and (279.84,39) .. (271,39) .. controls (262.16,39) and (255,31.84) .. (255,23) -- cycle ;
\draw   (252,83.5) .. controls (252,73.84) and (259.84,66) .. (269.5,66) .. controls (279.16,66) and (287,73.84) .. (287,83.5) .. controls (287,93.16) and (279.16,101) .. (269.5,101) .. controls (259.84,101) and (252,93.16) .. (252,83.5) -- cycle ;
\draw   (253,158) .. controls (253,148.61) and (260.61,141) .. (270,141) .. controls (279.39,141) and (287,148.61) .. (287,158) .. controls (287,167.39) and (279.39,175) .. (270,175) .. controls (260.61,175) and (253,167.39) .. (253,158) -- cycle ;
\draw   (412,21.5) .. controls (412,13.49) and (418.49,7) .. (426.5,7) .. controls (434.51,7) and (441,13.49) .. (441,21.5) .. controls (441,29.51) and (434.51,36) .. (426.5,36) .. controls (418.49,36) and (412,29.51) .. (412,21.5) -- cycle ;
\draw   (413,77.5) .. controls (413,69.49) and (419.49,63) .. (427.5,63) .. controls (435.51,63) and (442,69.49) .. (442,77.5) .. controls (442,85.51) and (435.51,92) .. (427.5,92) .. controls (419.49,92) and (413,85.51) .. (413,77.5) -- cycle ;
\draw   (412,158.5) .. controls (412,150.49) and (418.49,144) .. (426.5,144) .. controls (434.51,144) and (441,150.49) .. (441,158.5) .. controls (441,166.51) and (434.51,173) .. (426.5,173) .. controls (418.49,173) and (412,166.51) .. (412,158.5) -- cycle ;
\draw   (528,100.5) .. controls (528,92.49) and (534.49,86) .. (542.5,86) .. controls (550.51,86) and (557,92.49) .. (557,100.5) .. controls (557,108.51) and (550.51,115) .. (542.5,115) .. controls (534.49,115) and (528,108.51) .. (528,100.5) -- cycle ;
\draw    (113.5,91) -- (253.2,23.87) ;
\draw [shift={(255,23)}, rotate = 154.33] [color={rgb, 255:red, 0; green, 0; blue, 0 }  ][line width=0.75]    (10.93,-3.29) .. controls (6.95,-1.4) and (3.31,-0.3) .. (0,0) .. controls (3.31,0.3) and (6.95,1.4) .. (10.93,3.29)   ;
\draw    (124,95) -- (250.01,82.69) ;
\draw [shift={(252,82.5)}, rotate = 174.42] [color={rgb, 255:red, 0; green, 0; blue, 0 }  ][line width=0.75]    (10.93,-3.29) .. controls (6.95,-1.4) and (3.31,-0.3) .. (0,0) .. controls (3.31,0.3) and (6.95,1.4) .. (10.93,3.29)   ;
\draw    (125,114.5) -- (251.11,157.36) ;
\draw [shift={(253,158)}, rotate = 198.77] [color={rgb, 255:red, 0; green, 0; blue, 0 }  ][line width=0.75]    (10.93,-3.29) .. controls (6.95,-1.4) and (3.31,-0.3) .. (0,0) .. controls (3.31,0.3) and (6.95,1.4) .. (10.93,3.29)   ;
\draw    (287,157.5) -- (410,158.48) ;
\draw [shift={(412,158.5)}, rotate = 180.46] [color={rgb, 255:red, 0; green, 0; blue, 0 }  ][line width=0.75]    (10.93,-3.29) .. controls (6.95,-1.4) and (3.31,-0.3) .. (0,0) .. controls (3.31,0.3) and (6.95,1.4) .. (10.93,3.29)   ;
\draw    (287,83) -- (411,77.59) ;
\draw [shift={(413,77.5)}, rotate = 177.5] [color={rgb, 255:red, 0; green, 0; blue, 0 }  ][line width=0.75]    (10.93,-3.29) .. controls (6.95,-1.4) and (3.31,-0.3) .. (0,0) .. controls (3.31,0.3) and (6.95,1.4) .. (10.93,3.29)   ;
\draw    (289,23.5) -- (410,21.53) ;
\draw [shift={(412,21.5)}, rotate = 179.07] [color={rgb, 255:red, 0; green, 0; blue, 0 }  ][line width=0.75]    (10.93,-3.29) .. controls (6.95,-1.4) and (3.31,-0.3) .. (0,0) .. controls (3.31,0.3) and (6.95,1.4) .. (10.93,3.29)   ;
\draw    (284,72) -- (411.1,30.62) ;
\draw [shift={(413,30)}, rotate = 161.97] [color={rgb, 255:red, 0; green, 0; blue, 0 }  ][line width=0.75]    (10.93,-3.29) .. controls (6.95,-1.4) and (3.31,-0.3) .. (0,0) .. controls (3.31,0.3) and (6.95,1.4) .. (10.93,3.29)   ;
\draw    (441,21.5) -- (534.36,86.85) ;
\draw [shift={(536,88)}, rotate = 214.99] [color={rgb, 255:red, 0; green, 0; blue, 0 }  ][line width=0.75]    (10.93,-3.29) .. controls (6.95,-1.4) and (3.31,-0.3) .. (0,0) .. controls (3.31,0.3) and (6.95,1.4) .. (10.93,3.29)   ;
\draw    (441,158.5) -- (534.2,113.86) ;
\draw [shift={(536,113)}, rotate = 154.41] [color={rgb, 255:red, 0; green, 0; blue, 0 }  ][line width=0.75]    (10.93,-3.29) .. controls (6.95,-1.4) and (3.31,-0.3) .. (0,0) .. controls (3.31,0.3) and (6.95,1.4) .. (10.93,3.29)   ;
\draw    (442,77.5) -- (526.07,99.98) ;
\draw [shift={(528,100.5)}, rotate = 194.97] [color={rgb, 255:red, 0; green, 0; blue, 0 }  ][line width=0.75]    (10.93,-3.29) .. controls (6.95,-1.4) and (3.31,-0.3) .. (0,0) .. controls (3.31,0.3) and (6.95,1.4) .. (10.93,3.29)   ;

\draw (109,101) node [anchor=north west][inner sep=0.75pt]   [align=left] {$\displaystyle s$};
\draw (259,104) node [anchor=north west][inner sep=0.75pt]   [align=left] {$\displaystyle \vdots $};
\draw (421,102) node [anchor=north west][inner sep=0.75pt]   [align=left] {$\displaystyle \vdots $};
\draw (538,97.4) node [anchor=north west][inner sep=0.75pt]    {$t$};
\draw (334,119.4) node [anchor=north west][inner sep=0.75pt]    {$\cdots $};
\draw (260,18.4) node [anchor=north west][inner sep=0.75pt]    {$u_{\{1\}}$};
\draw (252,75.4) node [anchor=north west][inner sep=0.75pt]    {$u_{\{1,2\}}$};
\draw (257,151.4) node [anchor=north west][inner sep=0.75pt]    {$u_{\{n\}}$};
\draw (420,18.4) node [anchor=north west][inner sep=0.75pt]    {$v_{1}$};
\draw (420,73.4) node [anchor=north west][inner sep=0.75pt]    {$v_{2}$};
\draw (418,153.4) node [anchor=north west][inner sep=0.75pt]    {$v_{n}$};
\draw (134.88,54.56) node [anchor=north west][inner sep=0.75pt]  [rotate=-334.91]  {$\Pr( S'=\{1\})$};
\draw (151.46,73.83) node [anchor=north west][inner sep=0.75pt]  [rotate=-352.96]  {$\Pr( S'=\{1,2\})$};
\draw (150.27,102.45) node [anchor=north west][inner sep=0.75pt]  [rotate=-18.18]  {$\Pr( S'=\{n\})$};
\draw (341,8.4) node [anchor=north west][inner sep=0.75pt]    {$\infty $};
\draw (324.91,44.76) node [anchor=north west][inner sep=0.75pt]  [rotate=-340.02]  {$\infty $};
\draw (336.56,65.92) node [anchor=north west][inner sep=0.75pt]  [rotate=-356.57]  {$\infty $};
\draw (336.25,142.13) node [anchor=north west][inner sep=0.75pt]  [rotate=-1.83]  {$\infty $};
\draw (459.8,9.03) node [anchor=north west][inner sep=0.75pt]  [rotate=-35.31]  {$p_{1}\Pr( 1\in S')$};
\draw (446.93,57.89) node [anchor=north west][inner sep=0.75pt]  [rotate=-17.56]  {$p_{2}\Pr( 2\in S')$};
\draw (437.83,137.84) node [anchor=north west][inner sep=0.75pt]  [rotate=-333.31]  {$p_{n}\Pr( n\in S')$};

\end{tikzpicture}}
	\caption{\small\emph{Flow network that can be used to prove \cref{thm:alpha_border}.
            We let $S'$ be the random set of bidding agents where each agent $i$ lies in $S'$ independently with probability $\alpha_i$.
            Then, there are three kind of edges: edges whose flow corresponds to the probability of observing a specific $S'$ (left), edges whose flow corresponds to how we randomly allocate the item condition on observing a specific $S'$ (middle), and edges whose flow represent the probability that a specific agent gets the item (right).
    		There is a flow of value $\Pr(S'\neq\emptyset)$ if and only if there exist allocation probabilities $p_i^S$ inducing the interim allocation probabilities $p_i$.
            In other words, the flows $p_i^S\Pr(S' = S)$ in the middle transform the probabilities that agents in a certain set $S$ bid to an agent $i\in S$ being allocated.
            We obtain the conditions in \cref{thm:alpha_border} by analyzing every minimum-cut of this network.
	}}
	\label{fig:introductory_border_flow_network}
\end{figure}

In \cref{ssec:half_robustness}, we prove a generalization of \cref{thm:alpha_border} and use it to prove a stronger result than \cref{thm:worst_case_interim_allocation_probabilities}. That proof is quite involved, so for simplicity, we prove \cref{thm:worst_case_interim_allocation_probabilities} directly from \cref{thm:alpha_border} in \cref{ssec:interim_allocation_probability_feasibility_proofs}.

 {}

\section{Robustness}
\label{sec:robustness}
In \cref{sec:nash}, we prove that each player playing an $\alpha_i$-aggressive strategy is an approximate equilibrium. To complete the proof of \cref{thm:robustness_guarantee}, we show how to choose the allocation probabilities $p_i^S$ to guarantee robustness while maintaining the same interim allocations.
To do this, in \cref{ssec:half_robustness}, we strengthen Border's Theorem to also satisfy some additional properties that guarantee $1/2$-robustness.
In \cref{ssec:interim_allocation_probabilities_do_not_guarantee_robustness}, we then show that this strengthening is necessary, as many allocations obtained by the standard Border's Theorem are not robust.
Finally, in \cref{ssec:half_robustness_hardness}, we show our robustness result is tight in that no matter the choice of $(p_i^S)$, in the \BRB mechanism, it is not possible to guarantee each agent a $\lambda$-robust strategy for a constant $\lambda$ greater than $1/2$ not depending on the fair shares $(\alpha_i)$ or the time horizon $T$.

\subsection{Achieving a \texorpdfstring{$1/2$}{1/2} Robustness Factor}
\label{ssec:half_robustness}
    
To prove \cref{thm:robustness_guarantee}, we show how to choose probabilities $p_i^S$ to make the $\alpha_i$-aggressive strategy $(\frac12 + \frac12\alpha_i^2)$-robust while still inducing interim allocation probabilities $p_i = 1 - \prod_{k=1}^n (1-\alpha_k)$ (as in \cref{thm:worst_case_interim_allocation_probabilities}).

We consider (collusive) strategies that agents $j \ne i$ can employ to minimize agent $i$'s utility.
Say that agent $i$ is \textit{blocked} when she bids but does not receive the item.
We will focus on how many bids the other agents have to make each time they block agent $i$.
Let us first consider the case where other agents never bid two at a time.
Conditioned on only agent $j$ bidding, agent $i$ gets blocked with probability $\alpha_i p_j^{\{i, j\}}$, with $\alpha_i$ being the probability that agent $i$ bids and $p_j^{\{i,j\}}$ being the probability that agent $j$ wins when they both bid. We have to ensure that this probability is not too large.
Specifically, assume that for some $\bar p$ it holds $p_j^{\{i, j\}} \le \bar p$ for all $j \ne i$.
Thus, when one other agent bids at a time, the other agents must spend $(\bar p \alpha_i)^{-1}$ tokens each time they block agent $i$ in expectation.
When 2 or more agents bid, we still have that agent $i$ bids only with probability $\alpha_i$, so the expected number of tokens spent for blocking agent $i$ is at least $2\alpha_i^{-1}$. As long as $\bar p\ge 1/2$, this is less effective from a bang-per-buck perspective.

Since the number of times agents $j \ne i$ can bid is at most $\sum_{j \ne i}\alpha_j = (1 - \alpha_i) T$, the expected number of times agent $i$ can get blocked is at most $\max(\bar p, 1/2) \alpha_i (1 - \alpha_i) T$. 
Combining this with the fact that agent $i$ bids $\alpha_i T$ times in expectation, we get that the expected number of rounds in which she wins the item is at least
\begin{equation*}
    \alpha_i T - \max(\bar p, 1/2) \alpha_i (1 - \alpha_i) T
    =
    \qty\big(1 - \max(\bar p, 1/2) (1 - \alpha_i) ) \alpha_i T
\end{equation*}
To ensure that agent $i$ does not get too low utility, we thus need that $\bar p$ is not too large.

The formal statement of the above argument is the lemma below. Its proof, which we defer to \cref{ssec:appendix_bang_for_buck_proof}, uses martingale concentration arguments to ensure that agent $i$ obtains at least $(1 - (1-\alpha_i)\bar p)$ fraction of her ideal utility with high probability.

\begin{lemma}
    \label{lem:bangforbuck}
    Fix an agent $i$.
    Given allocation probabilities $(p_k^S)$, if $p_j^{\{i,j\}} \leq \bar p$ for every other agent $j$ where $\bar p \geq 1/2$, then when we run \BRB with slack parameters $\delta_i^T = \sqrt{\nicefrac{6\ln T}{\alpha_i T}}$, an $\alpha_i$-aggressive strategy guarantees agent $i$ utility
    \begin{equation*}
        \frac1T\sum_{t=1}^T U_i[t] \geq \left(1 - \bar p(1-\alpha_i)\right)v_i^\star - O\left(\sqrt{\frac{\log T}{T}}\right)
    \end{equation*}
    with probability at least $1- O(1/T^2)$ regardless of the behavior of agents $j\neq i$.
\end{lemma}

We want to give allocation probabilities $p_i^S$ that induce interim probabilities $p_i = 1 - \prod_k (1 - \alpha_k)$, while keeping the upper bounds $\bar{p}$ as in \cref{lem:bangforbuck} small. This enables us to achieve both equilibrium and robustness guarantees, as stated in \cref{thm:nash}. 

When all agents have equal fair share $\alpha_i = 1/n$, assigning the item uniformly at random among bidders (i.e., $p_i^S = 1/|S|$) yields the desired interim allocation $p_i = 1 - (1 - 1/n)^n$ by symmetry, and satisfies $p_j^{\{i,j\}} = 1/2$. Applying \cref{lem:bangforbuck} with $\bar{p} = 1/2$ then gives a robustness guarantee of $1/2 + 1/(2n)$.

It would be natural to try to generalize the use of $\bar p = 1/2$ to the asymmetric case. However, this would be too low, and the desired interim allocation of \cref{thm:nash} may not be feasible with this restriction.
In fact, in \cref{sec:twoagents}, we show that if there are only $2$ agents, to induce the desired interim allocation probabilities, the unique solution is $p_1^{\{1,2\}} = \alpha_2$, which can be more than $\bar p = 1/2$.
We end up using $\bar p = \frac{1+\alpha_i}{2}$ (satisfying $\alpha_i \le \bar p$), which proves the robustness guarantee of \cref{thm:robustness_guarantee}.
The remaining technically challenging part of the proof is to show that there is a set of allocation probabilities $p_i^S$ satisfying these conditions.

To prove that the desired allocation probabilities exist, we consider the network of \cref{fig:introductory_border_flow_network} used to prove Border's theorem. Border's theorem does not cover upper bound constraints on $p_j^{\{i, j\}}$ of \cref{lem:bangforbuck}. To ensure this limit, we change the capacity of the edges from node $u_{\{i,j\}}$ to node $v_j$ for all agents $j$ and all two-element sets $\{i,j\}$ to $\frac{1+\alpha_i}{2}$ as illustrated in \cref{fig:modifiedborderscriterionflownetwork}. 
In \cref{lem:specific_border_modification}, we establish conditions under which a flow of value $1 - \prod_k (1 - \alpha_k)$ exists in this modified network. While the condition is more complex, the proof mirrors the original Border argument: we verify that every cut has sufficient capacity to support the desired flow. The most technically challenging and surprising part is establishing that this condition actually holds, which we do in \cref{lem:key_lemma}, thereby completing the proof of \cref{thm:robustness_guarantee}.

Finally, while our condition uses specific bounds $p_j^{\{i,j\}} \leq (1 + \alpha_i)/2$, we also prove a more general version allowing arbitrary upper bounds on any $p_i^S$. This generalization is presented in \cref{sec:border_details}.

\begin{figure}[t]
\centering
    \resizebox{.55\textwidth}{!}{\tikzset{every picture/.style={line width=0.75pt}} 

\begin{tikzpicture}[x=0.75pt,y=0.75pt,yscale=-1,xscale=1]

\draw   (90,113.5) .. controls (90,105.49) and (96.49,99) .. (104.5,99) .. controls (112.51,99) and (119,105.49) .. (119,113.5) .. controls (119,121.51) and (112.51,128) .. (104.5,128) .. controls (96.49,128) and (90,121.51) .. (90,113.5) -- cycle ;
\draw   (234,25.5) .. controls (234,16.39) and (241.39,9) .. (250.5,9) .. controls (259.61,9) and (267,16.39) .. (267,25.5) .. controls (267,34.61) and (259.61,42) .. (250.5,42) .. controls (241.39,42) and (234,34.61) .. (234,25.5) -- cycle ;
\draw   (235,94.5) .. controls (235,84.84) and (242.84,77) .. (252.5,77) .. controls (262.16,77) and (270,84.84) .. (270,94.5) .. controls (270,104.16) and (262.16,112) .. (252.5,112) .. controls (242.84,112) and (235,104.16) .. (235,94.5) -- cycle ;
\draw   (236,168.5) .. controls (236,159.39) and (243.39,152) .. (252.5,152) .. controls (261.61,152) and (269,159.39) .. (269,168.5) .. controls (269,177.61) and (261.61,185) .. (252.5,185) .. controls (243.39,185) and (236,177.61) .. (236,168.5) -- cycle ;
\draw   (405,29.5) .. controls (405,21.49) and (411.49,15) .. (419.5,15) .. controls (427.51,15) and (434,21.49) .. (434,29.5) .. controls (434,37.51) and (427.51,44) .. (419.5,44) .. controls (411.49,44) and (405,37.51) .. (405,29.5) -- cycle ;
\draw   (406,85.5) .. controls (406,77.49) and (412.49,71) .. (420.5,71) .. controls (428.51,71) and (435,77.49) .. (435,85.5) .. controls (435,93.51) and (428.51,100) .. (420.5,100) .. controls (412.49,100) and (406,93.51) .. (406,85.5) -- cycle ;
\draw   (405,166.5) .. controls (405,158.49) and (411.49,152) .. (419.5,152) .. controls (427.51,152) and (434,158.49) .. (434,166.5) .. controls (434,174.51) and (427.51,181) .. (419.5,181) .. controls (411.49,181) and (405,174.51) .. (405,166.5) -- cycle ;
\draw   (523,111.5) .. controls (523,103.49) and (529.49,97) .. (537.5,97) .. controls (545.51,97) and (552,103.49) .. (552,111.5) .. controls (552,119.51) and (545.51,126) .. (537.5,126) .. controls (529.49,126) and (523,119.51) .. (523,111.5) -- cycle ;
\draw    (104.5,99) -- (236.2,35.37) ;
\draw [shift={(238,34.5)}, rotate = 154.21] [color={rgb, 255:red, 0; green, 0; blue, 0 }  ][line width=0.75]    (10.93,-3.29) .. controls (6.95,-1.4) and (3.31,-0.3) .. (0,0) .. controls (3.31,0.3) and (6.95,1.4) .. (10.93,3.29)   ;
\draw    (115,103) -- (233,94.64) ;
\draw [shift={(235,94.5)}, rotate = 175.95] [color={rgb, 255:red, 0; green, 0; blue, 0 }  ][line width=0.75]    (10.93,-3.29) .. controls (6.95,-1.4) and (3.31,-0.3) .. (0,0) .. controls (3.31,0.3) and (6.95,1.4) .. (10.93,3.29)   ;
\draw    (116,122.5) -- (234.13,167.78) ;
\draw [shift={(236,168.5)}, rotate = 200.97] [color={rgb, 255:red, 0; green, 0; blue, 0 }  ][line width=0.75]    (10.93,-3.29) .. controls (6.95,-1.4) and (3.31,-0.3) .. (0,0) .. controls (3.31,0.3) and (6.95,1.4) .. (10.93,3.29)   ;
\draw    (269,168.5) -- (403,166.53) ;
\draw [shift={(405,166.5)}, rotate = 179.16] [color={rgb, 255:red, 0; green, 0; blue, 0 }  ][line width=0.75]    (10.93,-3.29) .. controls (6.95,-1.4) and (3.31,-0.3) .. (0,0) .. controls (3.31,0.3) and (6.95,1.4) .. (10.93,3.29)   ;
\draw [color={rgb, 255:red, 208; green, 2; blue, 27 }  ,draw opacity=1 ]   (270,94.5) -- (404,85.63) ;
\draw [shift={(406,85.5)}, rotate = 176.21] [color={rgb, 255:red, 208; green, 2; blue, 27 }  ,draw opacity=1 ][line width=0.75]    (10.93,-3.29) .. controls (6.95,-1.4) and (3.31,-0.3) .. (0,0) .. controls (3.31,0.3) and (6.95,1.4) .. (10.93,3.29)   ;
\draw    (267,25.5) -- (406,21.06) ;
\draw [shift={(408,21)}, rotate = 178.17] [color={rgb, 255:red, 0; green, 0; blue, 0 }  ][line width=0.75]    (10.93,-3.29) .. controls (6.95,-1.4) and (3.31,-0.3) .. (0,0) .. controls (3.31,0.3) and (6.95,1.4) .. (10.93,3.29)   ;
\draw [color={rgb, 255:red, 208; green, 2; blue, 27 }  ,draw opacity=1 ]   (261,80) -- (405.09,35.59) ;
\draw [shift={(407,35)}, rotate = 162.87] [color={rgb, 255:red, 208; green, 2; blue, 27 }  ,draw opacity=1 ][line width=0.75]    (10.93,-3.29) .. controls (6.95,-1.4) and (3.31,-0.3) .. (0,0) .. controls (3.31,0.3) and (6.95,1.4) .. (10.93,3.29)   ;
\draw    (434,29.5) -- (529.37,97.84) ;
\draw [shift={(531,99)}, rotate = 215.62] [color={rgb, 255:red, 0; green, 0; blue, 0 }  ][line width=0.75]    (10.93,-3.29) .. controls (6.95,-1.4) and (3.31,-0.3) .. (0,0) .. controls (3.31,0.3) and (6.95,1.4) .. (10.93,3.29)   ;
\draw    (434,166.5) -- (529.17,124.8) ;
\draw [shift={(531,124)}, rotate = 156.34] [color={rgb, 255:red, 0; green, 0; blue, 0 }  ][line width=0.75]    (10.93,-3.29) .. controls (6.95,-1.4) and (3.31,-0.3) .. (0,0) .. controls (3.31,0.3) and (6.95,1.4) .. (10.93,3.29)   ;
\draw    (435,85.5) -- (521.08,110.93) ;
\draw [shift={(523,111.5)}, rotate = 196.46] [color={rgb, 255:red, 0; green, 0; blue, 0 }  ][line width=0.75]    (10.93,-3.29) .. controls (6.95,-1.4) and (3.31,-0.3) .. (0,0) .. controls (3.31,0.3) and (6.95,1.4) .. (10.93,3.29)   ;

\draw (100,109) node [anchor=north west][inner sep=0.75pt]   [align=left] {$\displaystyle s$};
\draw (242,115) node [anchor=north west][inner sep=0.75pt]   [align=left] {$\displaystyle \vdots $};
\draw (414,110) node [anchor=north west][inner sep=0.75pt]   [align=left] {$\displaystyle \vdots $};
\draw (533,108.4) node [anchor=north west][inner sep=0.75pt]    {$t$};
\draw (329,126.4) node [anchor=north west][inner sep=0.75pt]    {$\cdots $};
\draw (241,21.4) node [anchor=north west][inner sep=0.75pt]    {$u_{\{1\}}$};
\draw (235,89.4) node [anchor=north west][inner sep=0.75pt]    {$u_{\{1,2\}}$};
\draw (240,162.4) node [anchor=north west][inner sep=0.75pt]    {$u_{\{n\}}$};
\draw (413,26.4) node [anchor=north west][inner sep=0.75pt]    {$v_{1}$};
\draw (413,81.4) node [anchor=north west][inner sep=0.75pt]    {$v_{2}$};
\draw (411,161.4) node [anchor=north west][inner sep=0.75pt]    {$v_{n}$};
\draw (129.35,63.17) node [anchor=north west][inner sep=0.75pt]  [rotate=-332.74]  {$\Pr( S'=\{1\})$};
\draw (134.52,83.23) node [anchor=north west][inner sep=0.75pt]  [rotate=-354.71]  {$\Pr( S'=\{1,2\})$};
\draw (141.41,109.45) node [anchor=north west][inner sep=0.75pt]  [rotate=-21.13]  {$\Pr( S'=\{n\})$};
\draw (330,7.4) node [anchor=north west][inner sep=0.75pt]    {$\infty $};
\draw (256.46,59.19) node [anchor=north west][inner sep=0.75pt]  [color={rgb, 255:red, 208; green, 2; blue, 27 }  ,opacity=1 ,rotate=-342.09]  {$\frac{1+\alpha _{2}}{2}\Pr( S'=\{1,2\})$};
\draw (286.2,72.86) node [anchor=north west][inner sep=0.75pt]  [color={rgb, 255:red, 208; green, 2; blue, 27 }  ,opacity=1 ,rotate=-355.16]  {$\frac{1+\alpha _{1}}{2}\Pr( S'=\{1,2\})$};
\draw (325.98,151.42) node [anchor=north west][inner sep=0.75pt]  [rotate=-359.85]  {$\infty $};
\draw (452.85,17.99) node [anchor=north west][inner sep=0.75pt]  [rotate=-35.39]  {$p_{1}\Pr( 1\in S')$};
\draw (441.5,66.71) node [anchor=north west][inner sep=0.75pt]  [rotate=-16.43]  {$p_{2}\Pr( 2\in S')$};
\draw (433.98,145.46) node [anchor=north west][inner sep=0.75pt]  [rotate=-337.49]  {$p_{n}\Pr( n\in S')$};

\end{tikzpicture}}
    \caption{\small\emph{Flow network for the proof of \cref{lem:specific_border_modification}, similar to the one in \cref{fig:introductory_border_flow_network} used in \cref{thm:alpha_border}.
    However, the red edges $(u_{\{i,j\}}, v_j)$ have explicit capacities as opposed to infinite capacity to enforce the additional bounds $\bar p_j^{\{i,j\}}\leq \frac{1+\alpha_i}{2}$.
        }
	}
	\label{fig:modifiedborderscriterionflownetwork}
\end{figure}
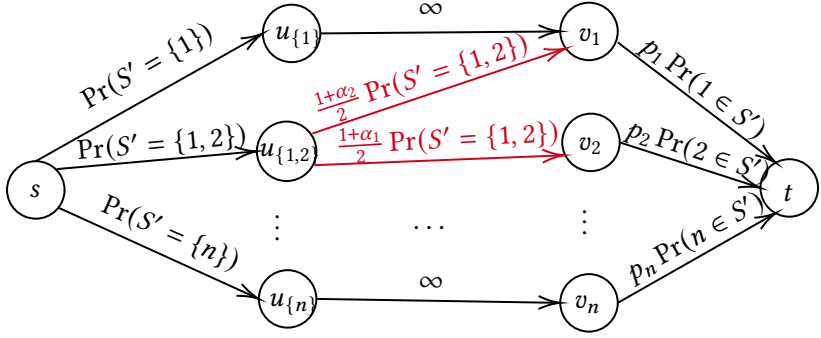

\begin{lemma}
\label{lem:specific_border_modification}
    There exists $p_i^S$'s that induce interim allocation probabilities $p_i = 1 - \prod_{j=1}^n (1-\alpha_j)$ and satisfy the upper bounds $p_j^{\{i,j\}}\leq \frac{1+\alpha_i}{2}$ if and only if for every $I\subseteq[n]$,
    \begin{equation}
    \label{eq:specific_generalized_border}
		\left(1 - \prod_{k=1}^n (1-\alpha_k)\right)\sum_{i\in I}\alpha_i + \prod_{i\in I}(1-\alpha_i) + \frac12\prod_{k=1}^n (1-\alpha_k)\sum_{i\in I}\frac{\alpha_i}{1-\alpha_i}\sum_{j\notin I}\alpha_j\leq 1.
    \end{equation}
\end{lemma}

\begin{proof}

Create an $s$-$t$ flow network as follows.
Let $\mu$ be the probability distribution over subsets $S'\subseteq[n]$ where each $i\in S'$ independently with probability $\alpha_i$.
For each nonempty $S\subseteq[n]$ create a node $u_S$ and connect it with an edge $(s, u_S)$ to the source node with capacity
\begin{equation*}
c(s, u_S) = \Pr_{S'\sim\mu}(S' = S).
\end{equation*}
For each $i$, create a node $v_i$.
For each $S\neq\emptyset$ such that $i\in S$, add an edge $(u_S, v_i)$ with capacity
\begin{equation*}
   c(u_S, v_i) = \begin{cases}\infty & \text{if $|S|\neq 2$}\\\frac{1+\alpha_j}{2}\Pr_{S'\sim\mu}(S'=S) & \text{if $S = \{i,j\}$}\end{cases}.
\end{equation*}
Also, add an edge $(v_i, t)$ with capacity
\begin{equation*}
c(v_i, t) = p_i\Pr_{S'\sim\mu}(i\in S').
\end{equation*}
The flow network is depicted in \cref{fig:modifiedborderscriterionflownetwork}.

Note that with $p_i = 1 - \prod_{j=1}^n(1-\alpha_j)$,
\begin{equation}
\label{eq:specific_no_waste}
    \sum_{i\in[n]}p_i\alpha_i  = 1 - \prod_{i\in[n]}(1-\alpha_i).
\end{equation}
The cut with $s$ on one side and everything else on the other has capacity
\begin{equation*}
\sum_{S\subseteq[n]:S\neq\emptyset}c(s,u_S) = \sum_{S\subseteq[n]:S\neq\emptyset}\Pr_{S'\sim\mu}(S'=S) = \Pr_{S'\sim\mu}(S'\neq\emptyset) = 1-\prod_{i\in [n]}(1-\alpha_i).
\end{equation*}
The cut with $t$ on one side and everything else on the other has capacity
\begin{equation*}
\sum_{i\in[n]}c(v_i, t) = \sum_{i\in[n]}p_i\Pr_{S'\sim\mu}(i\in S') = \sum_{i\in[n]}p_i\alpha_i = 1 - \prod_{i\in[n]}(1-\alpha_i),
\end{equation*}
using \eqref{eq:specific_no_waste} for the last equality.
Observe that in this flow network, allocation probabilities $(p_i^S)$ satisfying upper bounds $p_i^{\{i,j\}}\leq \frac{1+\alpha_i}{2}$ induce the interim allocation probabilities $(p_i)$ if and only if the flow $f$ is feasible where $f(s, u_S) = c(s, u_S)$, $f(v_i,t) = c(v_i,t)$, and $f(u_S, v_i) = p_i^S\Pr_{S'\sim\mu}(S' = S)$.
Since both the $s$-$t$ cuts with $s$ on one side and everything else on the other and the cut with $t$ on one side and everything else on the other both have cut capacity $1 - \prod_{i\in[n]}(1-\alpha_i)$, it suffices to show that \eqref{eq:specific_generalized_border} holds only if there is a feasible flow of flow value equal to this cut capacity.

Take any minimum-capacity $s$-$t$ cut $(A,B)$.
Let $I = \{i\in [n]:v_i\in B\}$.
We now argue that we can determine which side of the minimum-capacity cut $(A,B)$ the rest of the nodes are on just based on $I$.
\begin{itemize}
    \item If $i\in I$, then we must have $u_S\in B$ for $|S|\neq 2$ since those edges $(u_S, v_i)$ have infinite capacity.
    \item For any set $S$, if $i\notin I$ for every $i\in S$, then we can assume $u_S\in A$ since there are no edges coming out of $u_S$ except the $(u_S, v_i)$.
    \item For a doubleton set $\{i,j\}$, if $i\in I$ and $j\notin I$, then the edge capacity $c(s, u_{\{i,j\}}) = \Pr_{S'\sim\mu}(S'=S)$ coming in is larger than the edge capacity $c(u_{\{i,j\}}, v_i) = \frac{1+\alpha_j}{2}\Pr_{S'\sim\mu}(S'=S)$ going out, so $u_{\{i,j\}}\in A$.
    \item For a doubleton set $\{i,j\}$, if both $i,j\in I$, then the edge capacity $c(s, u_{\{i,j\}}) = \Pr_{S'\sim\mu}(S'=S)$ coming in is smaller than the sum of the edge capacities
    \begin{equation*}
        c(u_{\{i,j\}}, v_i) + c(u_{\{i,j\}}, v_j) = \left(\frac{1+\alpha_i}{2} + \frac{1+\alpha_j}{2}\right)\Pr_{S'\sim\mu}(S'=S)
    \end{equation*}
    going out, so $u_{\{i,j\}}\in B$.
\end{itemize}

Now we can compute the total capacity of the cut $(A,B)$:
\begin{equation*}
\begin{split}
    c(A,&B) = \sum_{S:|S|\neq 2,S\cap I\neq\emptyset} c(s, u_S) + \sum_{i,j\in I:i\neq j}c(s, u_{\{i,j\}}) + \sum_{i\in I}\sum_{j\notin I}c(u_{\{i,j\}}, v_i) + \sum_{i\notin I}c(v_i, t)\\
    & = \sum_{S:S\cap I\neq\emptyset}c(s, u_S) - \sum_{i\in I}\sum_{j\notin I}c(s, u_{\{i,j\}}) + \sum_{i\in I}\sum_{j\notin I}c(u_{\{i,j\}}, v_i) + \sum_{i\notin I}c(v_i, t)\\
    & = \sum_{S:S\cap I\neq\emptyset}\Pr_{S'\sim\mu}(S'=S) - \sum_{i\in I}\sum_{j\notin I}\Pr_{S'\sim\mu}(S'=\{i,j\})\left(1 - \frac{1+\alpha_j}{2}\right) + \sum_{i\notin I}p_i\Pr_{S'\sim\mu}(i\in S')\\
    & = \Pr_{S'\sim\mu}(S'\cap I\neq\emptyset) - \sum_{i\in I}\sum_{j\notin I}\alpha_i\alpha_j\prod_{k\notin\{i,j\}}(1-\alpha_k)\left(1-\frac{1+\alpha_j}{2}\right) + \sum_{i\notin I}p_i\alpha_i\\
    & = 1 - \prod_{i\in I}(1-\alpha_i) - \sum_{i\in I}\sum_{j\notin I}\alpha_i\alpha_j\prod_{k\notin\{i,j\}}(1-\alpha_k)\left(1-\frac{1+\alpha_j}{2}\right) + 1 - \prod_{i\in I}(1-\alpha_i) - \sum_{i\in I}p_i\alpha_i\\
    & = 1 - \prod_{i\in I}(1-\alpha_i) - \frac12\prod_{k=1}^n(1-\alpha_k)\sum_{i\in I}\frac{\alpha_i}{1-\alpha_i}\sum_{j\notin I}\alpha_j\\
    & \quad + 1 - \prod_{i\in I}(1-\alpha_i) - \left(1 - \prod_{k=1}^n(1-\alpha_k)\right)\sum_{i\in I}\alpha_i
\end{split}
\end{equation*}
using \eqref{eq:specific_no_waste} for the second-to-last equality and substituting $p_i = 1-\prod_{k=1}^n(1-\alpha_k)$ and doing some algebraic rearrangement for the last equality. Rearranging, \eqref{eq:specific_generalized_border} is equivalent to the above being least $1 - \prod_{i\in [n]}(1-\alpha_i)$. The result follows from the max-flow min-cut theorem.
\end{proof}

Next, we prove that allocation probabilities satisfying \cref{lem:specific_border_modification} do exist.
The key observation to showing that \eqref{eq:specific_generalized_border} holds is noticing that the left-hand side of the inequality is Schur-concave in the variables $\alpha_i$ for $i \in I$ and Schur-convex in the variables $\alpha_j$ for $j \not\in I$.
Using properties of Schur-convex and Schur-concave functions, it suffices to check the inequality for the special case when $\alpha_i$ is the same for all $i \in I$ and there is only one non-zero $\alpha_j$ outside of the set $I$. \cref{fig:dxl-crazy-function} plots the maximum possible value of the left hand side of \eqref{eq:specific_generalized_border}, depending on the $\alpha$ values. We emphasize how close it gets to the bound of $1$ with sets $I$ of size $5$ or higher.

\begin{lemma}
\label{lem:key_lemma}
    For any vector of agents' fair shares $(\alpha_i)$, for every $I\subseteq[n]$,
\begin{equation}
    \label{eq:modifiedmaximizationobjective}
    \left(1 - \prod_{k=1}^n (1-\alpha_k)\right)\sum_{i\in I}\alpha_i + \prod_{i\in I}(1-\alpha_i) + \frac12\prod_{k=1}^n (1-\alpha_k)\sum_{i\in I}\frac{\alpha_i}{1-\alpha_i}\sum_{j\notin I}\alpha_j\leq 1.
\end{equation}
\end{lemma}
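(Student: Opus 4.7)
The plan is to follow the roadmap flagged just before the lemma: show that the left-hand side of \eqref{eq:modifiedmaximizationobjective}, viewed jointly as a function of $(\alpha_k)_{k\in[n]}$, is Schur-concave in the block $(\alpha_i)_{i\in I}$ with $\sum_{i\in I}\alpha_i$ held fixed, and Schur-convex in the block $(\alpha_j)_{j\notin I}$ with $\sum_{j\notin I}\alpha_j$ held fixed. Standard properties of Schur-monotone functions then reduce \eqref{eq:modifiedmaximizationobjective} to the two-parameter inequality obtained by setting every $\alpha_i$, $i\in I$, to a common value $\alpha$ and moving all outside mass onto a single agent of share $\beta := \sum_{j\notin I}\alpha_j$ (agents with $\alpha_j = 0$ drop out of every sum and product harmlessly). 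Writing $k := |I|$, this reduced inequality takes the form
\begin{equation*}
    g_k(\alpha,\beta) \;:=\; \bigl(1 - (1-\alpha)^k(1-\beta)\bigr) k\alpha \,+\, (1-\alpha)^k \,+\, \tfrac{1}{2}\, k\alpha\beta\,(1-\alpha)^{k-1}(1-\beta) \;\leq\; 1
\end{equation*}
on the triangle $\{(\alpha,\beta) : \alpha,\beta \geq 0,\; k\alpha + \beta \leq 1\}$.

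For the two Schur properties, I would apply the Schur--Ostrowski criterion pairwise. Fix any two indices $i_1,i_2$ from the same block and set $s := \alpha_{i_1}+\alpha_{i_2}$. With $s$ held fixed, the only residual degree of freedom is $\alpha_{i_1}\alpha_{i_2}$, and each of the three summands has a clean pair-restriction: the products $\prod_{k\in[n]}(1-\alpha_k)$ and $\prod_{i\in I}(1-\alpha_i)$ depend on the pair only through $1-s+\alpha_{i_1}\alpha_{i_2}$; the term $\sum_{i\in I}\alpha_i$ is pair-symmetric; and the factor $\sum_{i\in I}\alpha_i/(1-\alpha_i)$ combines with $\prod_{i\in I}(1-\alpha_i)$ through the identity $\alpha_{i_1}(1-\alpha_{i_2}) + \alpha_{i_2}(1-\alpha_{i_1}) = s - 2\alpha_{i_1}\alpha_{i_2}$ to yield a symmetric polynomial whose pair-dependence is linear in $\alpha_{i_1}\alpha_{i_2}$. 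Collecting terms, the net coefficient of $\alpha_{i_1}\alpha_{i_2}$ can be computed explicitly and shown to be nonnegative when both indices lie in $I$ and nonpositive when both lie outside $I$. Since $\alpha_{i_1}\alpha_{i_2}$ is maximized under balancing and minimized under concentration on a single coordinate (with $s$ fixed), this gives Schur-concavity inside $I$ and Schur-convexity on $[n]\setminus I$ in one stroke.

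The main obstacle is the reduced inequality $g_k(\alpha,\beta) \leq 1$, which is sharp and hence demands careful calculus. I would argue it as follows. Regard $g_k$ as a quadratic in $\beta$; a direct expansion shows it is concave with vertex at $\beta^{*} = (3-2\alpha)/2 \geq 1/2$, which lies above the feasibility cap $1-k\alpha$ for every $k \geq 1$ (since then $2\alpha(1-k) \leq 0 \leq 1$). Consequently $g_k(\alpha,\cdot)$ is monotonically increasing on $[0, 1-k\alpha]$, and attains its maximum at $\beta = 1-k\alpha$. Substituting this and writing $u := k\alpha \in [0,1]$, the remaining univariate statement is controlled by the bound $(1-\alpha)^k \leq (1-u/k)^k \leq e^{-u}$, and can be closed by combining the identity $e^{-u}(1+u) \leq 1$ (the same tool used at the end of the proof of \cref{thm:worst_case_interim_allocation_probabilities}) with a direct monotonicity check in $k$. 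Equality occurs precisely at the tight configurations $(\beta, k\alpha) = (0,1)$ and $I = [n]$ with all shares equal, consistent with the near-saturation visible in \cref{fig:dxl-crazy-function}; preserving these sharp cases while absorbing the error in the bounds is where the technical care is concentrated.
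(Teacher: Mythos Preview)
Your high-level strategy---Schur-concavity in the inside block, Schur-convexity in the outside block, then a residual one-parameter inequality---matches the paper's exactly. One small caveat on the Schur part: showing the pair-coefficient is nonpositive for outside indices $j_1,j_2\notin I$ is not purely mechanical; it boils down to $\bigl(\sum_{i\in I}\tfrac{\alpha_i}{1-\alpha_i}\bigr)(1-X)\leq 2X$, which in turn needs the superadditivity of $x\mapsto x/(1-x)$ on $[0,1)$ (i.e.\ $\sum_i\phi(\alpha_i)\leq\phi(\sum_i\alpha_i)$ for convex $\phi$ with $\phi(0)=0$). The paper isolates this as a separate lemma; you gloss over it, but it is not hard once noticed.

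The genuine gap is your endgame. The relaxation $(1-\alpha)^k\leq e^{-u}$ does \emph{not} close the reduced inequality: that inequality is tight, so any slack discarded is fatal. Concretely, at $k=2$, $u=\tfrac12$ (so $\alpha=\tfrac14$, $\beta=\tfrac12$) one has $g_2(\tfrac14,\tfrac12)=\tfrac{31}{32}$, but replacing $(\tfrac34)^2=\tfrac{9}{16}$ by $e^{-1/2}\approx 0.6065$ in that same expression pushes it to roughly $1.005>1$. Even if your intent is instead to argue monotonicity in $k$ and then check the $k\to\infty$ limit, the limiting inequality is $e^{-u}\cdot\tfrac12(u^2+2u+2)\leq 1$, strictly stronger than the $e^{-u}(1+u)\leq 1$ you invoke; and the monotonicity-in-$k$ step itself needs an argument, not an assertion. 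The paper sidesteps all of this by keeping $(1-X/m)^m$ exact throughout and directly computing
\[
g'(X)\;=\;-\,\frac{X\,(1-X/m)^m\bigl(mX(m-1)+2(m-X)\bigr)}{2(m-X)^2}\;\leq\;0,
\]
so that $g(X)\leq g(0)=1$ for all $X\in[0,1]$. Your route is repairable along the monotonicity-in-$k$ line (with the correct limit inequality), but as written it does not close.
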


\begin{proof}
	Notice that if $I = \emptyset$ or $I = [n]$, then this is indeed true, so assume $\emptyset\subsetneq I\subsetneq[n]$.
	Let $X = \sum_{i\in I}\alpha_i$ and
    \begin{equation*}
        K = \left\{((x_i)_{i\in I}, (y_j)_{j\notin I})\in [0,1]^I\times[0,1]^{[n]\setminus I}: \sum_{i\in I}x_i = X,\sum_{j\notin I}y_j=1-X\right\}.
    \end{equation*}
    Define the function $f:K\to\mathbb R$ by
	\begin{equation*}
		\begin{split}
			f((x_i)_{i\in I}, & (y_j)_{j\notin I}) = \left(1 - \prod_{i\in I}(1-x_i)\prod_{j\notin I}(1-y_j)\right)X                                                                      \\
			                  & + \prod_{i\in I}(1-x_i) + \frac12\left(\prod_{i\in I}(1-x_i)\right)\left(\prod_{j\notin I}(1-y_j)\right)\left(\sum_{i\in I}\frac{x_i}{1-x_i}\right)(1-X).
		\end{split}
	\end{equation*}
	The left-hand side of \eqref{eq:modifiedmaximizationobjective} is precisely $f((\alpha_i)_{i\in I}, (\alpha_j)_{j\notin I})$, so it suffices to show that the maximum of $f$ is at most $1$.
	By taking derivatives, we check in \cref{ssec:appendix_check_schur_convexity_proof} that $f(\cdot, (y_j)_{j\notin I})$ is Schur-concave for each $(y_j)_{j\notin I}$ and that $f((x_i)_{i\in I}, \cdot)$ is Schur-convex for each $(x_i)_{i\in I}$.
	Therefore, the maximum of $f$ occurs at $((x_i^\star)_{i\in I}, (y_j^\star)_{j\notin I})$ when each $x_i^\star = X/m$ where $m=|S|$ and there is a single $y_{j^\star}^\star = 1-X$ and all other $y_j^\star=0$.
	In that case,
	\begin{align*}
		\label{eq:f_simplified}
		f((x_i^\star)_{i\in I}, (y_j^\star)_{j\notin I}) &\leq \left(1-\frac{X^2 (m X+m-2 X)}{2 (m-X)}\right) \left(1-\frac{X}{m}\right)^m+X
        = 1 - (1-X)(1-g(X))
	\end{align*}
where
\begin{equation*}
	g(X) = \frac{\left(1-\frac{X}{m}\right)^m \left(m X^2+2 m X+2 m-2 X^2-2 X\right)}{2(m-X)}.
\end{equation*}
We plot $f$ in \cref{fig:dxl-crazy-function}. It is easy to verify that $g(0) = 1$, and hence if we set $X=0$ or $X=1$, then the above bound is $1$. To complete the proof, we claim that $g'(X) \leq 0$ (i.e., $g(X)$ is decreasing) in $[0,1]$.
This is indeed the case, as one can readily check
\begin{equation*}
	g'(X) = -\frac{X \left(1-\frac{X}{m}\right)^m \left(mX(m-1) + 2(m-X)\right)}{2(m-X)^2} \leq 0.
\end{equation*}
\end{proof}

\begin{figure}
    \centering
    \includegraphics[width=0.5\textwidth]{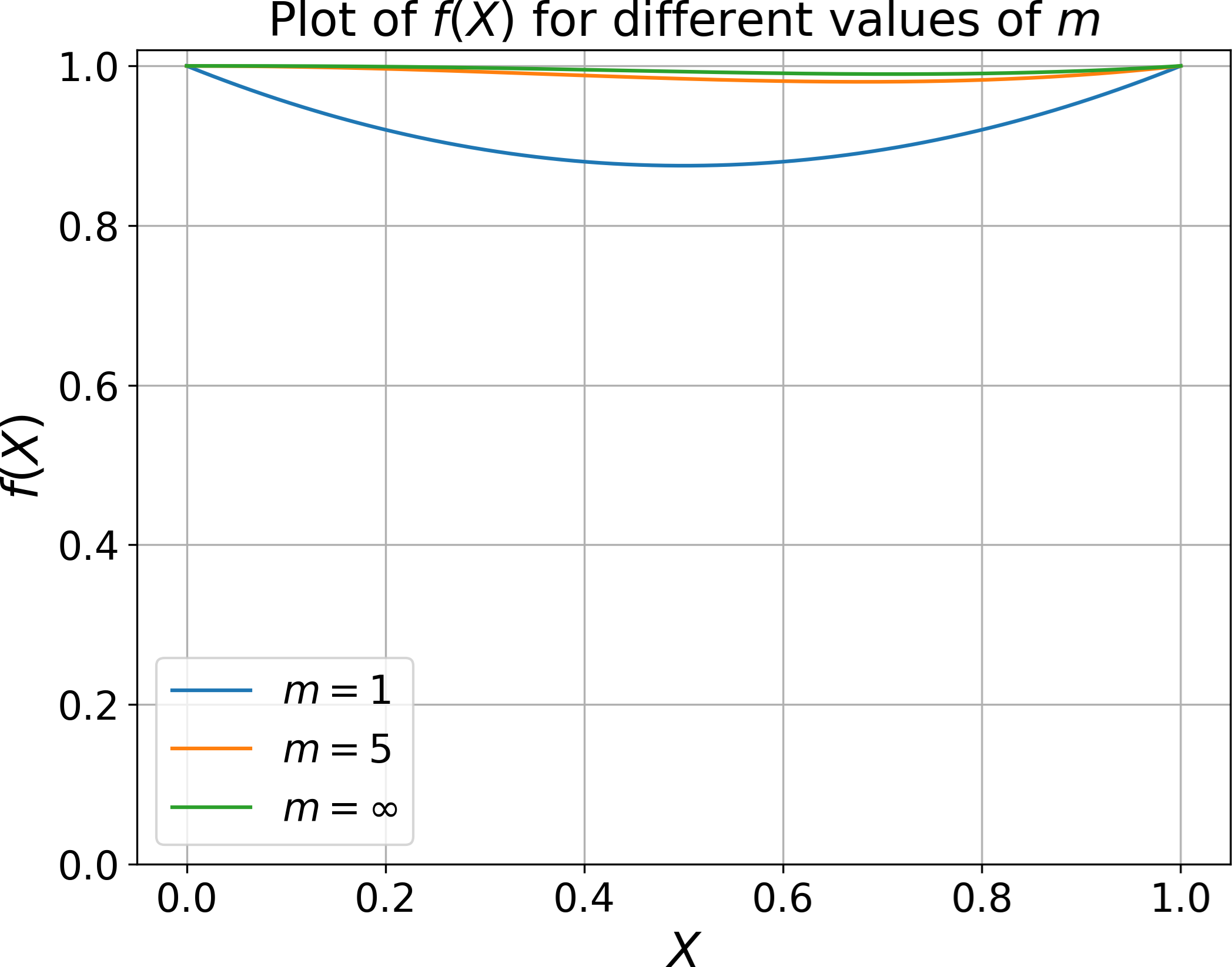}
    \caption{Plot of $f(X)$, the maximum value of the left-hand side of \eqref{eq:modifiedmaximizationobjective} for a fixed size $m$ of $I$. It gets super close to $1$, and we prove that it never exceeds $1$. }
    \label{fig:dxl-crazy-function}
\end{figure}

\cref{lem:specific_border_modification,lem:key_lemma} establish the existence of allocation probabilities $p_i^S$ inducing interim allocation probabilities $p_i = 1 - \prod_{k=1}^n(1-\alpha_k)$ that also satisfy upper bounds $p_j^{\{i,j\}} \leq \frac{1+\alpha_i}{2}$. The interim allocation probabilities show the utility guarantee at equilibrium of \cref{thm:nash}. By \cref{lem:bangforbuck} with $\bar p = \frac{1+\alpha_i}{2}$, the upper bounds prove the robustness claim of \cref{thm:robustness_guarantee}.

\subsection{Interim Allocation Probabilities Do Not Guarantee Robustness}
\label{ssec:interim_allocation_probabilities_do_not_guarantee_robustness}

In this section, we examine if the upper-bound restrictions on $p_j^{\{i, j\}}$ of \cref{lem:bangforbuck} are necessary for the robustness guarantee or just specifying the interim allocations is enough.
We will show that these restrictions are indeed necessary: we construct an example where the allocation probabilities $(p_i^S)$ induce the desired interim probabilities of \cref{thm:worst_case_interim_allocation_probabilities} but some agent $i$ cannot guarantee more than an $2\alpha_i$ fraction of her ideal utility robustly.

Consider the symmetric case where each $\alpha_i = 1/n$.
For each $S$ such that $i\in S$, let
\begin{equation}
\label{eq:assymetric_allocation_probabilities}
	p_i^S = \begin{cases}1/|S| & \text{if $1\notin S$}\\1 & \text{if $|S|\neq 2$ and $i=1$}\\0 & \text{if $|S|\neq 2$ and $1\in S$ and $i\neq 1$}\\1/n & \text{if $|S| = 2$ and $i = 1$}\\1-1/n & \text{if $|S|=2$ and $i\neq 1$}\end{cases}.
\end{equation}
In words, we set the allocation probabilities such that if agent $1$ does not bid, we allocate uniformly at random, and if agent $1$ bids and more than one other agent bids, we allocate to agent $1$ with probability $1$, and if agent $1$ bids and exactly one other agent bids, we allocate to agent $1$ with probability $1/n$.
We carefully picked these allocation probabilities such that they induce the same interim allocation probabilities $p_i = 1 - \prod_{j=1}^n(1-\alpha_j)$ as in \cref{thm:worst_case_interim_allocation_probabilities}.
In addition, the probabilities violate the restriction of \cref{lem:bangforbuck} as much as possible: $p_j^{\{1, j\}} = 1 - 1/n \gg 1/2$ when $n$ is large.

Even though the $p_i^S$ induce the correct interim allocation probabilities, they do not guarantee robustness by the following proposition, proved formally in \cref{ssec:appendix_upper_bound_proofs}.
\begin{proposition}
	\label{prop:assymetric_non_robustness}
	Using the allocation probabilities $p_i^S$ as defined by \eqref{eq:assymetric_allocation_probabilities}, player $1$ does not have a $\lambda$-robust strategy for any $\lambda > \frac{n-1}{n^2} + \frac{1}{n} - O\left(\sqrt{\nicefrac{\log T}{T}}\right)$.
\end{proposition}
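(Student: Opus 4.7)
The plan is to exhibit a worst-case value distribution together with an oblivious adversarial strategy that forces agent~$1$'s realized utility below the stated threshold, independently of her own play.

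I take $\mathcal{F}_i=\mathrm{Bernoulli}(1/n)$ for every agent $i$, so that $v_1^\star=1/n$ and agent~$1$'s best bids are on her value-$1$ rounds. For the adversaries, I propose the following collusive strategy (implementable via shared randomness, or by a deterministic round-robin schedule): in each round $t$, with probability $1/n$ no adversary bids, and with probability $(n-1)/n$ exactly one adversary drawn uniformly from $\{2,\dots,n\}$ bids. Equivalently each adversary bids independently with probability $1/n$ per round, matching her budget in expectation; a Chernoff bound guarantees budget feasibility for every adversary with probability $1-O(1/T^2)$. The two critical features of this strategy are that (i) it is oblivious to agent~$1$'s bids and values, and (ii) when agent~$1$ bids, she is never in a bid set $S[t]$ of size $\geq 3$.

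Against this adversary, agent~$1$'s competition set given she bids in round $t$ is $\emptyset$ with probability $1/n$ and $\{j\}$ with probability $1/n$ for each $j\neq 1$. By~\eqref{eq:assymetric_allocation_probabilities}, she wins with probability $1$ in the former case and $p_1^{\{1,j\}}=1/n$ in the latter, so each bid on a round with value $V_1[t]$ yields
\begin{equation*}
V_1[t]\cdot\left(\frac{1}{n}\cdot 1+\frac{n-1}{n}\cdot\frac{1}{n}\right) = V_1[t]\cdot\frac{2n-1}{n^2}
\end{equation*}
in expectation. Since the adversary is oblivious, agent~$1$'s optimization decouples: she gains only from bids on value-$1$ rounds, of which there are $T/n\pm O(\sqrt{T\log T})$ with high probability and all of which fit inside her budget $\alpha_1(1+\delta_1^T)T$. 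Multiplying the number of such bids by the per-bid factor produces an expected per-round utility of at most $(2n-1)/n^3+O(\sqrt{\log T/T})$; dividing by $v_1^\star=1/n$ gives the claimed ratio $\tfrac{n-1}{n^2}+\tfrac{1}{n}$ up to lower-order terms, which contradicts $\lambda$-robustness for $\lambda$ strictly above this threshold.

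The main technical obstacle will be upgrading the per-bid expectation computation into a high-probability bound on $\tfrac{1}{T}\sum_t U_1[t]$. The cleanest route is to condition on the adversary's random draws, apply a martingale (Azuma/Bernstein) concentration in the spirit of~\cref{lem:bangforbuck} to agent~$1$'s bid outcomes, and take a union bound with the $O(1/T^2)$ event of any adversary exhausting her budget and the concentration event for the count of value-$1$ rounds. No individual step is deep; the substance lies in identifying the right adversarial profile and the per-bid arithmetic above.
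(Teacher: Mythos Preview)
Your proposal is correct and essentially identical to the paper's: the proposition is obtained there as the special case of \cref{lem:anticorrelatedbidding}, whose proof uses the same anti-correlated adversary (at most one $j\neq 1$ bids per round, agent $j$ with probability $\alpha_j$) and the same per-bid win probability $\alpha_1+\sum_{j\neq 1}\alpha_j\,p_1^{\{1,j\}}$, bounded against $v_1^\star$ via the budget constraint rather than by fixing the Bernoulli distribution. One wording slip: your adversaries' bids are \emph{not} independent across agents---the whole point is the anti-correlation---though marginally each bids with probability $1/n$, which is all your Chernoff step needs.
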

\begin{proof}[Proof sketch]
	Suppose other players $i\neq 1$ take turns bidding, exactly one agent $i\neq 1$ bidding per round.
	Since they have total budget a $\frac{n-1}{n}$ fraction of the rounds, they can do this for all rounds but $T/n$.
	By how the $p_i^S$'s are set, agent $1$ can only win the item with probability $1/n$ on these rounds.
	This results in agent $1$'s utility to be extremely low.
\end{proof}

\subsection{A \texorpdfstring{$1/2$}{1/2} Hardness Result}
\label{ssec:half_robustness_hardness}

In this section, we show that our analysis for the robustness bound of \cref{thm:robustness_guarantee} is approximately tight if the fair shares are small.
Specifically, we show that the \BRB mechanism cannot guarantee every agent more than a $1/2 + \sum_i \alpha_i^2/2$ fraction of her ideal utility.
We show this for any allocation probabilities $p_i^S$ that the mechanism could use, not just the ones that induce optimal performance at equilibrium.

The following lemma is a generalization of \cref{prop:assymetric_non_robustness} and its intuition is the same as the proof sketch of \cref{prop:assymetric_non_robustness}: it stems from the case where the other players $i\neq 1$ take turns bidding, exactly one agent $i\neq i$ bidding per round.
Each agent $j\neq i$ can bid for an $\alpha_j$ fraction of the time.
Agents $j\neq i$ will not bid for a $1-\sum_{j\neq i}\alpha_j = \alpha_i$ fraction of the time, so agent $i$ can win these rounds with no competition.
At each round that agent $j$ bids, agent $i$ can only win with probability $p_i^{\{i,j\}}$ conditioned on agent $i$ bidding.
Summing the probabilities that agent $i$ can win conditioned on bidding over all times, we obtain the $\alpha_i + \sum_{j\neq i}\alpha_j p_i^{\{i,j\}}$ term below.
\begin{lemma}
    \label{lem:anticorrelatedbidding}
    In the \BRB mechanism, when the slack parameters are set as $\delta_k^T = \sqrt{\nicefrac{6\ln T}{\alpha_kT}}$, no matter what value distribution agent $i$ has, any $\lambda_i$-robust strategy of agent $i$ satisfies
    \begin{equation*}
        \lambda_i \leq \alpha_i + \sum_{j\neq i}\alpha_j p_i^{\{i,j\}} + O\left(\sqrt{\frac{\log T}{T}}\right).
    \end{equation*}
\end{lemma}
We formally prove the lemma in \cref{ssec:appendix_upper_bound_proofs}.
We can use the bound in the above lemma to show that there is always an agent that cannot have a $\lambda$-robust strategy for a constant $\lambda$ greater than $1/2$ not depending on the fair shares $(\alpha_i)$ or the time horizon $T$.
\begin{theorem}
	\label{thm:robustnesshardness}
	In the \BRB mechanism, when the slack parameters are set as $\delta_k^T = \sqrt{\nicefrac{6\ln T}{\beta_k T}}$, no matter what value distributions the agents have, if every agent $i$ has a $\lambda$-robust strategy, then
	\begin{equation*}
		\lambda \leq \frac12 + \frac12\sum_{i=1}^n \alpha_i^2 + O\left(\sqrt{\frac{\log T}{T}}\right).
	\end{equation*}
\end{theorem}
The theorem can be proved using \cref{lem:anticorrelatedbidding} by summing the terms of that lemma. The full proof is included in \cref{ssec:appendix_convex_combination_upper_bound_proof}.

\crefname{algorithm}{Algorithm}{Algorithms} 
\Crefname{algorithm}{Algorithm}{Algorithms} 

\section{Computationally Efficient \texorpdfstring{\BRB}{BRB} Mechanism}
\label{sec:computation}

In previous sections, we gave constraints on the allocation probabilities $p_i^S$ to obtain optimal equilibrium utility and $1/2$-robustness guarantees.
Specifically, we showed that if we can make the interim allocations $p_i$ equal to $1-\prod_{j\in[n]}(1-\alpha_j)$, i.e., if $\mu$ is the distribution of subsets of $[n]$ where each agent $i$ independently requests with probability $\alpha_i$,
\begin{equation}
\label{eq:equilibrium_utility_condition_repeated}
    \E_{S\sim\mu}[p_i^S\mid i\in S] = 1-\prod_{j\in[n]}(1-\alpha_j),
\end{equation}
then each agent will obtain a $(1-\prod_{j\in[n]}(1-\alpha_j))$-fraction of their ideal utility at equilibrium, and if we have the upper bounds
\begin{align}
    p_j^S & \leq \frac{1+\alpha_i}{2},\,p_i^S \leq \frac{1+\alpha_j}{2} & \text{if $S=\{i,j\}$}, \label{eq:robustness_condition_repeated}
\end{align}
then we obtain $(\frac12+\frac12\alpha_i^2)$-robustness.
\cref{lem:specific_border_modification} showed that allocation probabilities $p_i^S$ exist that satisfy \eqref{eq:equilibrium_utility_condition_repeated} and \eqref{eq:robustness_condition_repeated} simultaneously.

However, we have not shown how to compute these allocation probabilities. Because there are exponentially many $p_i^S$'s, we cannot hope to have an efficient algorithm to output all of them at once. Instead, we can only hope for an algorithm that given an $S$, outputs a probability distribution $(p_i^S)_{i\in S}$ in polynomial time that satisfies \eqref{eq:equilibrium_utility_condition_repeated} in expectation and 
\eqref{eq:robustness_condition_repeated}.
Then, at a given round in the \BRB mechanism, if a set of agents $S^*$ bids for the resource, we use the allocation probabilities $(p_i^{S^*})$ computed from our algorithm run on $S^*$.

To compute the allocation probabilities in this fashion, we use the framework presented in \cite{bhalgat2013optimal}, originally presented as a framework to design optimal Bayesian Incentive Compatible auctions.
This way, we obtain allocation probabilities that always satisfy \eqref{eq:robustness_condition_repeated}, and approximately satisfy \eqref{eq:equilibrium_utility_condition_repeated} with high probability.

\subsection{Using the Framework of \texorpdfstring{\cite{bhalgat2013optimal}}{[5]} to Compute Allocation Probabilities}

Our use of the framework in \cite{bhalgat2013optimal} is similar to the examples they present. For completeness, we show how to use their framework in the context of our problem.

\subsubsection{Multiplicative Weights Framework to Solve Linear Programs}
\label{ssec:MWU_for_LPs}

The framework of \cite{bhalgat2013optimal} uses the Multiplicative Weights algorithm (MWU) to find feasible solutions to linear programs as done in \cite{arora2012multiplicative}. We first summarize the linear programming environment.
\begin{itemize}
\item A closed set $P\subseteq\mathbb R^m$ that contains the feasible region.
\item The number of rounds $K$, which will also affect the accuracy of our solution.
\item Matrices $A^{(k)}\in \mathbb R^{l\times m}$ and vectors $b^{(k)}\in\mathbb R^l$ for each $k\in [K]$, each corresponding to a per-round set of linear constraints $A^{(k)} x \ge b^{(k)}$ for $x \in P$.
\item Oracles that solve $\max_{x\in P} y^\top A^{(k)}x$ for every given $k\in [K]$ and dual vector $y\geq 0$. 
\item A fixed (known) number $\rho$ such that $\|A^{(k)}x-b\|_\infty\leq \rho$ for each $k\in [K]$.
\end{itemize}

The goal is to find solutions $x^{(k)}$ for each $k \in [K]$ such that the total average violation $\frac{1}{K}\sum_{k} (A^{(k)}x^{(k)} - b^{(k)})$ is small (the averaging will approximate the expectation of \cref{eq:equilibrium_utility_condition_repeated}).
To do so, we run MWU for $K$ iterations, maintaining and updating a dual vector $y \in \mathbb{R}^l$. Let $y^{(k)}$ be the weight vector at iteration $k$. On iteration $k$, we use our oracle to find $x^{(k)}$ that solves $\max_{x\in P}(y^{(k)})^\top A^{(k)}x$. If $(y^{(k)})^\top A^{(k)}x^{(k)} < (y^{(k)})^\top x^{(k)}$, then we declare that the LP $A^{(k)}x\geq b^{(k)}$ is infeasible. Otherwise, we run a multiplicative weights update on the weight vector $y^{(k)}$ with cost vector $(A^{(k)}x^{(k)} - b^{(k)})/\rho$ to obtain a new weight vector $y^{(k+1)}$, i.e.,
\begin{equation*}
    y^{(k+1)}_i
    =
    \left(1 - \eta \frac{A^{(k)}x^{(k)} - b^{(k)}}{\rho} \right)
    y^{(k)}_i
\end{equation*}

This is the same as the multiplicative weights procedure in \cite{arora2012multiplicative} except that the linear programs  $A^{(k)}x\geq b^{(k)}$ change with the iteration $k$ as opposed to being fixed throughout. As pointed out by \cite{bhalgat2013optimal}, we can still obtain the following guarantee via the same analysis as in \cite{arora2012multiplicative}.
\begin{theorem}
\label{thm:multiplicative_weights_lp}
If the algorithm declares that an LP $A^{(k)}x\geq b^{(k)}$ is infeasible, then it is indeed so. Otherwise, let the learning rate used in multiplicative weights be $\eta\in(0,1/2)$. Then,
\begin{equation}
\label{eq:approximate_lp_feasibility}
	\frac{1}{K}\sum_{k=1}^K\left(A^{(k)}x^{(k)} - b^{(k)}\right) \geq -\rho\left(\eta+ \frac{\ln l}{\eta K}\right)
\end{equation}
where the inequality is interpreted coordinate-wise.
\end{theorem}

\subsubsection{Using the Multiplicative Weights Framework to Compute Allocation Probabilities}

In this section, we show how to use the setting of \cref{ssec:MWU_for_LPs} for finding allocation probabilities for the \BRB mechanism.
Specifically, we are interested in finding allocation probabilities $(p_i^S)_{S\subseteq[n], i\in S}$ that approximately satisfy \eqref{eq:equilibrium_utility_condition_repeated} and \eqref{eq:robustness_condition_repeated}.
Equivalently we want
\begin{align}
	& \frac{1}{\alpha_i}\sum_{S\subseteq[n]:i\in S}p_i^S\left(\prod_{j\in S}\alpha_j\right)\left(\prod_{j\notin S}(1-\alpha_j)\right) = 1 - \prod_{j\in[n]}(1-\alpha_j) & \forall i\in[n] \label{eq:interim_allocation_constraint},
\end{align}
and the constraints $\mathcal F(S)$ for each $S\subseteq[n]$ to hold, where $\mathcal F(S)$ consists of the constraints
\begin{align*}
	p_i^S & \geq 0 & \forall i\in S\\
	\sum_{i\in S}p_i^S & = 1 & \text{if $S\neq\emptyset$}\\
	p_j^S & \leq \frac{1+\alpha_i}{2},\,p_i^S \leq \frac{1+\alpha_j}{2} & \text{if $S=\{i,j\}$},
\end{align*}
which are linear constraints requiring that $(p_i^S)_{i\in S}$ is indeed a probability distribution over $i\in S$ that satisfy the upper bounds in \eqref{eq:robustness_condition_repeated}. The above constraints define a linear program over variables $p_i^S$.
In terms of the multiplicative weights framework, we encode the allocation probabilities $p_i^S$ as the vector $x$ and the let the set $P$ be the set of allocation probabilities satisfying $\mathcal F(S)$ for each $S\subseteq[n]$.

We now describe an algorithm to compute allocation probabilities for a single round of the \BRB mechanism. At a given time, some subset $S^*$ of agents bid, and we must produce a probability distribution $(p_i^{S^*})_{i\in S^*}$.
At each iteration $k$ of the MWU algorithm, we sample some multiset $\mathcal S^{(k)}$ of $C$ i.i.d. samples from\footnote{Recall $\mu$ is the distribution of subsets of $[n]$ where each agent $i$ appears with probability $\alpha_i$.} $\mu$. We let $\mathcal S_i^{(k)} = \{S\in\mathcal S^{(k)}:i\in S\}$ be the sets of $\mathcal S^{(k)}$ that agent $i$ is part of.
We define an approximate version of \eqref{eq:interim_allocation_constraint} using a parameter $\delta > 0$:
\begin{align}
	& \left|\frac{1}{|\mathcal S_i^{(k)}|}\sum_{S\in\mathcal S_i^{(k)}}p_i^S - \left(1 - \prod_{j\in [n]}(1-\alpha_j)\right)\right|\leq \delta& \forall i\in[n] \label{eq:approximate_interim_probability_constraint}
\end{align}
which form the constraints $A^{(k)}x\geq b^{(k)}$. Specifically, for a fixed agent $i$, we encode the above constraint as the two following (linear) constraints
\begin{align}
	\frac{1}{|\mathcal S_i^{(k)}|}\sum_{S\in\mathcal S_i^{(k)}}p_i^S & \geq \max\left\{0,1 - \prod_{j\in[n]}(1-\alpha_j) - \delta\right\} \label{eq:positive_constraint}\\
	-\frac{1}{|\mathcal S_i^{(k)}|}\sum_{S\in\mathcal S_i^{(k)}}p_i^S& \geq -\min\left\{1,1 - \prod_{j\in[n]}(1-\alpha_j) + \delta\right\}. \label{eq:negative_constraint}
\end{align}
While the maximum with $0$ and minimum with $1$ are not conceptually important, they can be used since the $p_i^S$ are probabilities, and they ensure that $\|A^{(k)}x-b\|_\infty\leq 1$ so we can use $\rho=1$ in the multiplicative weights framework.

Let us write down the objective function to the oracle problem, $\max_{x\in P}y^\top A^{(k)}x$. Let $y_{i,+}$ be the dual multiplier to \eqref{eq:positive_constraint} and let $y_{i,-}$ be the dual multiplier to \eqref{eq:negative_constraint}. Then, the objective $y^\top A^{(k)}x$ is
\begin{equation*}
	\sum_{i\in[n]}\frac{(y_{i,+} - y_{i,-})}{|\mathcal S_i^{(k)}|}\sum_{S\in\mathcal S_i^{(k)}}p_i^S = \sum_{S\in\mathcal S_i^{(k)}}\sum_{i\in S}\frac{(y_{i,+} - y_{i,-})p_i^S}{|\mathcal S_i^{(k)}|}
\end{equation*}
The right-hand side writes the objective as the sum of functions over $S\in\mathcal S_i^{(k)}$, so the problem decouples. Specifically, to solve the oracle problem $\max_{x\in P}y^\top A^{(k)}x$, we just need to solve individual problems $\mathcal A^{(k)}(\mathcal S,y,S)$ for each fixed $S\in\mathcal S_i^{(k)}$, where $y$ denotes the vector of dual variables $y_{i,+}$ and $y_{i,-}$ and $\mathcal A^{(k)}(\mathcal S,y,S)$ is the following problem.
\begin{equation}
\label{eq:decoupled_oracle_problem}
	\mathcal A^{(k)}(\mathcal S,y,S):\quad\max_{(p_i^S)_{i\in S}}\ \sum_{i\in S}\frac{(y_{i,+} - y_{i,-})p_i^S}{|\mathcal S_i^{(k)}|} \quad \text{subject to}\quad \mathcal F(S)
\end{equation}

We emphasize that \eqref{eq:decoupled_oracle_problem} is simple to solve: it is linear and has at most $n$ variables and $4$ linear constraints.

At the end of the $K$ iterations of MWU, we will either have outputted that some LP was infeasible, or we will have dual vectors $y^{(k)}\in\mathbb R^{2n}$ for each $k\in [K]$. In the case that we have outputted that some LP in infeasible, we can output arbitrary $y^{(k)}\in\mathbb R^{2n}$.
Otherwise, we output $(p_i^S)^{(k)}$ as the solution to each $\mathcal A^{(k)}(\mathcal S^{(k)},y^{(k)}, S)$. Then, when the actual set of bidding agents is $S^*$, we allocate according to the probability distribution $\frac1K\sum_{k=1}^K(p_i^{S^*})^{(k)}$.

We give pseudocode for the above algorithm below. In \cref{alg:compute_dual_vectors}, we describe the procedure to compute the dual vectors. In \cref{alg:compute_allocation_probabilities}, we describe how to use \cref{alg:compute_dual_vectors} to compute allocation probabilities.
\floatname{algorithm}{Algorithm}
\begin{algorithm}
    \caption{Compute Dual Vectors}
    \label{alg:compute_dual_vectors}
    \begin{algorithmic}
        \REQUIRE Fair shares $\alpha_i$, number of multiplicative weights iterations $K$, multiplicative weights learning rate $\eta$, number of samples of sets of bidding agents at each iteration $C$, slack parameter $\delta$.

        \STATE Initialize dual variables $y_{i,+}^{(1)}=1$ and $y_{i,-}^{(1)}=1$ for each $i\in[n]$.
        \FOR{$k=1,2,\dots,K$}
            \STATE Sample a multiset $\mathcal S^{(k)}$ of $C$ i.i.d. samples from $\mu$.
            \STATE For each $S\in\mathcal S^{(k)}$, find $(p_i^S)_{i\in S}^{(k)}$ by solving $\mathcal A^{(k)}(\mathcal S^{(k)},y^{(k)}, S)$ defined in \eqref{eq:decoupled_oracle_problem}.
            \STATE Define $\mathcal S_i^{(k)} = \{S\in\mathcal S^{(k)}:i\in S\}$.
            \STATE Define the costs $c_{i,+}^{(k)} = \frac{1}{|\mathcal S_i^{(k)}|}\sum_{S\in \mathcal S_i^{(k)}}p_i^S - \max\left\{0, 1 - \prod_{j\in[n]}(1-\alpha_j) - \delta\right\}$ and $c_{i,-}^{(k)} = -\frac{1}{|\mathcal S_i^{(k)}|}\sum_{S\in \mathcal S_i^{(k)}}p_i^S + \min\left\{1, 1 - \prod_{j\in[n]}(1-\alpha_j) + \delta\right\}$.
            \STATE Update the dual variables as as in the MWU algorithm: $y_{i,+}^{(k+1)} = (1-\eta c_{i,+})y_{i,+}^{(k)}$ and $y_{i,-}^{(k+1)} = (1-\eta c_{i,-})y_{i,-}^{(k)}$.
        \ENDFOR
          \ENSURE Dual vectors $y^{(k)} = (y_{i,+}^{(k)}, y_{i,-}^{(k)})_{i\in[n]}$ for each $k\in K$ and samples $\mathcal S^{(k)}$.  \end{algorithmic}
\end{algorithm}

\begin{algorithm}
    \caption{Compute Allocation Probabilities}
    \label{alg:compute_allocation_probabilities}
    \begin{algorithmic}
        \REQUIRE Set $S^*$ of bidding agents.
        \STATE Obtain dual vectors $y^{(k)}$ and samples $\mathcal S^{(k)}$ for each $k\in [K]$ from \cref{alg:compute_dual_vectors}.
        \STATE Obtain $(p_i^{S^*})^{(k)}$ by solving $\mathcal A^{(k)}(\mathcal S^{(k)},y^{(k)}, S^*)$ for each $k\in [K]$.
        \STATE Output $(p_i^{S^*}) := \frac1K\sum_{k=1}^K(p_i^{S^*})^{(k)}$.
    \end{algorithmic}
\end{algorithm}

The following lemma, which we prove in \cref{ssec:appendix_computation_proofs}, shows that the allocation probabilities approximately induce the correct interim allocation probabilities with high probability, that holds independently for each requesting set $S^*$.
\begin{lemma}
\label{lem:computation_lemma}
Suppose we run \cref{alg:compute_allocation_probabilities} with $S^*\sim\mu$. When we run \cref{alg:compute_dual_vectors} as a subroutine, given a fixed parameter $K$, choosing the other parameters as
\begin{equation}
\label{eq:computation_lemma_parameter_choice}
	\eta = \min\left\{\frac12,\sqrt{\frac{\ln n}{K}}\right\},C = \left\lceil K\left(\frac{\ln nK}{\underline\alpha}\right)^2\right\rceil, \delta = \sqrt{\frac{3\ln nK}{\underline\alpha C}}
\end{equation}
there is an event $E$ contained in the $\sigma$-algebra generated by the samples $(\mathcal S^{(k)})_{k\in [K]}$ of probability at least $1- O\left(\frac1{n^2K^2}\right)$ such that on $E$, for every agent $i$,
\begin{equation*}
	\left|\E_{\substack{S^*\sim\mu}}\left[p_i^{S^*}\,\middle|\, i\in S^*, (\mathcal S^{(k)})_{k\in [K]}\right] - \left(1 - \prod_{j\in [n]}(1-\alpha_j)\right)\right|\leq O\left(\sqrt{\frac{\log n}{K}}\right)
\end{equation*}
where $\underline \alpha = \min_j \alpha_j$.

Also, when choosing such parameters the algorithm runs in time $O(\poly(K, n, 1/\underline\alpha))$.
\end{lemma}

\subsection{\texorpdfstring{\BRB}{BRB} with Online Computation of Allocation Probabilities}

We now describe how to run our \BRB mechanism while efficiently computing the allocation probabilities online. At the beginning of the mechanism, we use \cref{alg:compute_dual_vectors} to obtain dual vectors $(y^{(k)})$. Then, at each time step $t$, given the set of bidding agents $S^t$, we can use the dual vectors $(y^{(k)})$ to obtain allocation probabilities $(p_i^{S^t})$ to be used in round $t$ using \cref{alg:compute_allocation_probabilities}. The formal description of our algorithm is in \cref{alg:compute_BRB}.

Our argument to show that this maintains the robustness and equilibrium guarantees goes as follows. The allocation probabilities $p_i^{S^*}$ outputted by \cref{alg:compute_allocation_probabilities} necessarily satisfy $\mathcal F(S^*)$ so they will satisfy \eqref{eq:robustness_condition_repeated}. We will use \cref{lem:computation_lemma} to show that \eqref{eq:equilibrium_utility_condition_repeated} is approximately satisfied with high probability. The equilibrium utility and robustness will then follow.

\begin{algorithm}
    \caption{\BRB with Efficient Computation of Allocation Probabilities $p_i^S$.}
    \label{alg:compute_BRB}
    \begin{algorithmic}
        \STATE Obtain dual vectors $y^{(k)}$ and samples $\mathcal S^{(k)}$ for each $k\in [K]$ from \cref{alg:compute_dual_vectors}.
        \STATE We run \BRB as usual and compute the allocation probabilities online from $y^{(k)}$ and $\mathcal S^{(k)}$ as follows.
        \FOR{$t=1,2,\dots,T$}
            \STATE Observe the set of bidding agents $S^t$.
            \STATE Sample $k^t\sim[K]$.
            \STATE Use the allocation probabilities $(p_i^{S^t})^{(k^t)}_{i\in S^t}$ given by solving $\mathcal A^{(k^t)}(\mathcal S^{(k^t)}, y^{(k^t)}, S^t)$.
        \ENDFOR
    \end{algorithmic}
\end{algorithm}

Formally, we have the following, which we prove in \cref{ssec:appendix_computation_proofs}.
Due to the error by the sampling being proportional to $\tilde O(1 / \sqrt T)$, we get the same performance as in \cref{thm:robustness_guarantee}.

\begin{theorem}\footnote{Outside of this section, we use big $O$ notation to only indicate dependence on $T$. In all our uses, this suppresses only polynomial dependence on $n$ and $(1/\alpha_j)_{j\in[n]}$. In this section, we explicitly keep the $\poly(n, (1/\alpha_j)_{j\in[n]})$ factors in the big $O$ notation.}
\label{thm:compute_dual_vectors_once}
\cref{alg:compute_BRB} with $K = T$ and other parameters set as in \cref{lem:computation_lemma} has the following guarantees.
Each player $i$ playing an $\alpha_i$-aggressive strategy is an $O\left(\sqrt{\frac{\log T}{T}}\poly(n, (1/\alpha_j)_{j\in[n]})\right)$-equilibrium in which player $i$ receives utility
\begin{equation*}
	\frac1T\sum_{t=1}^T U_i[t] \geq \left(1-\prod_{j\in[n]}(1-\alpha_j)\right)v_i^\star - O\left(\sqrt{\frac{\log T}{T}}\poly(n, (1/\alpha_j)_{j\in[n]})\right)
\end{equation*}
with probability at least $1-O(\poly(n, (1/\alpha_j)_{j\in[n]})/T^2)$. Furthermore, regardless of the behavior of other agents $j\neq i$, if agent $i$ plays an $\alpha_i$-aggressive strategy, she is guaranteed utility
\begin{equation*}
	\frac1T\sum_{t=1}^T U_i[t] \geq \frac12 + \frac12\alpha_i^2 - O\left(\sqrt{\frac{\log T}{T}}\poly(n, (1/\alpha_j)_{j\in[n]})\right)
\end{equation*}
with probability at least $1-O(\poly(n, (1/\alpha_j)_{j\in[n]})/T^2)$.

\end{theorem}

\crefname{algorithm}{Mechanism}{Mechanisms} 
\Crefname{algorithm}{Mechanism}{Mechanisms} 

\appendix
\section{Warm Up: Repeated Allocation with Two Agents}
\label{sec:twoagents}

To motivate the design of our mechanism and show how we can achieve good robust equilibria as outlined in \cref{ssec:ideal_benchmarks}, we first give a simplified version of our mechanism in the special case where there are only two agents.
Our mechanism lets each agent decide to bid for the resource or not in each round, up to some limited number of bids.
It makes sense that when only one agent bids (and has not exceeded her bid budget), she is guaranteed the item.
What is unclear is who gets allocated when both agents bid.
We show that carefully randomizing this allocation leads to the optimal guarantee of \cref{lem:centralallocationworstcase}, both in terms of the equilibrium and the robust setting, i.e., $\lnash = \lrob = 1 - (1 - \alpha_1)(1 - \alpha_2)$.
To extend this result to $n$ agents, we need to set exponentially many parameters (the allocations when any subset $S$ agents bids), which we do by strengthening Border's theorem in \cref{sec:border}.

Our mechanism proceeds as follows:
Each agent $i$ starts with $\alpha_i T $ bid tokens.
At each round, each agent can bid $0$ or $1$.
If both agents bid $0$, no one receives the item.
If only one agent bids $1$, then she receives the item in that round.
If both agents bid $1$, then we allocate the item to agent $1$ with probability $p_1^{\{1,2\}}=p$ and to agent $2$ with probability $p_2^{\{1,2\}}=1-p$.
Regardless of who wins the item, each agent $i$ pays their bid.
If an agent's budget becomes $0$, they can not bid.

Suppose agent $1$ has a fair share $\alpha_1 = \alpha$ and agent $2$ has a fair share $\alpha_2 = 1-\alpha$.
For simplicity, we assume that agent $i$ has value distribution with a continuous CDF $F_i$, and assume that the budget constraint is enforced only in expectation. (This is to help simplify the description of this warm-up case; we later fix this using inflated budgets, as in~\cite{gorokh2021monetary}).
We now show that there is a simple Nash equilibrium for any $p$, where each agent bids only when their value is in the top $\alpha_i$-quantile of her value distribution.

\begin{proposition}[Informal]
	Each agent $i$ bidding at time $t$ if her value $V_i[t]$ is in the top $\alpha_i$-quantile of her value distribution is a Nash equilibrium.
\end{proposition}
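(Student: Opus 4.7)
The proof plan is to fix one agent at the threshold strategy and verify that the other agent's best response is also the threshold strategy, which by symmetry establishes mutual best response. Assume agent $2$ follows the proposed strategy, bidding iff $V_2[t]$ is in the top $\alpha_2 = 1-\alpha$ quantile of $\mathcal{F}_2$. Since $V_1[t] \perp V_2[t]$ and agent $2$'s action depends only on $V_2[t]$, the indicator that agent $2$ bids in round $t$ is independent of $V_1[t]$ and equals $1$ with probability exactly $\alpha_2$. Consequently, if agent $1$ bids in round $t$ while having value $v$, her conditional expected (gross) utility is
\begin{equation*}
(1 - \alpha_2)\, v \;+\; \alpha_2\, p\, v \;=\; q\, v, \qquad \text{where } q \;=\; \alpha + (1-\alpha)\, p \;>\; 0.
\end{equation*}
Note that $q$ does not depend on the round $t$ nor on anything other than $v$, because agent $2$'s bid is independent of history and of $V_1[t]$.

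Next, the plan is to reduce agent $1$'s best-response problem to a single-round LP. Because values are i.i.d.\ across rounds, utilities are additive, and the budget is enforced only in expectation, a straightforward decomposition argument shows that without loss of optimality agent $1$ may restrict to stationary Markov strategies $\rho_1 \colon [0,\infty) \to [0,1]$ where $\rho_1(v)$ is the probability of bidding when the realized value is $v$. Her total expected utility is then $T q \cdot \E_{V_1 \sim \mathcal{F}_1}[V_1 \rho_1(V_1)]$, and the in-expectation budget cap $\alpha_1 T$ becomes $\E_{V_1 \sim \mathcal{F}_1}[\rho_1(V_1)] \le \alpha_1 = \alpha$. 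Since $q > 0$ is constant, maximizing agent $1$'s expected total utility is equivalent to
\begin{equation*}
\max_{\rho_1 \colon [0,\infty) \to [0,1]}\; \E_{V_1 \sim \mathcal{F}_1}\!\bigl[V_1 \rho_1(V_1)\bigr] \quad \text{s.t.} \quad \E_{V_1 \sim \mathcal{F}_1}\!\bigl[\rho_1(V_1)\bigr] \;\le\; \alpha,
\end{equation*}
which is exactly the ideal-utility optimization of \cref{def:alpha_ideal_utility}.

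A standard greedy/rearrangement argument, using that $F_1$ is continuous so there are no atoms at the $(1-\alpha)$-quantile, shows that the unique (up to null sets) maximizer is the threshold rule $\rho_1^{\star}(v) = \mathbf{1}\{v \ge F_1^{-1}(1-\alpha)\}$ that bids iff $V_1$ is in the top $\alpha$ quantile. This is exactly the proposed strategy for agent $1$, so the proposed strategy is a best response to agent $2$'s threshold strategy. Swapping the roles of the two agents and replacing $(\alpha, p)$ by $(1-\alpha, 1-p)$ gives the analogous conclusion for agent $2$. Hence the threshold profile is mutual best response, i.e., a Bayes-Nash equilibrium.

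The only real subtlety in this warm-up is handling the repeated budget constraint: enforcing it only in expectation cleanly decouples the $T$-round problem into a single-round LP, so the optimality of the threshold rule is immediate from the ideal-utility benchmark. The genuinely hard step, which the later $n$-agent treatment must confront, is keeping this argument intact once $(i)$ there are exponentially many allocation probabilities $\{p_i^S\}$ to choose (so that agent $i$'s winning probability $q_i$ when bidding against independent bids of rates $\alpha_j$ stays large), and $(ii)$ the budget must be enforced pathwise rather than in expectation; the latter is dealt with by the $B_i = \alpha_i T + o(T)$ inflation together with concentration, while the former is the subject of the Border-theoretic analysis in the next sections.
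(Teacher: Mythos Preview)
Your proof is correct and follows essentially the same approach as the paper: fix the other agent at the threshold strategy, observe that agent $1$'s winning probability conditional on bidding is the constant $p_1 = \alpha + (1-\alpha)p$, and reduce the best-response problem (under the in-expectation budget) to the ideal-utility LP of \cref{def:alpha_ideal_utility}, whose optimum is the top-$\alpha$-quantile threshold rule. Your presentation is slightly more formal (explicitly restricting to stationary Markov rules and framing the LP), but the argument is the same as the paper's proof sketch.
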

\begin{proof}[Proof sketch]
	First, observe that the expected spending of each agent $i$ under this strategy profile is indeed at most $B_i[1] = \alpha_i T$, since each agent will be bidding i.i.d. $\mathrm{Bernoulli}(\alpha_i)$. Next, note that there is nothing that agent $1$ can do to affect player $2$'s behavior.
	She must solve her own stochastic control problem where her policy does not affect agent $2$'s behavior.
	At any time $t$, if agent $1$ bids, her probability of winning is $p_1 = \alpha + (1-\alpha)p$,
	where the $\alpha$ term is the probability that agent $2$ does not bid, and the $(1-\alpha)p$ term is the probability that agent $2$ bid, and agent $1$ wins give both agents bid.
	Thus, agent $1$'s control problem reduces to selecting $\alpha_1T$ time periods in which to bid, and she wins a $p_1$ fraction of them in expectation. Agent $1$ should bid on her highest-valued $\alpha_iT$ time periods, which is precisely what bidding whenever her value $V_i[t]$ is in the top $\alpha_i$-quantile of her distribution does (in expectation).
\end{proof}

We next pick $p$.
Note that agent $1$'s expected utility in the above equilibrium is
\begin{equation*}
    \frac1T\sum_{t=1}^T \mathbb E[U_1[t]] = \frac1T\sum_{t=1}^T p_1\mathbb E[V_1[t]\pmb1\{V_1 > F_1^{-1}(1-\alpha_1)\}] = p_1v_1^\star
\end{equation*}
where $p_1 = \alpha + (1-\alpha)p$ is the probability agent $1$ wins a round conditioned on her bidding. Symmetrically, player $2$ gets expected utility $\frac1T\sum_t \mathbb E[U_2[t]] = p_2v_2^\star$ where $p_2 = (1-\alpha) + \alpha(1-p) = 1 - \alpha p$.
Setting $p=1-\alpha$ guarantees both agents $1-\alpha(1-\alpha)$ fraction of their ideal utility.
This achieves $\lnash = 1-\alpha(1-\alpha)$ and by \cref{lem:centralallocationworstcase}, this is the best factor any allocation rule can get.

The choice of $p = 1-\alpha$ might seem unusual, especially when $\alpha$ is close to $0$ or $1$ where one agent wins most times. The intuition behind this is as follows: if agent $1$ has small fair share and agent $2$ has a high one, then using a high $p$ and favoring agent $1$ has a small effect on agent $2$, most of whose bids are uncontested.
Therefore, if we want equal outcomes, i.e., $p_1 = p_2$, we have to favor the agent with the small fair share, breaking most ties in her favor.

In this special case, it is not hard to argue that agent $1$ enjoys the same utility guarantee \emph{even if agent $2$ acts adversarially}.
Since agent $1$'s bids are i.i.d. across time, the worst agent $2$ can do is bid as much as her budget constraint allows, which is the same as her equilibrium behavior.
Using a symmetric argument for agent $2$, we get the following proposition.
\begin{proposition}[Informal]
	In the above mechanism with $p = 1 - \alpha$, agent $i$ bidding when her value is in her top $\alpha_i$-quantile is a $\lrob$-robust strategy with $\lrob=\qty\big( 1 - \alpha_i(1-\alpha_i) )$.
\end{proposition}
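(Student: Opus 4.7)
The plan is to show that against any (possibly collusive, budget-respecting) strategy of agent $j \neq i$, agent $i$'s threshold strategy still collects at least a $(1-\alpha_i(1-\alpha_i))$-fraction of her ideal utility. The core intuition is that under the threshold strategy, agent $i$'s bids are i.i.d.\ $\mathrm{Bernoulli}(\alpha_i)$ and independent of anything agent $j$ could condition on at round $t$; thus the only lever agent $j$ has is to spend bid tokens attempting to displace agent $i$'s wins, and the budget cap then turns this into a hard bound on the total damage.

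First, I would fix a round $t$ and note that $B_i[t] = \mathbf{1}\{V_i[t] \text{ is in the top } \alpha_i\text{-quantile}\}$ is independent of the public history of bids, since $V_i[t]$ is a fresh draw; in particular $B_i[t]$ is independent of $B_j[t]$ (which can only depend on past bids and private randomness) and of $V_i[t]$ past the event $\{B_i[t]=1\}$. Writing $q(t) := \mathbb{P}[B_j[t]=1]$, the conditional probability that agent $i$ wins given that she bids is
\[
(1-q(t))\cdot 1 + q(t)\cdot p_i^{\{1,2\}} \;=\; 1 - (1-p_i^{\{1,2\}})\,q(t) \;=\; 1 - \alpha_i\,q(t),
\]
where I use $p_1^{\{1,2\}} = 1-\alpha = \alpha_2$ and $p_2^{\{1,2\}} = \alpha = \alpha_1$, so that $1 - p_i^{\{1,2\}} = \alpha_i$ in both cases. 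Using $\mathbb{E}[V_i[t]\,\mathbf{1}\{B_i[t]=1\}] = v_i^\star$ (the threshold strategy is the maximizer in \cref{def:alpha_ideal_utility}) and the conditional independence of $V_i[t]$ and $B_j[t]$ given $B_i[t]$, I would then conclude
\[
\mathbb{E}[V_i[t]\,X_i[t]] \;=\; v_i^\star\bigl(1 - \alpha_i\,q(t)\bigr).
\]

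Finally, I would sum over $t$ and invoke agent $j$'s (expected) budget constraint $\sum_{t=1}^T q(t) \leq \alpha_j T = (1-\alpha_i)T$ to obtain
\[
\frac{1}{T}\sum_{t=1}^T \mathbb{E}[V_i[t]\,X_i[t]] \;\geq\; v_i^\star\bigl(1 - \alpha_i(1-\alpha_i)\bigr),
\]
which is exactly the claimed $\lrob$-robustness. The delicate step — and the conceptual heart of the argument — is asserting the independence of $B_i[t]$ from any round-$t$ information available to agent $j$; this is what denies agent $j$ the ability to preferentially target rounds where agent $i$'s realized value is high and reduces her best adversarial response to simply spreading her bid mass over the horizon. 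Note also that this matches $\lnash$ exactly, so in the two-agent setting the robust guarantee is lossless.
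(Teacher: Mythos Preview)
Your proof is correct and matches the paper's own (very terse) argument: the paper simply observes that since agent $i$'s bids are i.i.d.\ across time and independent of anything agent $j$ can condition on, the worst agent $j$ can do is exhaust her full budget, which coincides with her equilibrium behavior and hence yields the same $1-\alpha_i(1-\alpha_i)$ factor. Your write-up is a careful unpacking of exactly this idea---computing the per-round expected damage $\alpha_i q(t)v_i^\star$ and summing against the budget cap $\sum_t q(t)\le(1-\alpha_i)T$---and the ``delicate step'' you flag (independence of $B_i[t]$ from $B_j[t]$) is precisely what the paper is relying on when it says agent $i$'s bids are i.i.d.
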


This means we achieve $\lrob = 1 - (1-\alpha_1)(1-\alpha_2)$ which also matches the upper bound of any allocation of \cref{lem:centralallocationworstcase}.

In coming sections, we generalize this mechanism for $n \ge 2$ players, where the main difficulty is resolving concurrent bids for any subset $S \subseteq [n]$ with $|S| \ge 2$.
Since there are exponentially many such subsets, it is much harder to find a good or simple allocation rule that has the same properties as our choice of $p$ above.
We will show that we can get the same optimal factor under equilibrium performance, $\lnash = 1 - \prod_i (1 - \alpha_i)$, using Border's theorem.
For the robust guarantee, we show that a simple application of Border's theorem is not enough.
Instead, we have to consider more restrictions on how we handle concurrent bids, which leads to the strengthening of Border's theorem in \cref{lem:specific_border_modification}. 
Unfortunately, the robust guarantee weakens for $n > 2$, and we get $\lrob \approx 1/2$, which we show is tight under any rule for resolving concurrent bids.

\section{Better Equilibrium Guarantees}
\label{sec:beta_mechanism}
We designed the \mechanism mechanism to guarantee each agent a $1-\prod_{j=1}^n(1-\alpha_j)$ fraction of their $\alpha_i$-ideal utility.
While this may be the best factor with worst-case value distributions, there are allocation procedures that can do a lot better with other value distributions \cite{chido,fikioris2023online}.

We present a very simple way to generalize our mechanism to achieve better equilibrium guarantees when the value distributions are not worst case.
We generalize this mechanism by parameterizing it in two ways.
\begin{enumerate}
    \item
          Set the per-round budgets to a parameter $\beta_i$.
          That is, instead of giving everyone $\alpha_i(1+\delta_i^T)T$ tokens, we give them $\beta_i(1+\delta_i^T)T$ tokens where $\beta_i$ is a parameter.
    \item
          Instead of always choosing the allocation probabilities $p_i^S$ in a very specific way as in \cref{thm:worst_case_interim_allocation_probabilities}, let the $p_i^S$'s be parameters.
\end{enumerate}
We formally give this parameterized mechanism in \cref{alg:general_mechanism}.

\floatname{algorithm}{Mechanism}
\begin{algorithm}
    \caption{Generalized Budgeted Border}
    \label{alg:general_mechanism}
    \begin{algorithmic}
        \REQUIRE Per-round budgets $\beta_i\in[0,1]$ and slack parameters $\delta_i^T$.
        \REQUIRE Allocation probabilities $(p_i^S)_{\substack{i\in[n]\\S\subseteq[n]}}$ where $(p_i^S)_{i\in [n]}$ is a probability distribution over agents $i\in S$.
        \STATE Endow each agent with $B_i[1] = \beta_i(1+\delta_i^T)T$ bid tokens.
        \FOR{$t=1,2,\dots,T$}
            \STATE Agents submit bids $b_i^t\in\{0,1\}$.
            \STATE Budgets are enforced: $b_i^t\gets 0$ for each $i$ such that $B_i[t]\leq 0$.
            \STATE Let $S[t] = \{i:b_i^t=1\}$ be the set of bidding agents.
            \STATE A winner $i^t$ is randomly selected from $S[t]$ according to $(p_i^{S[t]})_{i\in[n]}$.
            \STATE Budgets get updated: $B_i[t+1] = B_i[t] - b_i^t$ for every agent $i$.
        \ENDFOR
    \end{algorithmic}
\end{algorithm}

Regardless of choice of $\beta_i$'s and $p_i^S$'s, there is a similar equilibrium as in \cref{prop:approximate_nash} and similar equilibrium utility guarantee as in \cref{thm:nash}.
Before stating this formally, let us define some useful terminology.
First, we generalize the notion of ideal utility.
Adopting terminology from \cite{fikioris2023online}, the \textit{$\beta$-ideal utility} of an agent $i$ is the maximum expected utility they can obtain from a single round if they can obtain the item simply by requesting it, but they are only allowed to request it with probability at most $\beta$.
Formally, the (per-round) $\beta$-ideal utility is the following.
\begin{definition}[$\beta$-ideal utility, $\beta$-ideal utility probability function]
    \label{def:ideal_utility}
    Agent $i$'s $\beta$-ideal utility is the value of the following maximization problem over measurable $\rho_i:[0,\infty)\to[0,1]$:
    \begin{equation*}
        \label{eq:idealutility}
        \begin{split}
            v_i^\star(\beta) = \max\,\E_{V_i\sim\mathcal F_i}[V_i\rho_i(V_i)] \quad\text{subject to}\quad \E_{V_i\sim\mathcal F_i}[\rho_i(V_i)]\leq \beta.
        \end{split}
    \end{equation*}
    We call the optimal solution $\rho_i$ to the above optimization problem agent $i$'s \textit{$\beta$-ideal utility probability function}, which we denote by $(\rho_i^\beta)^\star$.
\end{definition}
Note that the ideal utility, as we previously defined it in \cref{def:alpha_ideal_utility}, is just the $\alpha_i$-ideal utility.

Recall the notion of $\beta$-aggressive strategy from \cref{def:beta_aggressive}, where an agent bids when her value is in the top $\beta$-quantile of her value distribution (subject to her budget constraint). We can also rephrase a $\beta$-aggressive strategy as bidding with probability $(\rho_i^\beta)^\star(V_i[t])$ (subject to the budget constraint).

For the same reason as in \BRB, each agent $i$ playing a $\beta_i$-aggressive strategy is an approximate Nash equilibrium. We formally prove the following in \cref{ssec:appendix_equilibrium_proofs}.
\begin{proposition}
\label{prop:beta_approximate_nash}
    By setting $\delta_i^T = \sqrt{\frac{6\ln T}{\beta_i T}}$, no matter the choice of allocation probabilities $p_i^S$, each player $i$ playing a $\beta_i$-aggressive strategy is an $O\left(\sqrt{\frac{\log T}{T}}\right)$-approximate Nash equilibrium.
\end{proposition}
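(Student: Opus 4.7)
The plan is to mirror the proof sketch of Proposition \ref{prop:approximate_nash} (which rests on Lemma \ref{lem:knapsackproblemovertime}), replacing $\alpha_i$ by $\beta_i$ throughout. Fix agent $i$ and assume every other agent $j\ne i$ follows a $\beta_j$-aggressive strategy; we must show that unilaterally deviating gains agent $i$ at most $O(\sqrt{\log T/T})$ in per-round expected utility.

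First I would establish a clean event $\mathcal E$ on which no agent $j\ne i$ exhausts her budget. In the absence of the budget cap, a $\beta_j$-aggressive strategy has agent $j$ bidding i.i.d.\ $\mathrm{Bernoulli}(\beta_j)$ over the $T$ rounds, with mean $\beta_j T$. Her budget is $\beta_j(1+\delta_j^T)T$ with $\delta_j^T=\sqrt{6\ln T/(\beta_j T)}$, so the multiplicative Chernoff bound gives
\begin{equation*}
    \Pr(\text{agent } j \text{ overruns budget})\le \exp\bigl(-(\delta_j^T)^2\beta_j T/3\bigr)=\exp(-2\ln T)=T^{-2}.
\end{equation*}
A union bound yields $\Pr(\mathcal E)\ge 1-(n-1)/T^2$. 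On $\mathcal E$, other agents' bids truly are independent $\mathrm{Bernoulli}(\beta_j)$'s from agent $i$'s perspective, so whenever agent $i$ bids in round $t$ she wins with exactly the interim probability
\begin{equation*}
    p_i=\sum_{S\ni i}p_i^S\prod_{j\in S\setminus\{i\}}\beta_j\prod_{j\notin S}(1-\beta_j),
\end{equation*}
independently of $t$ and of her history.

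Second, on $\mathcal E$ agent $i$'s decision problem decouples into a budgeted stochastic knapsack: choose $b_i^t\in\{0,1\}$ with $\sum_t b_i^t\le \beta_i(1+\delta_i^T)T$ to maximize $p_i\sum_t\mathbb E[V_i[t]b_i^t]$. By the same interchange argument as in Lemma \ref{lem:knapsackproblemovertime}, the optimum equals $p_iv_i^\star\bigl(\beta_i(1+\delta_i^T)\bigr)$. Since $\beta\mapsto v_i^\star(\beta)$ is concave with slope at every point bounded by $v_{\max}:=\mathrm{esssup}\,V_i$,
\begin{equation*}
    v_i^\star\bigl(\beta_i(1+\delta_i^T)\bigr)-v_i^\star(\beta_i)\le v_{\max}\beta_i\delta_i^T=O\bigl(\sqrt{\log T/T}\bigr),
\end{equation*}
so any deviation by agent $i$ achieves per-round utility at most $p_iv_i^\star(\beta_i)+O(\sqrt{\log T/T})$ on $\mathcal E$; off $\mathcal E$ the contribution to the expected per-round utility is at most $v_{\max}\cdot O(n/T^2)$, which is lower order.

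Third, I would check that the $\beta_i$-aggressive strategy itself realizes per-round utility at least $p_iv_i^\star(\beta_i)-O(\sqrt{\log T/T})$. Applying the same Chernoff argument to agent $i$'s own bids shows she exhausts her budget with probability at most $T^{-2}$; on the intersection with $\mathcal E$ she bids in her top $\beta_i$-quantile in every round and wins each such round independently with probability $p_i$, yielding per-round expected utility exactly $p_iv_i^\star(\beta_i)$, while the bad event costs $O(v_{\max} n/T^2)$. Chaining the three steps bounds the deviation gain by $O(\sqrt{\log T/T})$, proving the approximate equilibrium. The main obstacle is really only bookkeeping: verifying that the specific choice $\delta_i^T=\sqrt{6\ln T/(\beta_i T)}$ produces a $T^{-2}$ Chernoff tail (which is why the constant $6$ appears) and confirming that the concavity slack in $v_i^\star$ combines additively with the Chernoff failure probabilities to give a single clean $O(\sqrt{\log T/T})$ bound independent of the $p_i^S$.
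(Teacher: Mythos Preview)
Your proposal is correct and follows essentially the same route as the paper: Chernoff bounds to show budgets are not exhausted (the paper's Lemma~\ref{lem:dontrunoutofmoney}), the knapsack argument of Lemma~\ref{lem:knapsackproblemovertime}/\ref{lem:generalized_knapsack_problem_over_time} to upper-bound any deviation by $p_i v_i^\star(\beta_i(1+\delta_i^T))$, and concavity of $v_i^\star$ (the paper's Lemma~\ref{lem:idealutilitylipschitz}) to control the slack. Two cosmetic points: the paper phrases the comparison via an ``imaginary game'' with no budget enforcement rather than conditioning on $\mathcal E$, which sidesteps the subtlety that conditioning on $\mathcal E$ slightly alters the Bernoulli law; and the knapsack argument gives only an upper bound under the hard budget constraint, so ``the optimum equals $p_i v_i^\star(\beta_i(1+\delta_i^T))$'' should read ``is at most'' --- but the inequality goes the right way for your purposes.
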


To discuss the utility guarantee at this approximate equilibrium, we define interim allocation probabilities similar to \cref{def:alpha_interim_allocation_probability}, except that the interim allocation probabilities $p_i$ depend on the per-round budgets $\beta_i$. The previous definition of interim allocation probability is a special case where $\beta_i=\alpha_i$.
\begin{definition}
\label{def:beta_interim_allocation_probability}
Fix per-round bid rates $\beta_i$.
Allocation probabilities $p_i^S$ induce interim allocation probabilities $p_i$ if $p_i$ is the probability that agent $i$ wins the item in a given round conditioned on agent $i$ bidding that round and agents $j\neq i$ bidding independently with probability $\beta_j$ each. Formally, this means that
\begin{equation*}
		\label{eq:beta_interim_allocation_probability}
		p_i = \sum_{S\subseteq[n]:i\in S}p_i^S\left(\prod_{j\in S\setminus\{i\}}\beta_j\right)\left(\prod_{j\in[n]\setminus S}(1-\beta_j)\right).
	\end{equation*}
\end{definition}

We now obtain a similar equilibrium utility guarantee as in \cref{thm:nash} that we prove formally in \cref{ssec:appendix_equilibrium_proofs}.
\begin{theorem}
    \label{thm:general_nash}
    By setting $\delta_i^T = \sqrt{\frac{6\ln T}{\beta_i T}}$, at the approximate equilibrium where each player $i$ plays a $\beta_i$-aggressive strategy, with probability at least $1 - O(1/T^2)$, player $i$ gets utility
    \begin{equation*}
        \frac1T\sum_{t=1}^T U_i[t] \geq p_iv_i^\star(\beta_i) - O\left(\sqrt{\frac{\log T}{T}}\right).
    \end{equation*}
\end{theorem}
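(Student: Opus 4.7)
The plan is to adapt the argument from \cref{thm:nash} (which handles the case $\beta_i = \alpha_i$) to this generalized setting. Fix agent $i$, and condition on the event $\mathcal{E}$ that no agent runs out of budget throughout the $T$ rounds. On $\mathcal{E}$, the budget constraints are inactive, so every agent $j$ playing a $\beta_j$-aggressive strategy bids in each round independently with probability $\beta_j$ (independence across rounds follows from i.i.d.\ values, and across agents from independence of valuations). Hence conditional on agent $i$ bidding in round $t$, the set of other bidders $S[t] \setminus \{i\}$ follows exactly the product distribution in \cref{def:beta_interim_allocation_probability}, so agent $i$ wins in that round with probability exactly $p_i$.

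The first main step is to bound $\Pr[\mathcal{E}^c]$. For each agent $j$, the number of rounds in which she would like to bid under her $\beta_j$-aggressive strategy is a sum of $T$ independent Bernoulli($\beta_j$) variables, with mean $\beta_j T$. A Chernoff bound gives that this sum exceeds $\beta_j(1+\delta_j^T)T$ with probability at most $\exp(-\delta_j^T{}^2 \beta_j T / 3) = \exp(-2 \ln T) = 1/T^2$, using the choice $\delta_j^T = \sqrt{6 \ln T / (\beta_j T)}$. Union-bounding over the $n$ agents yields $\Pr[\mathcal{E}^c] = O(n/T^2) = O(1/T^2)$.

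The second main step is to turn the per-round expected utility into a high-probability bound on the time-average. Define $Y_t = V_i[t] \cdot \pmb{1}\{V_i[t] \text{ in top } \beta_i\text{-quantile}\} \cdot \pmb{1}\{\text{agent } i \text{ wins at } t\}$; on $\mathcal{E}$, these are i.i.d.\ with mean $p_i v_i^\star(\beta_i)$ by the argument above and by definition of $v_i^\star(\beta_i)$ as $\E[V_i (\rho_i^{\beta_i})^\star(V_i)]$. Since values are bounded, Hoeffding's inequality (or more carefully, an Azuma argument to handle the coupling with $\mathcal{E}$) gives
\begin{equation*}
    \Pr\!\left[\frac{1}{T}\sum_{t=1}^T Y_t < p_i v_i^\star(\beta_i) - O\!\left(\sqrt{\tfrac{\log T}{T}}\right) \,\Big|\, \mathcal{E}\right] = O(1/T^2).
\end{equation*}
Combining with the bound on $\Pr[\mathcal{E}^c]$ via a union bound yields the claimed high-probability guarantee.

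I expect the main obstacle to be handling the conditioning on $\mathcal{E}$ cleanly, since the event $\mathcal{E}$ is not independent of the realizations of the $Y_t$'s (a round with high values contributes both to budget exhaustion risk and to utility). The cleanest fix is to couple to an unrestricted process in which agents bid Bernoulli($\beta_j$) without any budget cap, apply Hoeffding concentration there, and then observe that on $\mathcal{E}$ the two processes coincide round-by-round. Everything else is essentially a direct lift of the arguments already developed for \cref{lem:knapsackproblemovertime} and \cref{prop:approximate_nash}, with $\alpha_i$ replaced by $\beta_i$ throughout and $v_i^\star$ replaced by $v_i^\star(\beta_i)$.
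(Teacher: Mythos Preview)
Your proposal is correct and mirrors the paper's argument almost exactly: the paper introduces the same ``imaginary game'' (your unrestricted Bernoulli($\beta_j$) process) to decouple the concentration step from the budget event, applies the same Chernoff bound to get $\Pr[\mathcal{E}^c]=O(1/T^2)$ (their \cref{lem:dontrunoutofmoney}), and then uses Hoeffding on the i.i.d.\ imaginary utilities $\tilde U_i[t]$ with mean $p_i v_i^\star(\beta_i)$ (their \cref{lem:high_probability_utility}), finishing by observing that $U_i[t]=\tilde U_i[t]$ on $\mathcal{E}$. Your anticipated obstacle and its fix are precisely the device the paper uses.
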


As before, Border's Theorem gives a clean characterization of which numbers $p_i\in[0,1]$ can actually be induced by some allocation probabilities $p_i^S$.
Specifically, the below theorem follows from Border's Theorem and is analogous to \cref{thm:alpha_border}, which we explain in \cref{sec:border_details}.
\begin{theorem}
    \label{thm:beta_border}
    Given per-round budgets $\beta_i$ and probabilities $p_i$, there exist allocation probabilities $p_i^S$ such that the $p_i$'s are interim allocation probabilities induced by the allocation probabilities $p_i^S$ if and only if
    \begin{equation}
        \label{eq:borderscriterionnowaste}
        \sum_{i\in [n]}p_i\beta_i = 1 - \prod_{i\in[n]}(1-\beta_i)
    \end{equation}
    and for any $I\subseteq[n]$,
    \begin{equation}
        \label{eq:borderscriterion}
        \sum_{i\in I}p_i\beta_i\leq 1- \prod_{i\in I}(1-\beta_i).
    \end{equation}
\end{theorem}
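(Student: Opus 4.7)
The plan is to adapt the flow-network proof of \cref{thm:alpha_border} to the setting where each agent $j$ bids independently with probability $\beta_j$ rather than $\alpha_j$. Concretely, I build the analog of \cref{fig:introductory_border_flow_network}: a source $s$, nodes $u_S$ for each nonempty $S\subseteq [n]$, nodes $v_i$ for each agent $i$, and a sink $t$; set capacities
\begin{equation*}
c(s,u_S)=\Pr_{S'\sim\mu}(S'=S)=\prod_{j\in S}\beta_j\prod_{j\notin S}(1-\beta_j),\quad c(u_S,v_i)=\infty\ \text{if }i\in S,\quad c(v_i,t)=p_i\beta_i,
\end{equation*}
where $\mu$ is the product distribution in which each $j$ belongs to $S'$ independently with probability $\beta_j$. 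The key identification: a flow in which $f(u_S,v_i)=p_i^S\Pr(S'=S)$ is feasible and of value equal to the capacity of the $s$-side cut, $\Pr(S'\neq\emptyset)=1-\prod_i(1-\beta_i)$, exactly when the $p_i^S$ are nonnegative, sum to $1$ on each nonempty $S$ (the conservation at $u_S$), and induce the prescribed interim probabilities $p_i$ (the conservation at $v_i$ combined with saturation of the $v_i\to t$ edge). So the task reduces to characterizing when the max flow equals $1-\prod_i(1-\beta_i)$.

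For necessity, note the $s$-side cut has capacity $1-\prod_i(1-\beta_i)$ while the $t$-side cut has capacity $\sum_i p_i\beta_i$, so a flow of the desired value exists only if these two numbers are equal, giving the no-waste condition \eqref{eq:borderscriterionnowaste}. For each $I\subseteq [n]$, the cut $(A,B)$ with $v_i\in B$ iff $i\in I$, and $u_S\in B$ iff $S\cap I\neq\emptyset$ (forced by the infinite-capacity edges) has capacity
\begin{equation*}
\sum_{S:S\cap I\neq\emptyset}\Pr(S'=S)\;+\;\sum_{i\notin I}p_i\beta_i\;=\;\bigl(1-\prod_{i\in I}(1-\beta_i)\bigr)\;+\;\sum_{i\notin I}p_i\beta_i,
\end{equation*}
and asking that this be at least the target flow value $\sum_i p_i\beta_i$ collapses (after cancellation) exactly to Border's inequality \eqref{eq:borderscriterion}.

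For sufficiency, I claim every $s$-$t$ cut $(A,B)$ with finite capacity is of the form above. Indeed, any $u_S$ with $S\cap I\neq\emptyset$ must lie in $B$ (else an infinite edge crosses the cut), and any $u_S$ with $S\cap I=\emptyset$ can be moved into $A$ without increasing the cut capacity, since all its outgoing edges terminate at $v_i\in A$. Thus every cut is indexed by an $I\subseteq[n]$ with the capacity formula above, so the Border inequalities \eqref{eq:borderscriterion} for all $I$ plus the no-waste equality guarantee every cut has capacity at least $1-\prod_i(1-\beta_i)$. By max-flow/min-cut, a feasible flow of value $1-\prod_i(1-\beta_i)$ exists, and decomposing it gives the desired $p_i^S$.

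I do not expect a significant technical obstacle here; the argument is a direct parameter substitution $\alpha_i\leftrightarrow\beta_i$ in the flow-network proof of \cref{thm:alpha_border} as presented in \cref{ssec:border} and \cref{sec:border_details}. The only point that warrants care is the min-cut characterization step, specifically verifying that any $u_S$ with $S\cap I=\emptyset$ can safely be placed on the $s$-side without loss; but this is immediate because such a node has no outgoing edges to any $v_i\in B$ (as $i\in S$ forces $i\notin I$, hence $v_i\in A$). Once this observation is in hand, the two stated conditions are precisely the saturation and cut-capacity constraints imposed by max-flow/min-cut, completing both directions.
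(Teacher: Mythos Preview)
Your proposal is correct and essentially identical to the paper's flow-network proof in \cref{sec:borderscriterionflowproof}: the same network construction, the same identification of flows with allocation probabilities, the same min-cut characterization indexed by $I=\{i:v_i\in B\}$, and the same computation showing the cut capacity condition is exactly \eqref{eq:borderscriterion}. The paper also records an alternative proof by direct reduction to the general Border's theorem (\cref{thm:original_border}), but your flow argument matches the self-contained proof verbatim up to presentation.
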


In general, the principal choose $\beta_i$'s and $p_i$'s in any way they like that satisfies \cref{thm:beta_border}.
Let's give a particular suggestion.
Suppose we set the $\beta_i$'s to be proportional to the fair shares $\alpha_i$.
We can obtain the following using \cref{thm:beta_border}, which has a very similar proof to that of \cref{thm:worst_case_interim_allocation_probabilities} that we formally give in \cref{ssec:interim_allocation_probability_feasibility_proofs}.
\begin{corollary}
    \label{cor:proportional_budgets}
    Let $0 < \gamma \leq \min_{i\in[n]}1/\alpha_i$ and let $\beta_i = \gamma\alpha_i$.
    There exist allocation probabilities $(p_i^S)$ that induce the interim allocation probabilities
    \begin{equation*}
        p_i = \frac{1-\prod_{j\in [n]}(1-\gamma\alpha_j)}{\gamma}.
    \end{equation*}
    These $p_i$ satisfy
    \begin{equation*}
        p_i \geq \frac{1-e^{-\gamma}}{\gamma}.
    \end{equation*}
\end{corollary}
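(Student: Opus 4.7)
The plan is to verify that the proposed $p_i$'s satisfy the two conditions of \cref{thm:beta_border}, which will establish existence of the claimed allocation probabilities, and then derive the lower bound from a standard exponential inequality. The structure mirrors the proof of \cref{thm:worst_case_interim_allocation_probabilities}, which is just the special case $\gamma = 1$.

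For the no-waste condition \eqref{eq:borderscriterionnowaste}, I would observe that with $\beta_i = \gamma \alpha_i$ and the proposed $p_i$, a direct computation gives
\begin{equation*}
\sum_{i \in [n]} p_i \beta_i \;=\; \frac{1 - \prod_{j \in [n]}(1 - \gamma \alpha_j)}{\gamma} \cdot \gamma \sum_{i \in [n]} \alpha_i \;=\; 1 - \prod_{i \in [n]}(1 - \beta_i),
\end{equation*}
using $\sum_i \alpha_i = 1$.

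For the inequality \eqref{eq:borderscriterion}, I would recast the condition in terms of the auxiliary quantity $a_I := \frac{1 - \prod_{i \in I}(1 - \beta_i)}{\sum_{i \in I} \beta_i}$ (with $a_\emptyset$ interpreted as $+\infty$). For nonempty $I$, \eqref{eq:borderscriterion} becomes $a_{[n]} \leq a_I$, so it suffices to show $a_I$ is nonincreasing in $I$ under the subset order. This is exactly the Schur-like monotonicity already carried out in the proof of \cref{thm:worst_case_interim_allocation_probabilities}: the algebraic manipulation of $a_I - a_{I \setminus \{k\}}$ goes through verbatim with $\beta_i = \gamma \alpha_i$ in place of $\alpha_i$, and the final sign follows from the identities $1 - x \leq e^{-x}$ and $e^{-y}(1+y) \leq 1$ applied to $y = \sum_{i \in I \setminus \{k\}} \beta_i$. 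The hypothesis $\gamma \leq \min_i 1/\alpha_i$ ensures each $\beta_i \in [0,1]$, so these inequalities apply.

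Finally, for the lower bound on $p_i$, I would apply $1 - x \leq e^{-x}$ termwise to get
\begin{equation*}
\prod_{j \in [n]}(1 - \gamma \alpha_j) \;\leq\; \prod_{j \in [n]} e^{-\gamma \alpha_j} \;=\; e^{-\gamma \sum_j \alpha_j} \;=\; e^{-\gamma},
\end{equation*}
yielding $p_i \geq \frac{1 - e^{-\gamma}}{\gamma}$. The only nontrivial step is the monotonicity of $a_I$, but since this computation is identical (up to the substitution $\alpha_i \to \gamma\alpha_i$) to the one in the proof of \cref{thm:worst_case_interim_allocation_probabilities}, no new technical ideas are required; the bulk of the work is bookkeeping.
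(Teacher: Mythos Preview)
Your proposal is correct and follows essentially the same approach as the paper: verify the two conditions of \cref{thm:beta_border} by reducing to the monotonicity of $a_I$ already established in the proof of \cref{thm:worst_case_interim_allocation_probabilities} (with $\beta_i$ in place of $\alpha_i$), then get the lower bound via $1-x\le e^{-x}$. You even supply a bit more detail than the paper, explicitly checking the no-waste condition, noting why $\gamma\le\min_i 1/\alpha_i$ keeps each $\beta_i\in[0,1]$, and spelling out the final exponential bound.
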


Using the above $p_i^S$'s in our mechanism, at the equilibrium, each agent $i$ gets expected per-round utility at least $\frac{1-e^{-\gamma}}{\gamma}v^\star(\gamma\alpha_i)$.
The function $\frac{1-e^{-\gamma}}{\gamma}$ is decreasing in $\gamma$, while $v^\star(\gamma\alpha_i)$ is nondecreasing in $\gamma$, so if the principal wants to use the above corollary, they could set $\gamma$ appropriately based on the agents' value distributions to maximize the $\frac{1-e^{-\gamma}}{\gamma}v^\star(\gamma\alpha_i)$.
Setting $\gamma=1$ recovers the old mechanism \BRB.
However, other choices of $\gamma$ may be better if the agents had specific value distributions.
For example, the below corollary proved in \cref{ssec:interim_allocation_probability_feasibility_proofs} shows that in the symmetric agent case with $\mathrm{Uniform}([0,1])$ value distributions, we can set $\gamma$ such that each agent gets almost all of their $\alpha_i$-ideal utility, a result similar to the equilibrium guarantee in \cite{chido}.
\begin{corollary}
    \label{cor:uniform_distribution}
    Suppose each agent has fair share $\alpha_i=\frac1n$ and a $\mathrm{Uniform}([0,1])$ value distribution.
    Then, by setting $\beta_i = \gamma\alpha_i$ where $\gamma = \Theta(\log n)$, there exist allocation probabilities $(p_i^S)$ inducing interim allocation probabilities $p_i$ such that
    \begin{equation*}
        p_iv_i^\star(\beta_i) \geq \left(1 - O\left(\frac{\log n}{n}\right)\right)v_i^\star.
    \end{equation*}
\end{corollary}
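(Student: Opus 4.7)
The plan is to compute everything explicitly using the Uniform distribution structure and then invoke Corollary~\ref{cor:proportional_budgets} as a black box; the whole argument is essentially a one-line bookkeeping exercise once one fixes the right $\gamma$.

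First I would compute $v_i^\star(\beta)$ exactly for the Uniform$([0,1])$ distribution. Since the $\beta$-ideal utility is realized by bidding whenever $V_i$ lies in the top $\beta$-quantile, i.e., when $V_i \geq 1-\beta$, a direct integration gives $v_i^\star(\beta) = \int_{1-\beta}^1 v\,dv = \beta - \beta^2/2$. In particular, with $\alpha_i = 1/n$ and $\beta_i = \gamma/n$, this yields the two key quantities $v_i^\star = \tfrac{1}{n} - \tfrac{1}{2n^2}$ and $v_i^\star(\beta_i) = \tfrac{\gamma}{n} - \tfrac{\gamma^2}{2n^2}$.

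Next I would invoke Corollary~\ref{cor:proportional_budgets} with the given $\beta_i = \gamma \alpha_i$ to obtain allocation probabilities $(p_i^S)$ with induced interim allocation probabilities satisfying $p_i \geq (1-e^{-\gamma})/\gamma$. Multiplying by the expression for $v_i^\star(\beta_i)$ from the previous step gives
\begin{equation*}
p_i \, v_i^\star(\beta_i) \;\geq\; \frac{1-e^{-\gamma}}{\gamma}\left(\frac{\gamma}{n} - \frac{\gamma^2}{2n^2}\right) \;=\; \frac{1-e^{-\gamma}}{n}\left(1 - \frac{\gamma}{2n}\right).
\end{equation*}

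Finally, dividing by $v_i^\star = \tfrac{1}{n}(1 - \tfrac{1}{2n})$ gives the ratio
\begin{equation*}
\frac{p_i \, v_i^\star(\beta_i)}{v_i^\star} \;\geq\; \frac{(1-e^{-\gamma})\,(1-\gamma/(2n))}{1-1/(2n)}.
\end{equation*}
Setting $\gamma = \log n$ makes $1-e^{-\gamma} = 1 - 1/n$, and a short Taylor expansion shows the right-hand side is $1 - O(\log n / n)$, as desired.

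There is no real obstacle here—the entire argument is assembly of three explicit formulas. The only mildly delicate point is the choice of $\gamma$: increasing $\gamma$ brings $v_i^\star(\beta_i)$ closer to the unconstrained expectation of $V_i$ (good) but shrinks the guaranteed Nash fraction $(1-e^{-\gamma})/\gamma$ (bad), and $\gamma = \Theta(\log n)$ is the scaling that balances the resulting $1/n^\gamma$ and $\gamma/n$ error terms, which is why the bound comes out at $1 - O(\log n / n)$ rather than $1 - O(1/n)$.
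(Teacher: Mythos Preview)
Your proposal is correct and follows essentially the same approach as the paper: compute $v_i^\star(\beta)=\beta-\beta^2/2$ explicitly for the uniform distribution, invoke Corollary~\ref{cor:proportional_budgets} to get $p_i\ge(1-e^{-\gamma})/\gamma$, and substitute $\alpha_i=1/n$, $\gamma=\Theta(\log n)$. The only (inconsequential) slip is in your closing remark, where the error term from the interim-allocation side is $e^{-\gamma}$, not $1/n^{\gamma}$; with $\gamma=\log n$ this is $1/n$, which is indeed dominated by the $\gamma/n$ term.
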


\section{Any Time Guarantees and Exact Nash Equilibrium in an Infinite Time Horizon}
\label{sec:anytime}
\subsection{Any Time Guarantees}
We can obtain any time utility guarantees by enforcing the budget constraint at any time, i.e., agent $i$ can only bid at most $\alpha_i(1+\delta_i^t)t$ times by round $t$ instead of just at end time $T$. We formally describe this mechanism in \cref{alg:anytime_mechanism} (based off of \generalmechanism in \cref{sec:beta_mechanism} with general per-round budgets $\beta_i$).
\begin{algorithm}
    \caption{Any Time Budgeted Border}
    \label{alg:anytime_mechanism}
    \begin{algorithmic}
        \REQUIRE Per-round budgets $\beta_i\in[0,1]$ and slack parameters $\delta_i^t$.
        \REQUIRE Allocation probabilities $(p_i^S)_{\substack{i\in[n]\\S\subseteq[n]}}$ where $(p_i^S)_{i\in [n]}$ is a probability distribution over agents $i\in S$.
        \FOR{$t=1,2,\dots,T$}
            \STATE Agents submit bids $b_i^t\in\{0,1\}$.
            \STATE Budgets are enforced: $b_i^t\gets 0$ for each $i$ such that $\sum_{s=1}^{t-1} b_i^s\geq \beta_i(1+\delta_i^t)t$.
            \STATE Let $S[t] = \{i:b_i^t=1\}$ be the set of bidding agents.
            \STATE A winner $i^t$ is randomly selected from $S[t]$ according to $(p_i^{S[t]})_{i\in[n]}$.
            \STATE Budgets get updated: $B_i[t+1] = B_i[t] - b_i^t$ for every agent $i$.
        \ENDFOR
    \end{algorithmic}
\end{algorithm}

Formally, we obtain the following theorems, proved in \cref{ssec:appendix_anytime_proofs}.
\begin{theorem}
\label{thm:anytime_equilibrium}
By running \anytimemechanism with $\delta_i^t = \sqrt{\frac{6\ln t}{\beta_i t}}$, each agent playing a $\beta_i$-aggressive strategy is an $O\left(\sqrt{\frac{\log T}{T}}\right)$-approximate equilibrium. At this equilibrium, at each time $t$, with probability at least $1-O(1/\sqrt t)$, player $i$ gets utility
\begin{equation*}
    \frac1t\sum_{s=1}^t U_i[s] \geq p_iv_i^\star(\beta_i) - O\left(\sqrt{\frac{\log t}{t}}\right).
\end{equation*}
\end{theorem}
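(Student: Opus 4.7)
The plan is to adapt the fixed-horizon arguments of \cref{prop:beta_approximate_nash} and \cref{thm:general_nash} to the anytime setting, where the budget constraint $\beta_i(1+\delta_i^t)t$ is now enforced at every round $t$ rather than only at the terminal round $T$. The only genuinely new ingredient is uniform-in-time concentration of each agent's cumulative bid count; once that is in hand, both claims follow from the same knapsack-plus-martingale structure as before.

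For the approximate equilibrium, I fix an agent $i$ and condition on the other agents playing $\beta_j$-aggressive. Their bids are then i.i.d.\ $\mathrm{Bernoulli}(\beta_j)$ across rounds (up to the low-probability event that some other budget is exhausted) and independent of $i$'s bids, so the probability that agent $i$ wins a round she bids in equals her interim allocation probability $p_i$ from \cref{def:beta_interim_allocation_probability}. Repeating the knapsack reduction of \cref{lem:knapsackproblemovertime}, the best response under the relaxed expected-budget constraint $\sum_{s \leq t'} \mathbb{E}[b_i^s] \leq \beta_i t'$ imposed at every $t' \leq T$ is to bid in the top $\beta_i$-quantile of values, i.e., the $\beta_i$-aggressive strategy. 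Standard multiplicative Chernoff bounds with slack $\delta_i^t = \sqrt{6\ln t/(\beta_i t)}$ then translate the relaxed problem into the anytime hard-budget game at an $O(\sqrt{\log T/T})$ additive loss, yielding the approximate-equilibrium claim.

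For the utility bound at a fixed $t$, let $Y_s = \mathbf{1}\{V_i[s]\text{ is in the top }\beta_i\text{-quantile}\}$ and let $b_i^s$ be agent $i$'s actual bid under $\beta_i$-aggressive play. On the good event that the anytime budget is never exhausted on $[1,t]$, one has $b_i^s = Y_s$ throughout, so
\[
\sum_{s=1}^t U_i[s] \;=\; \sum_{s=1}^t V_i[s]\, Y_s\, X_i[s],
\]
whose expectation equals $p_i\, v_i^\star(\beta_i)\, t$ since $(Y_s,X_i[s])$ are i.i.d.\ across rounds with $\mathbb{E}[X_i[s]\mid Y_s=1]=p_i$ and $\mathbb{E}[V_i[s] Y_s] = v_i^\star(\beta_i)$. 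A Freedman/Azuma bound on the corresponding bounded martingale difference sequence gives deviation $O(\sqrt{t\log t})$ with high probability, matching the desired per-round error $O(\sqrt{\log t/t})$.

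The main obstacle is controlling the probability that the anytime budget \emph{is} exhausted somewhere in $[1, t]$, since the per-step Chernoff tail $O(1/s^2)$ does not union-bound usefully over $s \leq t$. My plan is a dyadic-peeling argument: partition $[1, t]$ into scales $[2^k, 2^{k+1})$ and apply Doob's maximal inequality together with a multiplicative Chernoff bound within each block, exploiting that the slack $\beta_i\delta_i^s s = \sqrt{6\beta_i s \ln s}$ varies by only a constant factor inside a dyadic block so that one threshold suffices throughout. Summing the block-wise failure probabilities yields the $O(1/\sqrt t)$ slack claimed in the theorem (with logarithmic factors absorbed into the $O(\cdot)$), and combining with the martingale concentration above delivers the anytime utility bound at every fixed $t$, which together with the first two paragraphs completes the proof.
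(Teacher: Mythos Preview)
Your equilibrium sketch and the martingale bound for the per-round utility are fine and essentially match the paper. The gap is in the step you flag as the main obstacle.

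You propose controlling the event ``the anytime budget is never exhausted on $[1,t]$'' by dyadic peeling plus Doob's maximal inequality, and claim the block-wise failure probabilities sum to $O(1/\sqrt{t})$. They do not. For the block $[2^k,2^{k+1})$ the maximal Chernoff tail at threshold $\sqrt{6\beta_i 2^k\ln 2^k}$ is of order $1/2^{2k}$, so the sum over $k=0,\dots,\log_2 t$ is $O(1)$, dominated by the early blocks. This is unavoidable: at very small $s$ the slack $\beta_i\delta_i^s s=\sqrt{6\beta_i s\ln s}$ is $O(1)$ (and for $s=1$ it is zero), so with constant probability a $\beta_i$-aggressive agent hits the anytime cap in the first few rounds. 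No concentration argument can make the probability of ``never exhausted on $[1,t]$'' be $o(1)$.

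The paper sidesteps this by \emph{not} trying to control the early rounds at all. It union-bounds the per-step Chernoff tail $O(1/s^2)$ only over $s\in[\lceil\sqrt{t}\rceil,t]$, which already sums to $O(1/\sqrt{t})$ with no peeling or maximal inequality. The first $\lceil\sqrt{t}\rceil$ rounds are then simply dropped from the utility comparison: on the good event one has $U_i[s]=U_i[s,t]$ for all $s\ge\lceil\sqrt{t}\rceil$ (where $U_i[s,t]$ is the utility under only the time-$t$ budget constraint), and the discarded prefix contributes at most $\bar v\sqrt{t}/t$ per round, which is absorbed into the $O(\sqrt{\log t/t})$ error. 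So the fix is truncation, not sharper concentration; once you truncate at $\sqrt{t}$, a plain union bound already gives exactly the $O(1/\sqrt{t})$ you need.
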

(Here, $p_i$ is agent $i$'s interim allocation probability with per-round budgets $(\beta_k)$ as defined in \cref{def:beta_interim_allocation_probability}.)

\begin{theorem}
\label{thm:anytime_robustness}
    By running \anytimemechanism with $\beta_i = \alpha_i$, $\delta_i^t = \sqrt{\frac{6\ln t}{\alpha_i t}}$, and allocation probabilities as guaranteed by \cref{lem:specific_border_modification,lem:key_lemma}, if agent $i$ plays an $\alpha_i$-aggressive strategy, regardless of the behavior of other agents $j\neq i$, with probability $1 - O(1/\sqrt t)$, at each time $t$, agent $i$ will obtain utility
    \begin{equation*}
        \frac1t\sum_{s=1}^t U_i[s] \geq \left(\frac12 + \frac12\alpha_i^2\right)v_i^\star - O\left(\sqrt{\frac{\log t}{t}}\right).
    \end{equation*}
\end{theorem}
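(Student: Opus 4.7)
The plan is to adapt the fixed-horizon robustness argument underlying \cref{thm:robustness_guarantee} (namely \cref{lem:bangforbuck} applied to the allocation probabilities delivered by \cref{lem:specific_border_modification,lem:key_lemma}) to the anytime setting. The key conceptual observation is that the per-round budget enforcement in \cref{alg:anytime_mechanism} gives uniform-in-time control of cumulative bid counts at every intermediate round $s \leq t$, which is precisely the control we previously used only at the terminal time $T$. The allocation probabilities $p_i^S$ are those guaranteed by \cref{lem:specific_border_modification,lem:key_lemma}, so they already satisfy $p_j^{\{i,j\}} \leq (1+\alpha_i)/2 =: \bar p$ for every $j \neq i$; the anytime mechanism uses exactly the same $p_i^S$ as \BRB, only the budget enforcement changes.

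First I would establish three ingredients that hold uniformly for $s \leq t$ via a union bound with $\delta_i^s = \sqrt{6 \ln s / (\alpha_i s)}$: (a) under the $\alpha_i$-aggressive strategy, agent $i$ attempts $\alpha_i s + O(\sqrt{s \ln s})$ bids by round $s$, so her hard per-round budget never binds and she places every intended bid; (b) the anytime enforcement deterministically bounds the total opponent bid count by round $s$ by $\sum_{j \ne i} \alpha_j (1 + \delta_j^s) s \leq (1 - \alpha_i) s + O(\sqrt{s \ln s})$; (c) the would-be bid value $\sum_{u \leq s} V_i[u] \cdot \mathbb{1}[\text{agent } i \text{ bids at } u]$ concentrates around $s \cdot v_i^\star$ within $O(\sqrt{s \ln s})$ by Hoeffding's inequality. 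I would then invoke the bang-per-buck decomposition from \cref{lem:bangforbuck}: every round in which agent $i$ is blocked either consumes exactly one opponent token in a doubleton round (blocking with probability $\leq \bar p$ by the allocation rule) or consumes at least two opponent tokens. Combining with (b) and (c), and noting that the adversary's scheduling at round $s$ is independent of the realized $V_i[s]$ conditional on agent $i$ bidding (so each blocked bid has the same conditional expected value $v_i^\star/\alpha_i$ as any other bid), the total value lost to blocks by time $t$ is at most $\bar p (1-\alpha_i) \cdot t \cdot v_i^\star + O(\sqrt{t \ln t})$. Subtracting from (c) and dividing by $t$ gives per-round utility at least $\bigl(1 - (1-\alpha_i)\bar p\bigr) v_i^\star - O(\sqrt{\ln t / t}) = (\tfrac12 + \tfrac12 \alpha_i^2) v_i^\star - O(\sqrt{\ln t / t})$, which is the claimed bound.

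The main technical obstacle, just as in the fixed-horizon proof of \cref{lem:bangforbuck}, is controlling the martingale for blocked value when opponents collude and adaptively schedule blocks based on the entire past history of bids and allocations; here we additionally need this control uniformly over all intermediate horizons $s \leq t$, so we cannot simply invoke the end-time concentration. Union-bounding Azuma--Hoeffding tail probabilities of order $s^{-c}$ over $s \leq t$ is what yields the weaker per-time failure probability $O(1/\sqrt{t})$ claimed in the conclusion, in place of the $O(1/T^2)$ of the fixed-horizon statement, and the slack $\delta_i^t = \sqrt{6 \ln t / (\alpha_i t)}$ is calibrated precisely so that the deterministic budget overhead and every stochastic concentration error are of matching order $O(\sqrt{t \ln t})$, keeping them all absorbed inside the $O(\sqrt{\ln t / t})$ per-round additive term.
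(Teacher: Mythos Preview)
Your overall strategy—re-running the bang-per-buck martingale from \cref{lem:bangforbuck} directly in the anytime mechanism—is sound, but the paper takes a shorter black-box route. It observes that on the event where agent~$i$'s per-round budget does not bind at any time $s\in[\lceil\sqrt t\rceil,t]$, the anytime game coincides round-by-round with the fixed-horizon game of horizon $t$, then simply invokes the robustness half of \cref{thm:robustness_guarantee} on that horizon, discarding the utility from rounds $s<\sqrt t$ (at most $\bar v\sqrt t/t=O(1/\sqrt t)$ per round) into the additive error. Your approach would reprove that lemma inline; the paper just calls it.

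Your proposal does have a genuine gap in the probability accounting. You write that union-bounding tail probabilities of order $s^{-c}$ over $s\le t$ yields failure probability $O(1/\sqrt t)$, but $\sum_{s=1}^{t}s^{-c}=\Theta(1)$ for every fixed $c>1$; it does not vanish with $t$. The paper's one nontrivial step here is precisely the $\sqrt t$ truncation: union-bound the budget-binding events only over $s\in[\lceil\sqrt t\rceil,t]$, so that $\sum_{s\ge\sqrt t}O(s^{-2})=O(1/\sqrt t)$, and absorb the first $\lceil\sqrt t\rceil$ rounds into the $O(\sqrt{\log t/t})$ term. Without this, your ingredient~(a)—that agent~$i$'s budget never binds at any $s\le t$—fails with constant probability and the rest of the argument does not go through. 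Note also that you only need (b) and (c) at the single terminal time $t$, not uniformly in $s$: the opponents' cumulative-bid bound at time $t$ alone is what feeds the bang-per-buck supermartingale, and the value concentration is a single Hoeffding bound at $t$.
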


\subsection{Exact Nash Equilibrium in an Infinite Time Horizon}
\label{sec:infinite}
We show that by running \anytimemechanism in an infinite time horizon, we can make the approximate Nash equilibrium in \cref{thm:nash} into an exact Nash equilibrium.
Each agent will maximize
\begin{equation*}
	\mathbb E\left[\liminf_{t\to\infty}\frac1t\sum_{s=1}^t U_i[s]\right] = \liminf_{t\to\infty}\frac1t\sum_{s=1}^t \mathbb E[U_i[s]],
\end{equation*}
observing uniform integrability for the swap of limit and expectation.

We prove the following theorem in \cref{ssec:appendix_anytime_proofs}.
\begin{theorem}
	\label{thm:exactnash}
	By running \anytimemechanism with $\delta_i^t = \sqrt{\frac{6\ln t}{\alpha_i t}}$, each player $i$ playing a $\beta_i$-aggressive strategy is a Nash equilibrium under which
	\begin{equation*}
		\frac1t\sum_{s=1}^t U_i[s] \to p_iv^\star(\beta_i)
	\end{equation*}
    almost surely as $t\to\infty$.
\end{theorem}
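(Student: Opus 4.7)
The plan is to establish two claims: (a) under the proposed profile, $\frac{1}{t}\sum_{s=1}^{t} U_i[s] \to p_i v^\star(\alpha_i)$ almost surely for every agent $i$; and (b) no unilateral deviation by $i$ yields a strictly larger expected limit payoff. Both claims hinge on the fact that each opponent's strategy depends only on their own values, so under the profile each $b_j^s$ is i.i.d.\ $\mathrm{Bernoulli}(\alpha_j)$ across $s$ (absent the budget), independent across $j$, and unaffected by agent $i$'s actions.

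For (a), apply a multiplicative Chernoff bound to $\sum_{s=1}^{t} b_i^s$: with $\delta_i^t = \sqrt{6\ln t/(\alpha_i t)}$, the probability that the budget cap $\alpha_i(1+\delta_i^t)t$ is violated at round $t$ is at most $\exp(-(\delta_i^t)^2\alpha_i t/3) = t^{-2}$, which is summable. Borel--Cantelli then implies the budget binds for at most finitely many rounds almost surely. On the remaining rounds, the tuple $(V_i[s], \{b_j^s\}_j, \text{mechanism randomness})$ is i.i.d., and $\mathbb E[U_i[s]] = p_i v^\star(\alpha_i)$ by \cref{def:beta_interim_allocation_probability} and \cref{def:ideal_utility}, so the strong law of large numbers yields $\frac{1}{t}\sum_{s\le t} U_i[s] \to p_i v^\star(\alpha_i)$ a.s., with convergence in expectation following from boundedness.

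For (b), fix agent $i$ and any deviation $\pi_i$; opponents still bid i.i.d.\ and independently of $i$'s past. Conditional on $i$ bidding in round $s$, her win probability is exactly $p_i$ by \cref{def:beta_interim_allocation_probability}, so
\begin{equation*}
    \frac{1}{t}\sum_{s=1}^{t} \mathbb E[U_i[s]] \;=\; p_i\cdot\frac{1}{t}\sum_{s=1}^{t}\mathbb E[V_i[s]b_i^s].
\end{equation*}
Writing $\rho_s(v) := \Pr(b_i^s = 1 \mid V_i[s]=v, \mathcal F_{s-1})$ and using $V_i[s] \perp \mathcal F_{s-1}$, \cref{def:ideal_utility} gives $\mathbb E[V_i[s]b_i^s \mid \mathcal F_{s-1}] \le v^\star\bigl(\mathbb E[b_i^s \mid \mathcal F_{s-1}]\bigr)$ almost surely. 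The anytime budget forces $\sum_{s\le t} b_i^s \le \alpha_i(1+\delta_i^t)t$ pathwise, hence $\frac{1}{t}\sum_{s\le t}\mathbb E[b_i^s] \le \alpha_i(1+\delta_i^t)$. Concavity of $v^\star$ on $[0,1]$ (immediate from its LP definition as a maximum over a convex constraint set) combined with Jensen's inequality yields
\begin{equation*}
    \frac{1}{t}\sum_{s=1}^{t}\mathbb E[V_i[s]b_i^s] \;\le\; v^\star\bigl(\tfrac{1}{t}\textstyle\sum_{s=1}^{t}\mathbb E[b_i^s]\bigr) \;\le\; v^\star(\alpha_i(1+\delta_i^t)).
\end{equation*}
Letting $t \to \infty$ and using continuity of $v^\star$ gives $\limsup_t \frac{1}{t}\sum_{s \le t} \mathbb E[U_i[s]] \le p_i v^\star(\alpha_i)$; Fatou's lemma on the bounded averages then yields $\mathbb E\bigl[\liminf_t \frac{1}{t}\sum_{s\le t} U_i[s]\bigr] \le p_i v^\star(\alpha_i)$, matching the payoff under the proposed profile.

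The main obstacle is the deviation bound in (b): since the instantaneous bid probability is not individually capped at $\alpha_i$, one cannot round-by-round invoke $v^\star(\alpha_i)$. The resolution is to bound per-round expected value by $v^\star$ of the conditional bid probability (via \cref{def:ideal_utility}), then average and apply Jensen to the concave $v^\star$, using the pathwise anytime budget to control the averaged bid rate. A minor technicality is that Fatou yields only a one-sided inequality; but along the proposed strategy, (a) delivers an almost-sure (hence liminf) limit equal to $p_i v^\star(\alpha_i)$, so the two sides agree and the exact best-response conclusion follows.
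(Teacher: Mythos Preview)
Your proof is essentially correct and tracks the paper's argument closely for part (a): Borel--Cantelli on budget violations followed by the strong law of large numbers. For part (b), the paper takes a slightly different route, invoking the finite-horizon approximate-equilibrium result (\cref{thm:appendix_nash}) as a black box on each prefix and then passing to the limit; you instead bound the deviating payoff directly via concavity of $v^\star$ and Jensen's inequality. Your route is more self-contained and arguably cleaner, since it avoids re-importing the finite-horizon machinery, while the paper's route has the advantage of reusing an already-proved lemma verbatim.

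There is one gap you should close. In both parts you write that opponents ``bid i.i.d.\ $\mathrm{Bernoulli}(\alpha_j)$'' and then assert the exact identity
\[
\frac{1}{t}\sum_{s=1}^{t} \mathbb E[U_i[s]] \;=\; p_i\cdot\frac{1}{t}\sum_{s=1}^{t}\mathbb E[V_i[s]b_i^s].
\]
This is only true in the imaginary game where no budgets are enforced; in the actual mechanism an opponent~$j$ may have her bid zeroed out by the anytime cap, and there is no monotonicity guaranteeing $p_i^{S\setminus\{j\}}\ge p_i^S$, so the win probability conditional on $i$ bidding need not equal (or even be bounded by) $p_i$ on those rounds. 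The fix is the same Borel--Cantelli argument you already gave, but applied to \emph{every} agent $j$ (not just $i$): each opponent's cap binds at most finitely often almost surely, so after a random finite time $t_0$ the displayed identity holds exactly, and the finitely many earlier rounds do not affect the $\limsup$ or $\liminf$. The paper handles this by introducing the imaginary game explicitly and coupling it to the actual game on the event that no one is budget-constrained; you should do the same, or at least state that your identity holds for all $s>t_0$ and that the initial segment is negligible in the limit.
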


\section{Border's Theorem}
\label{sec:border_details}
\subsection{Obtaining \texorpdfstring{\cref{thm:alpha_border,thm:beta_border}}{Theorems B.5 and 4.2} via a Reduction from Border's Theorem}

We claimed that \cref{thm:alpha_border} and more generally, \cref{thm:beta_border}, is a special case of Border's Theorem.
We shall detail Border's Theorem and argue why our problem is a special case.
Border's Theorem deals with the setting of selling a single item to bidders via a direct-revelation mechanism.
Suppose there are $n$ bidders, each bidder $i$ with type space $\Theta_i$.
Take a given direct-revelation mechanism.
For each reported type profile $\theta = (\theta_1, \dots, \theta_n)$, suppose the allocation rule is that bidder $i$ wins with probability $p_i^\theta$.
Define the \textit{interim allocation rule} to be the mapping $\pi:\bigsqcup_i \Theta_i\to[0,1]$ where $\pi(\theta_i)$ denotes the probability that bidder $i$ will win the item conditioned on reporting $\theta_i$ assuming others are bidding truthfully.
Now suppose instead of taking a given a direct-revelation mechanism and defining the interim allocation rule, we start with an arbitrary function $\pi:\bigsqcup_i\Theta_i\to[0,1]$.
Say that $\pi$ is a \textit{feasible} interim allocation rule if $\pi$ can arise as an interim allocation rule from an actual direct-revelation mechanism's allocation rule.
Obviously, some functions $\pi$ are not feasible interim allocation rules.
For example, $\pi\equiv 1$ is not feasible if $n\geq 2$ because no mechanism can guarantee everyone a probability $1$ of winning regardless of reported type profile.

Border's Theorem gives a succinct characterization of which functions $\pi$ are feasible interim allocation rules.
We state it below.
It was proven by a sequence of previous work \cite{border1991implementation,border2007reduced,mierendorff2011asymmetric}.
\begin{theorem}[Border's Theorem]
	\label{thm:original_border}
	A function $\pi:\bigsqcup_i\Theta_i\to[0,1]$ is a feasible interim allocation rule if and only if for every measurable $\mathcal T = (\mathcal T_1, \dots, \mathcal T_n)\subseteq \Theta_1\times\dots\times\Theta_n$,
	\begin{equation*}
		\E_{\theta\sim\bigtimes_i\Theta_i}\left[\sum_{i\in[n]}\pi_i(\theta_i)\pmb1_{\{\theta\in\mathcal T\}}\right] \leq 1 - \prod_{i\in[n]}\left(1 - \Pr_{\theta_i\sim\Theta_i}(\theta_i\in\mathcal T_i)\right).
	\end{equation*}
\end{theorem}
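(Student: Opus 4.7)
The plan is to reduce Border's Theorem to a max-flow min-cut argument, generalizing the flow-network proof already used for \cref{thm:alpha_border} in \cref{fig:introductory_border_flow_network}. I would first establish the result for finite type spaces, then extend to arbitrary measurable spaces via approximation.

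For finite types, construct an $s$-$t$ flow network as follows. For each joint type profile $\theta=(\theta_1,\ldots,\theta_n)$, introduce a middle node $u_\theta$ with an incoming edge $(s,u_\theta)$ of capacity $\Pr(\theta)=\prod_i f_i(\theta_i)$. For each pair $(i,\theta_i)$ with $\theta_i\in\Theta_i$, introduce a node $w_{i,\theta_i}$ and an outgoing edge $(w_{i,\theta_i},t)$ of capacity $\pi_i(\theta_i)\,f_i(\theta_i)$. Finally, connect $u_\theta$ to $w_{i,\theta_i}$ with an infinite-capacity edge for every $i$, where $\theta_i$ is the $i$-th coordinate of $\theta$. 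A feasible direct-revelation mechanism $(p_i^\theta)$ corresponds precisely to a flow that saturates every $(w_{i,\theta_i},t)$ edge: the flow on $(u_\theta,w_{i,\theta_i})$ plays the role of $p_i^\theta\Pr(\theta)$, and the bound $\sum_i p_i^\theta\leq 1$ is enforced by the capacity on $(s,u_\theta)$.

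Thus feasibility of $\pi$ is equivalent to the network admitting a flow of value $D:=\sum_i\E_{\theta_i}[\pi_i(\theta_i)]$, which by max-flow min-cut is equivalent to every $s$-$t$ cut having capacity at least $D$. The core computation is to characterize the minimum cuts. Because of the infinite middle edges, any finite-capacity cut is determined by specifying, for each agent $i$, the set $\mathcal{T}_i\subseteq\Theta_i$ of types whose $w_{i,\theta_i}$ nodes lie on the $t$-side; each profile node $u_\theta$ must then lie on the $t$-side whenever some $\theta_i\in\mathcal{T}_i$. Computing the cut capacity gives $\Pr(\exists i:\theta_i\in\mathcal{T}_i)+\sum_i\E[\pi_i(\theta_i)\pmb{1}_{\theta_i\notin\mathcal{T}_i}]$, which after rearrangement is $\geq D$ if and only if the Border inequality holds for $(\mathcal{T}_i)_i$. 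Necessity of the Border conditions would be obtained symmetrically: from a feasible mechanism, the induced flow and its ``rectangle'' cuts must be consistent.

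The main obstacle will be passing from finite to general measurable type spaces, where a literal flow network with a continuum of nodes does not make sense. I would handle this by approximating each $\Theta_i$ by a refining sequence of finite partitions and approximating $\pi_i$ by step functions, applying the finite case to obtain $(p_i^\theta)$ satisfying the approximate interim constraints, and then passing to a weak-$*$ limit in $L^\infty(\prod_i\Theta_i)$ using Banach-Alaoglu; interim averages are continuous under this limit, giving the desired feasible mechanism. An alternative is to phrase feasibility as an infinite-dimensional LP and invoke a Hahn-Banach separation argument, but the discretization route directly reuses the finite flow argument above and matches the style of~\citet{che2013generalized}.
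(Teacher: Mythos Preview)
The paper does not actually prove \cref{thm:original_border}; it is stated as a known result with citations to \cite{border1991implementation,border2007reduced,mierendorff2011asymmetric}, and only the special case \cref{thm:borderscriterion} (equivalently \cref{thm:alpha_border,thm:beta_border}) is proved, via the flow network of \cref{fig:introductory_border_flow_network}. So there is no ``paper's own proof'' to compare against here.

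That said, your plan is sound and is essentially the argument of \citet{che2013generalized} that the paper already leans on for the special case. Your finite-type flow network is the natural generalization of \cref{fig:introductory_border_flow_network}: profile nodes $u_\theta$ replace the subset nodes $u_S$, and type nodes $w_{i,\theta_i}$ replace the agent nodes $v_i$. The cut computation you describe correctly yields the Border inequality indexed by rectangles $(\mathcal T_i)_i$; note that the indicator in the paper's statement should be read termwise as $\pmb 1_{\{\theta_i\in\mathcal T_i\}}$, consistent with how it is used in the proof of \cref{thm:beta_border}. For the extension to general type spaces, your discretization-plus-weak-$*$-compactness route is standard; the only points deserving care are (i) that the constraint set $\{(p_i^\theta): p_i^\theta\ge 0,\ \sum_i p_i^\theta\le 1\}$ is weak-$*$ closed in $L^\infty$, and (ii) that you control simultaneously the approximation $\pi_i^{(k)}\to\pi_i$ and the convergence of the induced interim rules, so the limit mechanism realizes the original $\pi$ and not merely an approximation. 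Both are routine but should be stated explicitly.
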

The problem of determining which probabilities $p_i\in[0,1]$ are interim allocation probabilities induced by allocation probabilities $p_i^S$ can be seen as a special case of Border's Theorem. Specifically, assume on a given round in our mechanism that each agent $i$ bids with probability $\beta_i$, and we want to set allocation probabilities $p_i^S$ such that each agent's probability of winning the item is $p_i$ conditioned on bidding. We can think of this problem as a special case of determining whether an interim allocation rule $\pi$ is feasible as follows. Think of each agent $i$ as a bidder with $2$ types: ``bidding'' and ``not bidding'' where the probability they have the ``bidding'' type is $\beta_i$. Set $\pi$ to be the function that is $0$ on the ``not bidding'' type and $p_i$ on the ``bidding'' type for agent $i$. Determining whether it is feasible to guarantee each agent probability $p_i$ of winning conditioned on bidding is then equivalent to $\pi$ being a feasible interim allocation rule. Using Border's Theorem to characterize whether $\pi$ is feasible, combined with the additional requirements that we never allocate the item to an agent who is not bidding and we always allocate the item to some agent if at least one agent bids, we obtain the following theorem (a generalization of \cref{thm:alpha_border} and a restatement of \cref{thm:beta_border}).
\begin{theorem}
\label{thm:borderscriterion}
	At a given round in our mechanism \generalmechanism, suppose each agent $i$ bids with probability $\beta_i$ independently across agents.
	Given arbitrary probabilities $p_i\in[0,1]$, there exist allocation probabilities $p_i^S$ such that the probability that $i$ wins the item conditioned on bidding is $p_i$ (i.e., the allocation probabilities $p_i^S$ induce the interim allocation probabilities $p_i$) if and only if
	\begin{equation}
		\label{eq:border_no_waste}
		\sum_{i\in [n]}p_i\beta_i = 1 - \prod_{i\in[n]}(1-\beta_i)
	\end{equation}
	and for every $I\subseteq[n]$,
	\begin{equation}
		\label{eq:borders_criterion}
		\sum_{i\in I}p_i\beta_i \leq 1 - \prod_{i\in I}(1-\beta_i).
	\end{equation}
\end{theorem}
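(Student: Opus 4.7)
The plan is to obtain \cref{thm:borderscriterion} as a direct corollary of Border's classical theorem (\cref{thm:original_border}) via a reduction to a binary-type auction. I model each agent $i$ as a bidder with type space $\Theta_i = \{\text{bid}, \text{nobid}\}$, where type $\text{bid}$ occurs with probability $\beta_i$, independently across agents. Under this identification, a type profile $\theta$ corresponds to the set $S \subseteq [n]$ of agents whose realized type is $\text{bid}$, and an ex-post allocation rule for this auction corresponds exactly to a family $(p_i^S)_{S \subseteq [n], i \in S}$ in \generalmechanism. I then define the candidate interim allocation rule $\pi: \bigsqcup_i \Theta_i \to [0,1]$ by $\pi_i(\text{bid}) = p_i$ and $\pi_i(\text{nobid}) = 0$, so that feasibility of $\pi$ is equivalent to the existence of allocation probabilities inducing the desired $p_i$'s in the sense of \cref{def:beta_interim_allocation_probability}.

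For the necessity direction, I argue directly from the mechanism. Since $(p_i^S)_{i \in S}$ is a probability distribution for every $S \neq \emptyset$, the mechanism allocates the item whenever at least one agent bids, so the total probability of allocation in a round equals $1 - \prod_i(1-\beta_i)$; but this total can also be decomposed as $\sum_i p_i \beta_i$, yielding the equality (5.2). More generally, the probability that \emph{some} agent in $I$ wins equals $\sum_{i \in I} p_i \beta_i$, and this is bounded above by the probability that at least one agent in $I$ bids (which is $1 - \prod_{i \in I}(1-\beta_i)$), since only a bidding agent can be allocated; this gives (5.3).

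For sufficiency, I apply \cref{thm:original_border} to $\pi$. The key simplification is that $\pi_i(\text{nobid}) = 0$, so enlarging any $\mathcal{T}_i$ from $\{\text{bid}\}$ to $\{\text{bid},\text{nobid}\}$ strictly loosens Border's constraint (the left-hand side stays fixed while the right-hand side grows). Hence it suffices to check Border's condition over profiles with $\mathcal{T}_i \in \{\emptyset, \{\text{bid}\}\}$, which are indexed by the subset $I = \{i : \mathcal{T}_i = \{\text{bid}\}\}$; unwinding the definitions, Border's inequality for such $\mathcal{T}$ reduces to exactly (5.3). So if (5.3) holds for every $I$, Border yields an ex-post allocation $(q_i^\theta)$ realizing $\pi$, from which we read off $p_i^S := q_i^\theta$; the equality (5.2) together with the fact that the total allocation cannot exceed $1 - \prod_i(1-\beta_i)$ then forces $\sum_{i \in S} p_i^S = 1$ for every nonempty $S$, so the resulting rule conforms to \generalmechanism.

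I expect no substantive technical obstacle since the heavy lifting is handled by Border's theorem; the only care needed is in the bookkeeping of the type-to-bidding-set correspondence and in verifying that the "nobid" types can be ignored without loss. An arguably cleaner alternative for the sufficiency direction is to mimic the direct flow-network argument sketched for \cref{thm:alpha_border} (with $\alpha_i$ replaced by $\beta_i$ throughout \cref{fig:introductory_border_flow_network}): (5.3) is precisely the statement that every $s$-$t$ cut has capacity at least $1 - \prod_i(1-\beta_i)$, and the max-flow/min-cut theorem then supplies the required allocation probabilities directly, avoiding any invocation of the general Border machinery.
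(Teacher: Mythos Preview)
Your proposal is correct and matches the paper's approach essentially step for step: the paper also reduces to \cref{thm:original_border} via the binary type space $\Theta_i=\{0_i,1_i\}$ with $\pi(0_i)=0$, $\pi(1_i)=p_i$, reduces Border's check to subsets $I=\{i:\mathcal T_i=\{1_i\}\}$ using (5.3), and then uses (5.2) to verify that the recovered $(p_i^S)_{i\in S}$ is a genuine probability distribution over $S$. Your alternative flow-network route is likewise one the paper gives explicitly (with $\alpha_i$ replaced by $\beta_i$), so there is nothing new to compare.
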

We can see that the ``only if'' direction is obvious as follows.
The left-hand side of \eqref{eq:border_no_waste} is the probability that some agent wins the item.
The right-hand side of \eqref{eq:border_no_waste} is the probability that some agent bids.
These must be equal since we allocate the item whenever at least one agent bids.
Now consider \eqref{eq:borders_criterion}.
The left-hand side is the probability that an agent $i\in I$ wins the item.
The right-hand side is the probability that $S$ contains an agent $i\in I$.
Since an agent can only win the item if she bids, the left-hand side can be at most the right-hand side.

Now let us formally use Border's Theorem to prove the ``if'' direction.
\begin{proof}[Proof of \cref{thm:beta_border}]
	Assume \eqref{eq:border_no_waste} and \eqref{eq:borders_criterion} hold.
	We do a reduction to the problem of determining feasibility of an interim allocation rule.
	Define a type space $\Theta_i = \{0_i,1_i\}$ for each agent $i$ where $\Pr_{\theta_i\sim\Theta_i}(\theta_i=1_i)=\beta_i$.
	Define $\pi$ by
	\begin{align*}
		\pi(0_i) & = 0,\,\pi(1_i) = p_i.
	\end{align*}
	Take any $(\mathcal T_1, \dots, \mathcal T_n)\subseteq\Theta_1\times\dots\times\Theta_n$.
	Without loss of generality, assume each $\mathcal T_i\subsetneq\Theta_i$.
	Let $I = \{i:\mathcal T_i=\{1_i\}\}$.
	Then,
	\begin{equation*}
		\E_{\theta\sim\bigtimes_i\Theta_i}\left[\sum_{i\in[n]}\pi_i(\theta_i)\pmb1_{\{\theta\in\mathcal T\}}\right] = \sum_{i\in I}p_i\beta_i \leq 1 - \prod_{i\in I}(1-\beta_i) \leq 1 - \prod_{i\in[n]}\left(1 - \Pr_{\theta_i\sim\Theta_i}(\theta_i\in\mathcal T_i)\right)
	\end{equation*}
	by \eqref{eq:borders_criterion}.
	By \cref{thm:original_border}, $\pi$ is a feasible interim allocation rule.
	Find a direct-revelation mechanism with interim allocation rule $\pi$ where $p_i^\theta$ is the probability that bidder $i$ wins if the reported type profile is $\theta$.
	Let $p_i^S = p_i^\theta$ where $\theta$ has $S = \{i\in[n]:\theta_i=1\}$.
	Notice that $0 = \pi(0_i)$ means that we never allocate the item to an bidder with type $0_i$, which implies that $p_i^S = 0$ for $i\notin S$.
	For $i\in S$, $p_i = \pi(1_i)$ is the probability that we allocate the item to a bidder with type $1_i$ conditioned on them having type $1_i$.
	By construction, this is the same conditional probability that we allocate the item to an agent $i$ conditioned on them bidding in \generalmechanism with allocation probabilities $p_i^S$.
	The sum $\sum_{i\in[n]}p_i\beta_i$ is the probability that bidder $i$ wins the item, and $1 - \prod_{i\in[n]}(1-\beta_i)$ is the probability that some agent $i$ has type $1_i$.
	By \eqref{eq:border_no_waste}, the direct-revelation mechanism must allocate the item with probability $1$ to some agent if there is an agent with reported type $1_i$.
	Since $p_i^S = 0$ unless $i\in S$, \eqref{eq:border_no_waste} implies that $\sum_{i\in S}p_i^S = 1$, so $(p_i^S)_{i\in S}$ is indeed a probability distribution over $i\in S$.
\end{proof}

\subsection{A DMMF Proof of \texorpdfstring{\cref{thm:borderscriterion}}{Theorem D.2}}
\label{sec:dmmfproofofborder}
Although \cref{thm:borderscriterion} is a special case of Border's Theorem, to aid intuition, we give novel proof of \cref{thm:borderscriterion} for our special case of allocation to bidding agents by simulating a different request-based allocation mechanism for the same setting.

We use the Dynamic Max-Min Fairness Mechanism (DMMF) introduced for our fair allocation setting by \cite{fikioris2023online}.
The DMMF mechanism goes as follows.
In each round $t$, each agent $i$ can decide to request or not.
Letting $S[t]$ be the set of agents that request, and $W_i[t]$ be the number of items that agent $i$ has won prior to and including round $t$, the principal allocates the item to the bidding agent $i\in S[t]$ that has the smallest value of $\frac{W_i[t-1]}{\alpha_i}$, the number of wins so far normalized by fair share.

The following was implicitly proven by \cite{chido} via a Foster-Lyapunov argument.
\begin{theorem}
	\label{thm:dmmfpositiverecurrentcondition}
	Define $Y_i[t] = \frac{W_i[t]}{\alpha_i} - \sum_{j=1}^n W_j[t]$.
	The vectors $Y[t] = (Y_j[t])_{j=1}^n$ form an irreducible Markov chain.
	Suppose each agent $i$ is bidding i.i.d.
	$\mathrm{Bernoulli}(\beta_i)$ across rounds and independently across agents such that for any $\emptyset\subsetneq I\subsetneq[n]$,
	\begin{equation}
		\label{eq:subgroupstability}
		\frac{1-\prod_{i\in I}(1-\beta_i)}{\sum_{i\in I}\alpha_i} > 1 - \prod_{j=1}^n(1-\beta_j).
	\end{equation}
	Then, the Markov chain $(Y[t])$ is positive recurrent, and
	\begin{equation}
		\label{eq:dmmfnumberwins}
		\frac{W_i[T]}{T}\overset{\mathrm{a.s.}}\to \alpha_i\left(1 - \prod_{j=1}^n (1-\beta_j)\right).
	\end{equation}
\end{theorem}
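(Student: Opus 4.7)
The plan is to apply the Foster-Lyapunov criterion to establish positive recurrence of the Markov chain $(Y[t])$, and then derive the almost sure convergence in \eqref{eq:dmmfnumberwins} from the Markov-chain ergodic theorem combined with a conservation identity. First observe that $\sum_i \alpha_i Y_i[t] = \sum_i W_i[t] - \sum_j W_j[t] = 0$ for all $t$, so the chain lives on the integer lattice inside the hyperplane $\{y : \sum_i \alpha_i y_i = 0\}$. Irreducibility follows because, for any agent $i$, with positive probability $\beta_i \prod_{j\ne i}(1-\beta_j) > 0$ in a given round only agent $i$ bids and hence wins, so any legal finite sequence of winners is realizable from any reachable state.

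For the drift condition I would use the Lyapunov function
\begin{equation*}
V(y) \;=\; \max_{I \subseteq [n]} \Bigl(-\sum_{i \in I}\alpha_i y_i\Bigr),
\end{equation*}
which measures the largest cumulative shortfall across subsets of agents ($V(y) \ge 0$ with $V(0) = 0$). Writing $L_I(y) := \sum_{i \in I}\alpha_i y_i$, the per-step updates $\Delta Y_i = (1-\alpha_i)/\alpha_i$ when $i$ wins and $\Delta Y_j = -1$ otherwise give the exact one-step drift
\begin{equation*}
\E[\Delta L_I \mid Y[t] = y] \;=\; \Pr_y(\text{winner} \in I) \;-\; \Bigl(\sum_{i \in I}\alpha_i\Bigr)\Bigl(1 - \prod_{k=1}^n(1-\beta_k)\Bigr).
\end{equation*}
When $V(y)$ is large and uniquely attained by a proper subset $I^\star$, the $Y_i$-values for $i \in I^\star$ lie far below those for $j \notin I^\star$, so the DMMF rule picks the winner from $I^\star$ whenever at least one agent in $I^\star$ bids; thus $\Pr_y(\text{winner} \in I^\star) = 1 - \prod_{i \in I^\star}(1-\beta_i)$, and hypothesis \eqref{eq:subgroupstability} gives $\E[\Delta(-L_{I^\star}) \mid y] \le -\epsilon$ for some $\epsilon > 0$ depending only on $(\alpha_i, \beta_i)$. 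Since there are only finitely many proper subsets and $V$ has uniformly bounded one-step jumps, taking the minimum margin across $I$ yields a Foster drift inequality on the complement of a finite set, and positive recurrence follows.

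For \eqref{eq:dmmfnumberwins}, I would apply the Markov-chain ergodic theorem to the positive recurrent chain $(Y[t])$: the winner-indicator $\mathbf{1}(i \text{ wins in round } t+1)$ is a function of $Y[t]$ and the fresh bid vector in round $t+1$, so $W_i[T]/T \to p_i^\star := \Pr_\mu(\text{agent } i \text{ wins})$ almost surely. Likewise $\sum_j W_j[T]/T \to q := 1 - \prod_j(1-\beta_j)$ almost surely, directly by the SLLN applied to the i.i.d. bid vectors, since DMMF always allocates when someone bids. Using the identity $Y_i[T] = W_i[T]/\alpha_i - \sum_j W_j[T]$ gives $Y_i[T]/T \to p_i^\star/\alpha_i - q$ almost surely; but positive recurrence forces $Y_i[T]$ to be tight in distribution and hence $Y_i[T]/T \to 0$ in probability. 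The two limits must agree, yielding $p_i^\star = \alpha_i q = \alpha_i\bigl(1 - \prod_{j=1}^n(1-\beta_j)\bigr)$, which is \eqref{eq:dmmfnumberwins}.

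The main obstacle will be rigorously controlling the drift of $V$ in the intermediate regime: when several subsets nearly tie for $\arg\max$, the winner distribution mixes across both sides of any single $I^\star$ and a one-step drift of the max-over-subsets is not captured directly by \eqref{eq:subgroupstability} alone. I would handle this either by replacing the max with a smooth surrogate such as $\sum_I \bigl((-L_I)_+\bigr)^2$ (trading a cleaner Lyapunov for a longer drift expansion), or by passing to a multi-step drift over $O(1)$ consecutive rounds to average out the boundary between nearly tied subsets. Either way, the strict margin guaranteed by \eqref{eq:subgroupstability} is what ultimately drives the chain back to a finite sublevel set.
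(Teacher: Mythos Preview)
The paper does not prove this theorem: it attributes the result to \cite{chido}, saying only that it ``was implicitly proven \dots\ via a Foster--Lyapunov argument.'' Your proposal follows precisely that route, so at the level of approach there is nothing to compare.

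On content, your drift identity
\[
\E[\Delta L_I\mid y]=\Pr_y(\text{winner}\in I)-\Bigl(\sum_{i\in I}\alpha_i\Bigr)\Bigl(1-\prod_k(1-\beta_k)\Bigr)
\]
is correct, and since the maximizer of $-L_I$ is always $I^\star=\{i:y_i<0\}$ (a nonempty proper subset whenever $y\ne 0$, by $\sum_i\alpha_iy_i=0$), DMMF does allocate within $I^\star$ whenever some agent in $I^\star$ bids; hypothesis \eqref{eq:subgroupstability} then yields the negative drift of $-L_{I^\star}$. Your convergence argument for \eqref{eq:dmmfnumberwins} (ergodic theorem for $W_i[T]/T$, SLLN for $\sum_jW_j[T]/T$, tightness of $Y_i[T]$ from positive recurrence) is sound.

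Two remarks on the gap you flag. First, your sentence ``the $Y_i$-values for $i\in I^\star$ lie far below those for $j\notin I^\star$'' is not quite right: they lie below (negative versus nonnegative), but not necessarily far below, and that weaker fact is all you need for $\Pr_y(\text{winner}\in I^\star)=1-\prod_{i\in I^\star}(1-\beta_i)$. Second, the obstacle is slightly larger than you phrase it: the set of states where $I^\star$ is stable under one step (no coordinate crosses zero) is \emph{not} the complement of a finite set, so a direct one-step Foster inequality for $V=\max_I(-L_I)$ does not drop out. Your proposed fixes (a smooth surrogate or a multi-step drift) are the standard remedies and either can be made to work, but this is where the actual labor lies.
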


Now we give our DMMF proof of Border's Theorem, which will roughly go as follows.
Given interim allocation probabilities $p_i$ satisfying the conditions of \cref{thm:beta_border}, we shall define fair shares $\alpha_i$ such that when we run DMMF, the long-run fraction of items that agent $i$ wins is $\beta_i p_i$.
Let $\mu$ be the distribution over $S\subseteq[n]$ where each $i\in S$ independently with probability $\beta_i$.
We shall show that these fraction items come from the expectation of $\beta_i p_i^S$ over sets $S\sim\mu$ of bidding agents, where $p_i^S$ is the long-run fraction of times $t$ that agent $i$ would have won had they requested and $S$ were the set of agents who requested at time $t$.
These allocation probabilities $p_i^S$'s will therefore induce the interim allocation probabilities $p_i$.

\begin{proof}[DMMF proof of \cref{thm:borderscriterion}]
	Necessity of Border's Criterion for the $p_i^S$'s to exist was previously argued to be obvious in the remarks after \cref{thm:borderscriterion}.
	For sufficiency, let $(\beta_i)$ be bid rates and $(p_i)$ be interim allocation probabilities satisfying Border's Criterion.
	We shall first assume that \eqref{eq:borders_criterion} holds with strict inequality for $I\subsetneq[n]$; we will then extend our proof to the general case with a topological argument.

	Define fair shares
	\begin{equation*}
		\alpha_i = \frac{\beta_ip_i}{1 - \prod_{j=1}^n (1-\beta_j)}.
	\end{equation*}
	to be used in DMMF.
	By \eqref{eq:border_no_waste}, the fair shares indeed satisfy $\sum_{i=1}^n \alpha_i = 1$.
	With these fair shares, by rearrangement, \eqref{eq:borders_criterion} holding with strict inequality and \eqref{eq:subgroupstability} are equivalent.
	Therefore, the irreducible Markov chain $(Y[t])$ in \cref{thm:dmmfpositiverecurrentcondition} is positive recurrent.

	In the DMMF process, let $\mathcal S_i[t]$ be the set of subsets $S\subseteq[n]$ of agents such that if $S$ were to be the set of agents who requested at time $t$, agent $i$ would have won the item,
	\begin{equation*}
		\mathcal S_i[t] = \left\{S\subseteq[n]:i = \argmin_{j\in S}\frac{W_j[t-1]}{\alpha_j}\right\}.
	\end{equation*}
	Observe that $\mathcal S_i[t]$ can be written as a function of $Y[t-1]$.
	By the pointwise ergodic theorem, for any $S\subseteq[n]$,
	\begin{equation*}
		\frac1T\sum_{t=1}^T \pmb1\{S\in \mathcal S_i[t]\}\overset{\mathrm{a.s.}}\to p_i^S
	\end{equation*}
	for some constant $p_i^S\in [0,1]$.

	Let $S[t]$ be the actual set of agents that request at time $t$.
	By \eqref{eq:dmmfnumberwins} and the fact that agent $i$ actually wins the item at time $t$ if and only if $S[t]\in \mathcal S_i[t]$,
	\begin{equation*}
		\frac1T\sum_{t=1}^T\pmb1\{S[t]\in \mathcal S_i[t]\}\overset{\mathrm{a.s.}}\to\alpha_i\left(1-\prod_{j=1}^n(1-\beta_j)\right).
	\end{equation*}
        Recall that $\mu$ is the distribution over $S\subseteq[n]$ where each $i\in S$ independently with probability $\beta_i$.
	Suppose we sample the set $S\sim\mu$.
	For each $t$, observe that both $S$ and $S[t]$ have distribution $\mu$ and both are independent of $\mathcal S_i[t]$.
	Therefore,
	\begin{equation}
		\frac1T\sum_{t=1}^T \E_{S\sim\mu}[\pmb1\{S\in\mathcal S_i[t]\}] = \frac1T\sum_{t=1}^T \mathbb E[\pmb1\{S[t]\in\mathcal S_i[t]\}].
	\end{equation}
	Taking $T\to\infty$ and applying bounded convergence, we obtain
	\begin{equation}
		\E_{S\sim\mu}[p_i^S] = \alpha_i\left(1-\prod_{j=1}^n(1-\beta_j)\right) = \beta_i p_i.
	\end{equation}
	We observe that $p_i^S = 0$ if $i\notin S$, and so by definition, the $p_i^S$'s induce the interim allocation probabilities $p_i$.

    That was the proof for the case where \eqref{eq:borders_criterion} holds with strict inequality; we now give a topological argument to extend to the general case.
    Let $B$ be the set of bid rates and interim allocation probabilities $(\beta_i, p_i)$ that satisfy Border's Criterion.
    In \cref{sec:dmmfproofofborder}, we showed that the set of $(\beta_i, p_i)\in B$ satisfying \eqref{eq:borderscriterion} with strict inequality for $I\subsetneq[n]$ are induced by allocation probabilities.
    
    The set of $(\beta_i, p_i)$ that are induced by allocation probabilities is closed (as a subset of Euclidean space): the set
    \begin{equation*}
    	\left\{(\beta_i, p_i, p_i^S): p_i = \sum_{S\subseteq[n]:i\in S}p_i^S\left(\prod_{j\in S\setminus\{i\}}\beta_j\right)\left(\prod_{j\in[n]\setminus S}(1-\beta_j)\right)\right\}
    \end{equation*}
    is compact being the inverse image of a closed set under a continuous mapping, and the set of $(\beta_i, p_i)$ that are induced by allocation probabilities is just the projection of this set onto the first two coordinates.
    Hence, it suffices to show that the set of $(\beta_i, p_i)\in B$ satisfying \eqref{eq:borderscriterion} with strict inequality for $I\subseteq[n]$ is dense in $B$.
    Given any such $(\beta_i, p_i)\in B$, letting $\epsilon > 0$, define $(\beta_i',p_i')$ by
    \begin{align*}
    	\beta_i' & = \beta_i + \epsilon                                                                          \\
    	p_i'     & = \frac{\beta_ip_i}{\beta_i'}\cdot\frac{\prod_{j=1}^n(1-\beta_j')}{\prod_{j=1}^n(1-\beta_j)}.
    \end{align*}
    Then, $(\beta_i', p_i')$ can be made arbitrarily close to $(\beta_i, p_i)$.
    They can be seen to satisfy Border's Criterion where \eqref{eq:borderscriterion} is strict for $I\subsetneq[n]$ as follows.
    We have increased $\beta_i$ and decreased $p_i$ to get $(\beta_i', p_i')$ in such a way that both the left-hand side and right-hand side of \eqref{eq:borderscriterionnowaste} increase in the same amount to maintain equality.
    In \eqref{eq:borderscriterion} for $I\subsetneq[n]$, considering the changes caused by each coordinate $i\in I$ one at a time, the left-hand side increases by the same amount as in \eqref{eq:borderscriterionnowaste} but the right-hand side increases by a larger amount since the partial derivative of the right-hand side with respect to some $\beta_i\in I$ is (strictly) decreasing in $I$ (when all $\beta_j>0$ for $j\in I$).
    Therefore, with the $(\beta_i', p_i')$, \eqref{eq:borderscriterion} holds with strict inequality.
\end{proof}

\subsection{A Flow Network Proof of \texorpdfstring{\cref{thm:borderscriterion}}{Theorem D.2}}
\label{sec:borderscriterionflowproof}

A proof of Border's Theorem based on flow networks and the max-flow min-cut theorem was discovered by \cite{che2013generalized}.
We will give flow network proof in the special case of Border's Theorem that we use (\cref{thm:borderscriterion}) here for convenience.

\begin{figure}
\centering
        \tikzset{every picture/.style={line width=0.75pt}} 

\begin{tikzpicture}[x=0.75pt,y=0.75pt,yscale=-1,xscale=1]

\draw   (99,105.5) .. controls (99,97.49) and (105.49,91) .. (113.5,91) .. controls (121.51,91) and (128,97.49) .. (128,105.5) .. controls (128,113.51) and (121.51,120) .. (113.5,120) .. controls (105.49,120) and (99,113.51) .. (99,105.5) -- cycle ;
\draw   (255,23) .. controls (255,14.16) and (262.16,7) .. (271,7) .. controls (279.84,7) and (287,14.16) .. (287,23) .. controls (287,31.84) and (279.84,39) .. (271,39) .. controls (262.16,39) and (255,31.84) .. (255,23) -- cycle ;
\draw   (252,83.5) .. controls (252,73.84) and (259.84,66) .. (269.5,66) .. controls (279.16,66) and (287,73.84) .. (287,83.5) .. controls (287,93.16) and (279.16,101) .. (269.5,101) .. controls (259.84,101) and (252,93.16) .. (252,83.5) -- cycle ;
\draw   (253,158) .. controls (253,148.61) and (260.61,141) .. (270,141) .. controls (279.39,141) and (287,148.61) .. (287,158) .. controls (287,167.39) and (279.39,175) .. (270,175) .. controls (260.61,175) and (253,167.39) .. (253,158) -- cycle ;
\draw   (412,21.5) .. controls (412,13.49) and (418.49,7) .. (426.5,7) .. controls (434.51,7) and (441,13.49) .. (441,21.5) .. controls (441,29.51) and (434.51,36) .. (426.5,36) .. controls (418.49,36) and (412,29.51) .. (412,21.5) -- cycle ;
\draw   (413,77.5) .. controls (413,69.49) and (419.49,63) .. (427.5,63) .. controls (435.51,63) and (442,69.49) .. (442,77.5) .. controls (442,85.51) and (435.51,92) .. (427.5,92) .. controls (419.49,92) and (413,85.51) .. (413,77.5) -- cycle ;
\draw   (412,158.5) .. controls (412,150.49) and (418.49,144) .. (426.5,144) .. controls (434.51,144) and (441,150.49) .. (441,158.5) .. controls (441,166.51) and (434.51,173) .. (426.5,173) .. controls (418.49,173) and (412,166.51) .. (412,158.5) -- cycle ;
\draw   (528,100.5) .. controls (528,92.49) and (534.49,86) .. (542.5,86) .. controls (550.51,86) and (557,92.49) .. (557,100.5) .. controls (557,108.51) and (550.51,115) .. (542.5,115) .. controls (534.49,115) and (528,108.51) .. (528,100.5) -- cycle ;
\draw    (113.5,91) -- (253.2,23.87) ;
\draw [shift={(255,23)}, rotate = 154.33] [color={rgb, 255:red, 0; green, 0; blue, 0 }  ][line width=0.75]    (10.93,-3.29) .. controls (6.95,-1.4) and (3.31,-0.3) .. (0,0) .. controls (3.31,0.3) and (6.95,1.4) .. (10.93,3.29)   ;
\draw    (124,95) -- (250.01,82.69) ;
\draw [shift={(252,82.5)}, rotate = 174.42] [color={rgb, 255:red, 0; green, 0; blue, 0 }  ][line width=0.75]    (10.93,-3.29) .. controls (6.95,-1.4) and (3.31,-0.3) .. (0,0) .. controls (3.31,0.3) and (6.95,1.4) .. (10.93,3.29)   ;
\draw    (125,114.5) -- (251.11,157.36) ;
\draw [shift={(253,158)}, rotate = 198.77] [color={rgb, 255:red, 0; green, 0; blue, 0 }  ][line width=0.75]    (10.93,-3.29) .. controls (6.95,-1.4) and (3.31,-0.3) .. (0,0) .. controls (3.31,0.3) and (6.95,1.4) .. (10.93,3.29)   ;
\draw    (287,157.5) -- (410,158.48) ;
\draw [shift={(412,158.5)}, rotate = 180.46] [color={rgb, 255:red, 0; green, 0; blue, 0 }  ][line width=0.75]    (10.93,-3.29) .. controls (6.95,-1.4) and (3.31,-0.3) .. (0,0) .. controls (3.31,0.3) and (6.95,1.4) .. (10.93,3.29)   ;
\draw    (287,83) -- (411,77.59) ;
\draw [shift={(413,77.5)}, rotate = 177.5] [color={rgb, 255:red, 0; green, 0; blue, 0 }  ][line width=0.75]    (10.93,-3.29) .. controls (6.95,-1.4) and (3.31,-0.3) .. (0,0) .. controls (3.31,0.3) and (6.95,1.4) .. (10.93,3.29)   ;
\draw    (289,23.5) -- (410,21.53) ;
\draw [shift={(412,21.5)}, rotate = 179.07] [color={rgb, 255:red, 0; green, 0; blue, 0 }  ][line width=0.75]    (10.93,-3.29) .. controls (6.95,-1.4) and (3.31,-0.3) .. (0,0) .. controls (3.31,0.3) and (6.95,1.4) .. (10.93,3.29)   ;
\draw    (284,72) -- (411.1,30.62) ;
\draw [shift={(413,30)}, rotate = 161.97] [color={rgb, 255:red, 0; green, 0; blue, 0 }  ][line width=0.75]    (10.93,-3.29) .. controls (6.95,-1.4) and (3.31,-0.3) .. (0,0) .. controls (3.31,0.3) and (6.95,1.4) .. (10.93,3.29)   ;
\draw    (441,21.5) -- (534.36,86.85) ;
\draw [shift={(536,88)}, rotate = 214.99] [color={rgb, 255:red, 0; green, 0; blue, 0 }  ][line width=0.75]    (10.93,-3.29) .. controls (6.95,-1.4) and (3.31,-0.3) .. (0,0) .. controls (3.31,0.3) and (6.95,1.4) .. (10.93,3.29)   ;
\draw    (441,158.5) -- (534.2,113.86) ;
\draw [shift={(536,113)}, rotate = 154.41] [color={rgb, 255:red, 0; green, 0; blue, 0 }  ][line width=0.75]    (10.93,-3.29) .. controls (6.95,-1.4) and (3.31,-0.3) .. (0,0) .. controls (3.31,0.3) and (6.95,1.4) .. (10.93,3.29)   ;
\draw    (442,77.5) -- (526.07,99.98) ;
\draw [shift={(528,100.5)}, rotate = 194.97] [color={rgb, 255:red, 0; green, 0; blue, 0 }  ][line width=0.75]    (10.93,-3.29) .. controls (6.95,-1.4) and (3.31,-0.3) .. (0,0) .. controls (3.31,0.3) and (6.95,1.4) .. (10.93,3.29)   ;

\draw (109,101) node [anchor=north west][inner sep=0.75pt]   [align=left] {$\displaystyle s$};
\draw (259,104) node [anchor=north west][inner sep=0.75pt]   [align=left] {$\displaystyle \vdots $};
\draw (421,102) node [anchor=north west][inner sep=0.75pt]   [align=left] {$\displaystyle \vdots $};
\draw (538,97.4) node [anchor=north west][inner sep=0.75pt]    {$t$};
\draw (334,119.4) node [anchor=north west][inner sep=0.75pt]    {$\cdots $};
\draw (260,18.4) node [anchor=north west][inner sep=0.75pt]    {$u_{\{1\}}$};
\draw (252,75.4) node [anchor=north west][inner sep=0.75pt]    {$u_{\{1,2\}}$};
\draw (257,151.4) node [anchor=north west][inner sep=0.75pt]    {$u_{\{n\}}$};
\draw (420,18.4) node [anchor=north west][inner sep=0.75pt]    {$v_{1}$};
\draw (420,73.4) node [anchor=north west][inner sep=0.75pt]    {$v_{2}$};
\draw (418,153.4) node [anchor=north west][inner sep=0.75pt]    {$v_{n}$};
\draw (134.88,54.56) node [anchor=north west][inner sep=0.75pt]  [rotate=-334.91]  {$\Pr( S'=\{1\})$};
\draw (151.46,73.83) node [anchor=north west][inner sep=0.75pt]  [rotate=-352.96]  {$\Pr( S'=\{1,2\})$};
\draw (150.27,102.45) node [anchor=north west][inner sep=0.75pt]  [rotate=-18.18]  {$\Pr( S'=\{n\})$};
\draw (341,8.4) node [anchor=north west][inner sep=0.75pt]    {$\infty $};
\draw (324.91,44.76) node [anchor=north west][inner sep=0.75pt]  [rotate=-340.02]  {$\infty $};
\draw (336.56,65.92) node [anchor=north west][inner sep=0.75pt]  [rotate=-356.57]  {$\infty $};
\draw (336.25,142.13) node [anchor=north west][inner sep=0.75pt]  [rotate=-1.83]  {$\infty $};
\draw (459.8,9.03) node [anchor=north west][inner sep=0.75pt]  [rotate=-35.31]  {$p_{1}\Pr( 1\in S')$};
\draw (446.93,57.89) node [anchor=north west][inner sep=0.75pt]  [rotate=-17.56]  {$p_{2}\Pr( 2\in S')$};
\draw (437.83,137.84) node [anchor=north west][inner sep=0.75pt]  [rotate=-333.31]  {$p_{n}\Pr( n\in S')$};

\end{tikzpicture}
    \vspace{-1cm}
	\caption{\emph{Flow network that can be used to prove \cref{thm:beta_border}.
            We let $S'$ be the random set of bidding agents where each agent $i$ lies in $S'$ independently with probability $\beta_i$.
            Then, there are three kind of edges: edges whose flow corresponds to the probability of observing a specific $S'$ (left), edges whose flow corresponds to how we randomly allocate the item condition on observing a specific $S'$ (middle), and edges whose flow represent the probability that a specific agent gets the item (right).
    		There is a flow of value $\Pr(S'\neq\emptyset)$ if and only if there exist allocation probabilities $p_i^S$ inducing the interim allocation probabilities $p_i$.
            In other words, the flows $p_i^S\Pr(S' = S)$ in the middle transform the probabilities that agents in a certain set $S$ bid to an agent $i\in S$ being allocated.
            We obtain the conditions in \cref{thm:beta_border} by analyzing every minimum-cut of this network.
	}}
	\label{fig:beta_border_flow_network}
\end{figure}
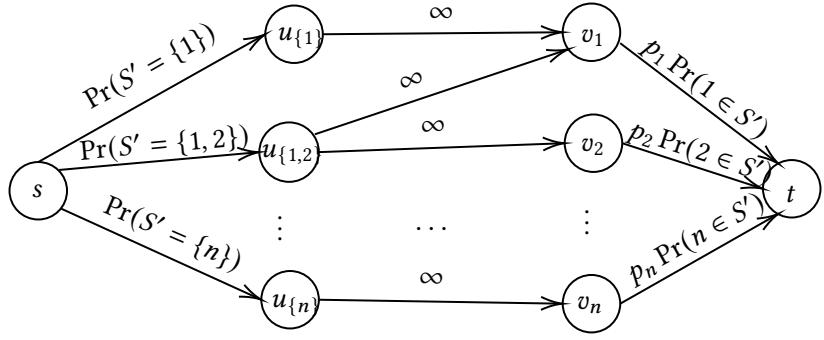

\begin{proof}[Flow network proof of \cref{thm:borderscriterion}]
    As argued in the remarks after \cref{thm:borderscriterion}, \eqref{eq:border_no_waste} is necessary for the $p_i^S$'s to exist, so we assume it.

    Let $\mu$ be the distribution over $S'\subseteq[n]$ where each $i\in S'$ independently with probability $\beta_i$.
    Create an $s$-$t$ flow network as follows.
    For each nonempty $S\subseteq[n]$ create a node $u_S$ and an edge $(s, u_S)$ with capacity
    \begin{equation}
        c(s, u_S) = \Pr_{S'\sim\mu}(S' = S).
    \end{equation}
    For each $i\in[n]$, create a node $v_i$.
    For each $S$ such that $i\in S$, add an edge $(u_S, v_i)$ with infinite capacity.
    Also, add an edge $(v_i, t)$ with capacity
    \begin{equation}
        c(v_i, t) = p_i\Pr_{S'\sim\mu}(i\in S').
    \end{equation}
    The flow network is depicted in \cref{fig:beta_border_flow_network}.

    The cut with $s$ on one side and everything else on the other has capacity
    \begin{equation}
        \sum_{S\subseteq[n]:S\neq\emptyset}c(s,u_S) = \sum_{S\subseteq[n]:S\neq\emptyset}\Pr_{S'\sim\mu}(S'=S) = \Pr_{S'\sim\mu}(S'\neq\emptyset) = 1-\prod_{i\in S}(1-\beta_i).
    \end{equation}
    The cut with $t$ on one side and everything else on the other has capacity
    \begin{equation}
        \sum_{i\in[n]}c(v_i, t) = \sum_{i\in[n]}p_i\Pr_{S'\sim\mu}(i\in S') = \sum_{i\in[n]}p_i\beta_i = 1 - \prod_{i\in[n]}(1-\beta_i),
    \end{equation}
    using \eqref{eq:border_no_waste} for the last equality.
    Observe that in this flow network, allocation probabilities $(p_i^S)$ induce the interim allocation probabilities $(p_i)$ if and only if the flow $f$ is feasible where $f(s, u_S) = c(s, u_S)$, $f(v_i,t) = c(v_i,t)$, and $f(u_S, v_i) = p_i^S\Pr_{S'\sim\mu}(S' = S)$.
    Since both the $s$-$t$ cuts with $s$ on one side and everything else on the other and the cut with $t$ on one side and everything else on the other both have cut capacity $1 - \prod_{i\in[n]}(1-\beta_i)$, it suffices to show that \eqref{eq:borders_criterion} holds only if there is a feasible flow of flow value equal to this cut capacity.

    Take any minimum-capacity $s$-$t$ cut $(A,B)$.
    Since the edges $(u_S, v_i)$ have infinite capacity, if $v_i\in B$ then $u_S\in B$ for any $S$ such that $i\in S$.
    Conversely, for any $S$, if $v_i\in A$ for every $i\in S$, then $u_S\in A$ since there are no edges coming out of $u_S$ except the $(u_S, v_i)$.
    Thus, the cut $(A,B)$ is completely characterized by which nodes $v_i\in B$.
    Let $I\subseteq[n]$ be the subset such that $v_i\in B$ if and only if $i\in I$.
    The total capacity of this cut is
    \begin{equation}
        \begin{split}
            c(A,B) & = \sum_{S\subseteq[n]:S\cap I\neq\emptyset}c(s, u_S) + \sum_{i\notin I}c(v_i, t)                                 \\
                   & = \sum_{S\subseteq[n]:S\cap I\neq\emptyset}\Pr_{S'\sim\mu}(S' = S) + \sum_{i\notin I}p_i\Pr_{S'\sim\mu}(i\in S') \\
                   & = \Pr_{S'\sim\mu}(S'\cap I\neq\emptyset) + \sum_{i\notin I}p_i\beta_i                                            \\
                   & = 1 - \prod_{i\in I}(1-\beta_i) + 1 - \prod_{i\in [n]}(1-\beta_i) - \sum_{i\in I}p_i\beta_i,
        \end{split}
    \end{equation}
    using \eqref{eq:border_no_waste} for the last line.
    Rearranging, one can see that \eqref{eq:borders_criterion} is equivalent to the above being least $1 - \prod_{i\in [n]}(1-\beta_i)$.
\end{proof}

\subsection{\texorpdfstring{\cref{thm:borderscriterion}}{Theorem D.2} with Arbitrary Upper Bounds}
The proof similar to the proof of Border's Theorem given in \cite{che2013generalized}. We construct a flow network that is feasible if and only if there are allocation probabilities inducing given interim allocation probabilities that satisfy some upper bounds. We obtain inequalities that determine whether the flow network is feasible by analyzing every possible minimum cut.

\begin{theorem}
	\label{thm:border_arbitrary_upper_bound}
    Given upper bounds $\bar p_i^S$, there exists $p_i^S$'s that induce interim allocation probabilities $p_i$ such that $p_i^S\leq \bar p_i^S$ for every $S\neq\emptyset$ if and only if
	\begin{equation}
		\label{eq:modifiedborderscriterionnowaste}
		\sum_{i\in[n]}p_i\beta_i = 1 - \prod_{i\in[n]}(1-\beta_i)
	\end{equation}
	and for any $I\subseteq[n]$,
	\begin{equation*}
		\label{eq:modifiedborderscriterion}
		\sum_{i\in I}p_i\beta_i + \prod_{i\in I}(1-\beta_i) + \sum_{S\in\mathcal S}\left(\prod_{i\in S}\beta_i\right)\left(\prod_{i\notin S}(1-\beta_i)\right)\left(1 - \sum_{i\in S\cap I}\bar p_i^S\right)\leq 1
	\end{equation*}
	where $\mathcal S = \{S\subseteq[n]:S\cap I\neq\emptyset, \sum_{i\in S\cap I}\bar p_i^S\leq 1\}$.
\end{theorem}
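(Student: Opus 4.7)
The plan is to prove Theorem \ref{thm:border_arbitrary_upper_bound} by adapting the flow-network argument used in \cref{lem:specific_border_modification}. I construct an $s$-$t$ flow network identical in structure to the one in \cref{fig:modifiedborderscriterionflownetwork}, except that each edge $(u_S, v_i)$ now has finite capacity $\bar p_i^S \Pr_{S' \sim \mu}(S' = S)$, where $\mu$ is the product distribution in which each $i$ lies in $S'$ independently with probability $\beta_i$. As before, edges $(s, u_S)$ have capacity $\Pr(S' = S)$ and edges $(v_i, t)$ have capacity $p_i \Pr(i \in S')$. Feasible allocation probabilities $(p_i^S)$ satisfying $p_i^S \leq \bar p_i^S$ and inducing interim probabilities $(p_i)$ are in bijection with flows that saturate all source and sink edges while setting $f(u_S, v_i) = p_i^S \Pr(S'=S)$; such a flow exists iff the max flow value equals $1 - \prod_i(1 - \beta_i)$. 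Condition \eqref{eq:modifiedborderscriterionnowaste} ensures that both the cut $\{s\}$ vs.\ the rest and the cut $\{t\}$ vs.\ the rest have capacity exactly $1 - \prod_i(1 - \beta_i)$, so by max-flow min-cut it suffices to show that no other cut has strictly smaller capacity.

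Next I characterize the remaining cuts. Fix a cut $(A, B)$ with $s \in A$, $t \in B$, and let $I = \{i : v_i \in B\}$. Any $u_S$ with $S \cap I = \emptyset$ can be placed in $A$ for free, since all of its outgoing edges terminate in $A$. For each $S$ with $S \cap I \neq \emptyset$ there is a genuine choice: placing $u_S$ in $B$ contributes $\Pr(S' = S)$ via the $s$-edge, whereas placing $u_S$ in $A$ contributes $\Pr(S' = S)\sum_{i \in S \cap I}\bar p_i^S$ via its cut outgoing edges into $I$. Parameterizing these choices by the collection $\mathcal{S} \subseteq \{S : S \cap I \neq \emptyset\}$ of sets assigned to $A$, and using \eqref{eq:modifiedborderscriterionnowaste} to rewrite $\sum_{i \notin I} p_i \beta_i = (1 - \prod_i(1 - \beta_i)) - \sum_{i \in I} p_i \beta_i$, the cut capacity simplifies to
\begin{equation*}
c(A,B) = \Bigl(1 - \prod_{i \in [n]}(1 - \beta_i)\Bigr) + \Bigl[1 - \sum_{i \in I} p_i \beta_i - \prod_{i \in I}(1 - \beta_i) - \sum_{S \in \mathcal{S}} \Pr(S'=S)\Bigl(1 - \sum_{i \in S \cap I} \bar p_i^S\Bigr)\Bigr].
\end{equation*}
The requirement $c(A,B) \geq 1 - \prod_i(1 - \beta_i)$ is therefore equivalent to the bracketed term being nonnegative, which, upon substituting $\Pr(S' = S) = \prod_{i \in S} \beta_i \prod_{i \notin S}(1 - \beta_i)$, is exactly condition \eqref{eq:modifiedborderscriterion}. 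Quantifying over every $I \subseteq [n]$ and every $\mathcal{S} \subseteq \{S : S \cap I \neq \emptyset\}$ thus yields the necessary and sufficient condition for every cut to dominate the target flow value.

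The main obstacle, compared to \cref{lem:specific_border_modification}, is that the placement of the $u_S$ nodes on the two sides of the cut is no longer forced: with infinite capacity on $(u_S, v_i)$ one automatically had $u_S \in B$ whenever $S \cap I \neq \emptyset$, but with arbitrary upper bounds $\bar p_i^S$ it can be strictly cheaper to send $u_S$ to $A$ whenever $\sum_{i \in S \cap I}\bar p_i^S < 1$. This is precisely why the theorem must quantify over the additional combinatorial parameter $\mathcal{S}$ alongside $I$. Once the structure of the minimum cut is identified in this way, necessity follows in the standard direction by exhibiting the flow induced by any given $(p_i^S)$, and sufficiency follows from max-flow min-cut applied to the cut analysis above.
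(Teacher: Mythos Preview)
Your proposal is correct and follows essentially the same approach as the paper: both construct the identical flow network with capacities $\bar p_i^S\Pr(S'=S)$ on the middle edges, parameterize cuts by the pair $(I,\mathcal{S})$ with $I=\{i:v_i\in B\}$ and $\mathcal{S}=\{S:S\cap I\neq\emptyset,\,u_S\in A\}$, and reduce the min-cut condition to \eqref{eq:modifiedborderscriterion} via max-flow min-cut. Your explicit observation that the quantification over $\mathcal{S}$ arises precisely because placing $u_S$ on the $A$-side can be cheaper when $\sum_{i\in S\cap I}\bar p_i^S<1$ is a nice clarification that the paper leaves implicit.
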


\begin{proof}
	Since \eqref{eq:modifiedborderscriterionnowaste} is necessary for the $p_i^S$'s to exist in \cref{thm:alpha_border}, it is also necessary here, so we assume it.

	Create an $s$-$t$ flow network as follows.
        Let $\mu$ be the probability distribution over subsets $S'\subseteq[n]$ where each $i\in S'$ independently with probability $\beta_i$.
	For each nonempty $S\subseteq[n]$ create a node $u_S$ and connect it with an edge $(s, u_S)$ to the source node with capacity
	\begin{equation*}
		c(s, u_S) = \Pr_{S'\sim\mu}(S' = S).
	\end{equation*}
	For each $i$, create a node $v_i$.
	For each $S\neq\emptyset$ such that $i\in S$, add an edge $(u_S, v_i)$ with capacity
        \begin{equation*}
           c(u_S, v_i) = \bar p_i^S\Pr_{S'\sim\mu}(S'=S)
        \end{equation*}
	Also, add an edge $(v_i, t)$ with capacity
	\begin{equation*}
		c(v_i, t) = p_i\Pr_{S'\sim\mu}(i\in S').
	\end{equation*}
	The flow network is depicted in \cref{fig:arbitrary_upper_bound_flow_network}.

\begin{figure}
\centering
        \tikzset{every picture/.style={line width=0.75pt}} 

\begin{tikzpicture}[x=0.75pt,y=0.75pt,yscale=-1,xscale=1]

\draw   (109,111.5) .. controls (109,103.49) and (115.49,97) .. (123.5,97) .. controls (131.51,97) and (138,103.49) .. (138,111.5) .. controls (138,119.51) and (131.51,126) .. (123.5,126) .. controls (115.49,126) and (109,119.51) .. (109,111.5) -- cycle ;
\draw   (255,30.5) .. controls (255,22.49) and (261.49,16) .. (269.5,16) .. controls (277.51,16) and (284,22.49) .. (284,30.5) .. controls (284,38.51) and (277.51,45) .. (269.5,45) .. controls (261.49,45) and (255,38.51) .. (255,30.5) -- cycle ;
\draw   (252,92) .. controls (252,82.61) and (259.61,75) .. (269,75) .. controls (278.39,75) and (286,82.61) .. (286,92) .. controls (286,101.39) and (278.39,109) .. (269,109) .. controls (259.61,109) and (252,101.39) .. (252,92) -- cycle ;
\draw   (253,164.5) .. controls (253,156.49) and (259.49,150) .. (267.5,150) .. controls (275.51,150) and (282,156.49) .. (282,164.5) .. controls (282,172.51) and (275.51,179) .. (267.5,179) .. controls (259.49,179) and (253,172.51) .. (253,164.5) -- cycle ;
\draw   (421,31.5) .. controls (421,23.49) and (427.49,17) .. (435.5,17) .. controls (443.51,17) and (450,23.49) .. (450,31.5) .. controls (450,39.51) and (443.51,46) .. (435.5,46) .. controls (427.49,46) and (421,39.51) .. (421,31.5) -- cycle ;
\draw   (422,87.5) .. controls (422,79.49) and (428.49,73) .. (436.5,73) .. controls (444.51,73) and (451,79.49) .. (451,87.5) .. controls (451,95.51) and (444.51,102) .. (436.5,102) .. controls (428.49,102) and (422,95.51) .. (422,87.5) -- cycle ;
\draw   (421,168.5) .. controls (421,160.49) and (427.49,154) .. (435.5,154) .. controls (443.51,154) and (450,160.49) .. (450,168.5) .. controls (450,176.51) and (443.51,183) .. (435.5,183) .. controls (427.49,183) and (421,176.51) .. (421,168.5) -- cycle ;
\draw   (538,112.5) .. controls (538,104.49) and (544.49,98) .. (552.5,98) .. controls (560.51,98) and (567,104.49) .. (567,112.5) .. controls (567,120.51) and (560.51,127) .. (552.5,127) .. controls (544.49,127) and (538,120.51) .. (538,112.5) -- cycle ;
\draw    (123.5,97) -- (253.22,31.4) ;
\draw [shift={(255,30.5)}, rotate = 153.17] [color={rgb, 255:red, 0; green, 0; blue, 0 }  ][line width=0.75]    (10.93,-3.29) .. controls (6.95,-1.4) and (3.31,-0.3) .. (0,0) .. controls (3.31,0.3) and (6.95,1.4) .. (10.93,3.29)   ;
\draw    (134,101) -- (250.01,92.15) ;
\draw [shift={(252,92)}, rotate = 175.64] [color={rgb, 255:red, 0; green, 0; blue, 0 }  ][line width=0.75]    (10.93,-3.29) .. controls (6.95,-1.4) and (3.31,-0.3) .. (0,0) .. controls (3.31,0.3) and (6.95,1.4) .. (10.93,3.29)   ;
\draw    (135,120.5) -- (251.13,163.8) ;
\draw [shift={(253,164.5)}, rotate = 200.45] [color={rgb, 255:red, 0; green, 0; blue, 0 }  ][line width=0.75]    (10.93,-3.29) .. controls (6.95,-1.4) and (3.31,-0.3) .. (0,0) .. controls (3.31,0.3) and (6.95,1.4) .. (10.93,3.29)   ;
\draw    (282,164.5) -- (419,168.44) ;
\draw [shift={(421,168.5)}, rotate = 181.65] [color={rgb, 255:red, 0; green, 0; blue, 0 }  ][line width=0.75]    (10.93,-3.29) .. controls (6.95,-1.4) and (3.31,-0.3) .. (0,0) .. controls (3.31,0.3) and (6.95,1.4) .. (10.93,3.29)   ;
\draw    (286,98) -- (421,94.06) ;
\draw [shift={(423,94)}, rotate = 178.33] [color={rgb, 255:red, 0; green, 0; blue, 0 }  ][line width=0.75]    (10.93,-3.29) .. controls (6.95,-1.4) and (3.31,-0.3) .. (0,0) .. controls (3.31,0.3) and (6.95,1.4) .. (10.93,3.29)   ;
\draw    (284,30.5) -- (419,31.49) ;
\draw [shift={(421,31.5)}, rotate = 180.42] [color={rgb, 255:red, 0; green, 0; blue, 0 }  ][line width=0.75]    (10.93,-3.29) .. controls (6.95,-1.4) and (3.31,-0.3) .. (0,0) .. controls (3.31,0.3) and (6.95,1.4) .. (10.93,3.29)   ;
\draw    (284,86) -- (423.09,41.61) ;
\draw [shift={(425,41)}, rotate = 162.3] [color={rgb, 255:red, 0; green, 0; blue, 0 }  ][line width=0.75]    (10.93,-3.29) .. controls (6.95,-1.4) and (3.31,-0.3) .. (0,0) .. controls (3.31,0.3) and (6.95,1.4) .. (10.93,3.29)   ;
\draw    (450,31.5) -- (544.37,98.84) ;
\draw [shift={(546,100)}, rotate = 215.51] [color={rgb, 255:red, 0; green, 0; blue, 0 }  ][line width=0.75]    (10.93,-3.29) .. controls (6.95,-1.4) and (3.31,-0.3) .. (0,0) .. controls (3.31,0.3) and (6.95,1.4) .. (10.93,3.29)   ;
\draw    (450,168.5) -- (544.18,125.83) ;
\draw [shift={(546,125)}, rotate = 155.62] [color={rgb, 255:red, 0; green, 0; blue, 0 }  ][line width=0.75]    (10.93,-3.29) .. controls (6.95,-1.4) and (3.31,-0.3) .. (0,0) .. controls (3.31,0.3) and (6.95,1.4) .. (10.93,3.29)   ;
\draw    (451,87.5) -- (536.08,111.95) ;
\draw [shift={(538,112.5)}, rotate = 196.03] [color={rgb, 255:red, 0; green, 0; blue, 0 }  ][line width=0.75]    (10.93,-3.29) .. controls (6.95,-1.4) and (3.31,-0.3) .. (0,0) .. controls (3.31,0.3) and (6.95,1.4) .. (10.93,3.29)   ;

\draw (119,107) node [anchor=north west][inner sep=0.75pt]   [align=left] {$\displaystyle s$};
\draw (259,113) node [anchor=north west][inner sep=0.75pt]   [align=left] {$\displaystyle \vdots $};
\draw (430,112) node [anchor=north west][inner sep=0.75pt]   [align=left] {$\displaystyle \vdots $};
\draw (548,109.4) node [anchor=north west][inner sep=0.75pt]    {$t$};
\draw (345,128.4) node [anchor=north west][inner sep=0.75pt]    {$\cdots $};
\draw (259,24.4) node [anchor=north west][inner sep=0.75pt]    {$u_{\{1\}}$};
\draw (252,84.4) node [anchor=north west][inner sep=0.75pt]    {$u_{\{1,2\}}$};
\draw (255,159.4) node [anchor=north west][inner sep=0.75pt]    {$u_{\{n\}}$};
\draw (429,28.4) node [anchor=north west][inner sep=0.75pt]    {$v_{1}$};
\draw (429,83.4) node [anchor=north west][inner sep=0.75pt]    {$v_{2}$};
\draw (427,163.4) node [anchor=north west][inner sep=0.75pt]    {$v_{n}$};
\draw (145.35,66.17) node [anchor=north west][inner sep=0.75pt]  [rotate=-332.74]  {$\Pr( S'=\{1\})$};
\draw (149.52,83.23) node [anchor=north west][inner sep=0.75pt]  [rotate=-354.71]  {$\Pr( S'=\{1,2\})$};
\draw (160.64,110.06) node [anchor=north west][inner sep=0.75pt]  [rotate=-21.71]  {$\Pr( S'=\{n\})$};
\draw (305,12.4) node [anchor=north west][inner sep=0.75pt]    {$\overline{p}_{1}^{\{1\}}\Pr( S'=\{1\})$};
\draw (277.45,65.62) node [anchor=north west][inner sep=0.75pt]  [rotate=-342.09]  {$\overline{p}_{1}^{\{1,2\}}\Pr( S'=\{1,2\})$};
\draw (297.56,77.92) node [anchor=north west][inner sep=0.75pt]  [rotate=-356.57]  {$\overline{p}_{2}^{\{1,2\}}\Pr( S'=\{1,2\})$};
\draw (294.25,144.13) node [anchor=north west][inner sep=0.75pt]  [rotate=-1.83]  {$\overline{p}_{n}^{\{n\}}\Pr( S'=\{n\})$};
\draw (472.43,21.53) node [anchor=north west][inner sep=0.75pt]  [rotate=-38.03]  {$p_{1}\Pr( 1\in S')$};
\draw (458.31,68.19) node [anchor=north west][inner sep=0.75pt]  [rotate=-18.53]  {$p_{2}\Pr( 2\in S')$};
\draw (453.07,147.88) node [anchor=north west][inner sep=0.75pt]  [rotate=-336.97]  {$p_{n}\Pr( n\in S')$};

\end{tikzpicture}
    \vspace{-1cm}
	\caption{\emph{Flow network that is used in the proof of \cref{thm:border_arbitrary_upper_bound}.
            The flow network is the same as \cref{fig:beta_border_flow_network} except the middle edges $(u_S, v_i)$ have capacity $\bar p_i^S\Pr(S'=S)$ to enforce the upper bounds on the allocation probabilities.
	}}
	\label{fig:arbitrary_upper_bound_flow_network}
\end{figure}
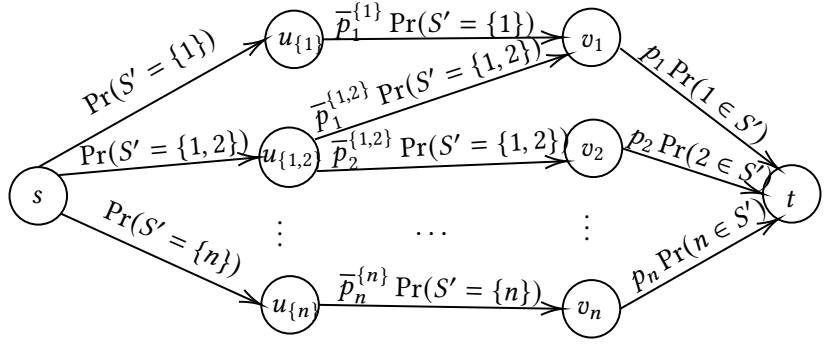
	The cut with $s$ on one side and everything else on the other has capacity
	\begin{equation*}
		\sum_{S\subseteq[n]:S\neq\emptyset}c(s,u_S) = \sum_{S\subseteq[n]:S\neq\emptyset}\Pr_{S'\sim\mu}(S'=S) = \Pr_{S'\sim\mu}(S'\neq\emptyset) = 1-\prod_{i\in S}(1-\beta_i).
	\end{equation*}
	The cut with $t$ on one side and everything else on the other has capacity
	\begin{equation*}
		\sum_{i\in[n]}c(v_i, t) = \sum_{i\in[n]}p_i\Pr_{S'\sim\mu}(i\in S') = \sum_{i\in[n]}p_i\beta_i = 1 - \prod_{i\in[n]}(1-\beta_i),
	\end{equation*}
	using \eqref{eq:modifiedborderscriterionnowaste} for the last equality.
	Observe that in this flow network, allocation probabilities $(p_i^S)$ satisfying upper bounds $p_i^S \leq \bar p_i^S$ induce the interim allocation probabilities $(p_i)$ if and only if the flow $f$ is feasible where $f(s, u_S) = c(s, u_S)$, $f(v_i,t) = c(v_i,t)$, and $f(u_S, v_i) = p_i^S\Pr_{S'\sim\mu}(S' = S)$.
	Since both the $s$-$t$ cuts with $s$ on one side and everything else on the other and the cut with $t$ on one side and everything else on the other both have cut capacity $1 - \prod_{i\in[n]}(1-\beta_i)$, it suffices to show that \eqref{eq:modifiedborderscriterion} holds only if there is a feasible flow of flow value equal to this cut capacity.

	Take any minimum-capacity $s$-$t$ cut $(A,B)$.
    For any $S$, if $v_i\in A$ for every $i\in S$, we can assume $u_S\in A$ since there are no edges coming out of $u_S$ except the $(u_S, v_i)$.
	Thus, the cut $(A,B)$ is completely characterized by which nodes $v_i\in B$ and which nodes $u_S\in A$ for $S$ containing some $i$ such that $v_i\in B$.
	Let $I = \{i\in[n]:v_i\in B\}$ and $\mathcal S = \{S\subseteq 2^{[n]}:S\cap I\neq\emptyset, u_S\in A\}$.
	The total capacity of this cut is
    \begin{equation*}
		\begin{split}
			c(A,B) &= \sum_{S\notin\mathcal S:S\cap I\neq\emptyset}c(s, u_S) + \sum_{S\in\mathcal S}\sum_{i\in S\cap I}c(u_S, v_i) + \sum_{i\notin I}c(v_i, t)\\
            &= \sum_{S:S\cap I\neq\emptyset}c(s, u_S) - \sum_{S\in\mathcal S}c(s, u_S)+ \sum_{S\in\mathcal S}\sum_{i\in S\cap I}c(u_S, v_i) + \sum_{i\notin I}c(v_i, t)\\
            &= \sum_{S:S\cap I\neq\emptyset}\Pr_{S'\sim\mu}(S'=S) - \sum_{S\in\mathcal S}\Pr_{S'\sim\mu}(S'=S) + \sum_{S\in\mathcal S}\sum_{i\in S\cap I}\bar p_i^S\Pr_{S'\sim\mu}(S'=S) + \sum_{i\notin I}p_i\Pr_{S'\sim\mu}(i\in S')\\
            &= \Pr_{S'\sim\mu}(S'\cap I\neq\emptyset) - \sum_{S\in\mathcal S}\Pr_{S'\sim\mu}(S'=S)\left(1-\sum_{i\in S\cap I}\bar p_i^S\right) + \sum_{i\notin I}p_i\beta_i\\
            &= 1 - \prod_{i\in I}(1-\beta_i) - \sum_{S\in\mathcal S}\left(\prod_{i\in S}\beta_i\right)\left(\prod_{i\notin\mathcal S}(1-\beta_i)\right)\left(1-\sum_{i\in S\cap I}\bar p_i^S\right)\\
            &\quad + 1 - \prod_{i\in[n]}(1-\beta_i) - \sum_{i\in I}p_i\beta_i
		\end{split}
	\end{equation*}
	using \eqref{eq:modifiedborderscriterionnowaste} for the last line.
	Rearranging, the above being least $1 - \prod_{i\in [n]}(1-\beta_i)$ is equivalent to
	\begin{equation*}
		\sum_{i\in I}p_i\beta_i + \prod_{i\in I}(1-\beta_i) + \sum_{S\in\mathcal S}\left(\prod_{i\in S}\beta_i\right)\left(\prod_{i\notin S}(1-\beta_i)\right)\left(1 - \sum_{i\in S\cap I}\bar p_i^S\right)\leq 1.
	\end{equation*}
	Thus, the flow network is feasible if and only if the above holds for all $I\subseteq[n]$ and $\mathcal S\subseteq\{S\subseteq [n]:S\cap I\neq\emptyset\}$. To maximize the left-hand side above, we should set $\mathcal S = \{S\subseteq[n]:S\cap I\neq\emptyset, \sum_{i\in S\cap I}\bar p_i^S\leq 1\}$, which gives the theorem statement.
\end{proof}

\section{Deferred Proofs}
\label{sec:appendix_proofs}

\subsection{Deferred Proof of \texorpdfstring{\cref{lem:centralallocationworstcase}}{Lemma 2.2}}
\label{ssec:appendix_central_allocation_worst_case_proof}

\begin{proof}[Proof of \cref{lem:centralallocationworstcase}]
    Under these Bernoulli value distributions, the probability that some agent has value $1$ for the item in a given round is $1-\prod_{j\in[n]}(1-\alpha_j)$, so under any allocation rule,
    \begin{equation}
        \label{eq:centrallyallocatedworstcasetotalutility}
        \frac1T\sum_{t=1}^T\sum_{i\in[n]}\mathbb E[U_i[t]] \leq 1-\prod_{j\in[n]}(1-\alpha_j).
    \end{equation}
    For these value distributions, agent $i$'s ideal utility is $\alpha_i$.
    If each agent obtains a $\lambda$ fraction of their ideal utility in expectation, then
    $
        \frac1T\sum_{t=1}^T\sum_{i\in[n]}\mathbb E[U_i[t]] \geq \lambda \sum_{i\in[n]}\alpha_i = \lambda,
    $ 
    so $\lambda \leq 1-\prod_{j\in[n]}(1-\alpha_j)$ by \eqref{eq:centrallyallocatedworstcasetotalutility}.
\end{proof}
\subsection{Deferred Proofs of \texorpdfstring{\cref{thm:worst_case_interim_allocation_probabilities,cor:proportional_budgets,cor:uniform_distribution}}{Theorem 3.5 and Corollaries B.6 and B.7}}
\label{ssec:interim_allocation_probability_feasibility_proofs}

\begin{proof}[Proof of \cref{thm:worst_case_interim_allocation_probabilities}]
	We prove that the choice of $p_i = 1 - \prod_{j=1}^n (1-\alpha_j)$ satisfies the conditions of \cref{thm:alpha_border}.
	Observe that \eqref{eq:introductory_border_new} holds with equality when $I=[n]$ for this choice of $p_i$.
        Also, observe that \eqref{eq:introductory_border_new} holds for $I=\emptyset$.
        To show that \eqref{eq:introductory_border_new} holds for $I\neq\emptyset$, define $a_I$ as
        \begin{equation*}
            a_I = \frac{1 - \prod_{i\in I}(1-\alpha_i)}{\sum_{i\in I}\alpha_i}.
        \end{equation*}
        With this choice of $p_i$, \eqref{eq:introductory_border_new} says
        \begin{equation*}
            \sum_{i\in I}p_i\alpha_i = \left(1 - \prod_{j=1}^n (1-\alpha_j)\right)\sum_{i\in I}\alpha_i \leq 1 - \prod_{i\in I}(1-\alpha_i),
        \end{equation*}
        which holds if and only if $a_I \geq 1 - \prod_{j=1}^n (1-\alpha_j)$. We shall show that $a_I$ is nonincreasing in $I$, which would suffice because $a_{[n]} = 1 - \prod_{j=1}^n (1-\alpha_j)$. Take any $I\subseteq[n]$ containing some $k\in[n]$ where $I\setminus\{k\}\neq\emptyset$. Compute
        \begin{equation*}
        \begin{split}
        	a_I - a_{I\setminus\{k\}} & = \frac{1 - \prod_{i\in I}(1-\alpha_i)}{\sum_{i\in I}\alpha_i} - \frac{1 - \prod_{i\in I\setminus\{k\}}(1-\alpha_i)}{\sum_{i\in I\setminus\{k\}}\alpha_i}\\
            & = \frac{\left(1 - \prod_{i\in I}(1-\alpha_i)\right)\left(\sum_{i\in I\setminus\{k\}}\alpha_i\right) - \left(1 - \prod_{i\in I\setminus\{k\}}(1-\alpha_i)\right)\left(\sum_{i\in I}\alpha_i\right)}{\left(\sum_{i\in I}\alpha_i\right)\left(\sum_{i\in I\setminus\{k\}}\alpha_i\right)}\\
        	& = \frac{-\alpha_k + \prod_{i\in I\setminus\{k\}}(1-\alpha_i)\left(\sum_{i\in I}\alpha_i - (1-\alpha_k)\sum_{i\in I\setminus\{k\}}\alpha_i\right)}{\left(\sum_{i\in I}\alpha_i\right)\left(\sum_{i\in I\setminus\{k\}}\alpha_i\right)}\\
        	& = \frac{-\alpha_k + \prod_{i\in I\setminus\{k\}}(1-\alpha_i)\alpha_k\left(1 + \sum_{i\in I\setminus\{k\}}\alpha_i\right)}{\left(\sum_{i\in I}\alpha_i\right)\left(\sum_{i\in I\setminus\{k\}}\alpha_i\right)}\\
        	& = \frac{\alpha_k\left(-1 + \prod_{i\in I\setminus\{k\}}(1-\alpha_i)\left(1+\sum_{i\in I\setminus\{k\}}\alpha_i\right)\right)}{\left(\sum_{i\in I}\alpha_i\right)\left(\sum_{i\in I\setminus\{k\}}\alpha_i\right)}.
        \end{split}
        \end{equation*}
This can be seen to be nonpositive by the identities that $1-x\leq e^{-x}$, so $\prod_{i\in I\setminus\{k\}}(1-\alpha_i) \leq \exp\left(-\sum_{i\in I\setminus\{k\}}\alpha_i\right)$, and also $e^{-y}(1+y)\leq 1$ applied to $y = \sum_{i\in I\setminus\{k\}}\alpha_i$.
\end{proof}

The proof of \cref{cor:proportional_budgets} is almost identical to proof of \cref{thm:worst_case_interim_allocation_probabilities}.
\begin{proof}[Proof of \cref{cor:proportional_budgets}]
    We prove that the choice of $p_i = \frac{1-\prod_{j=1}^n (1-\gamma \alpha_j)}{\gamma}$ satisfies the conditions of \cref{thm:beta_border}.
    Observe that \eqref{eq:borderscriterionnowaste} holds with this choice of $p_i$.

    Also, observe that \eqref{eq:borderscriterion} holds for $I=\emptyset$.
        To show that \eqref{eq:borderscriterion} holds for $I\neq\emptyset$, define $a_I$ as
        \begin{equation*}
            a_I = \frac{1 - \prod_{i\in I}(1-\beta_i)}{\sum_{i\in I}\beta_i}.
        \end{equation*}
        With this choice of $p_i$, \eqref{eq:borderscriterion} says
        \begin{equation*}
            \sum_{i\in I}p_i\beta_i = \frac{1 - \prod_{j=1}^n (1-\beta_j)}{\gamma}\sum_{i\in I}\beta_i \leq 1 - \prod_{i\in I}(1-\beta_i).
        \end{equation*}
        This holds if and only if $a_I \geq \frac{1 - \prod_{j=1}^n (1-\beta_j)}{\gamma}$. It suffices to show $a_I$ is nonincreasing in $I$ since $a_{[n]} = \frac{1 - \prod_{j=1}^n (1-\beta_j)}{\gamma}$. To do this, we can simply follow the same proof that $a_I$ as defined in the proof of \cref{thm:worst_case_interim_allocation_probabilities} is nonincreasing, but replace each $\alpha_i$ with $\beta_i$. We did not use the fact that $\sum_{i=1}^n\alpha_i=1$ for that part of the proof of \cref{thm:worst_case_interim_allocation_probabilities}, so everything still works.
\end{proof}

The proof of \cref{cor:uniform_distribution} is an application of \cref{cor:proportional_budgets}.
\begin{proof}[Proof of \cref{cor:uniform_distribution}]
With a $\mathrm{Uniform}([0,1])$ value distribution, the $\beta$-ideal utility is
\begin{equation*}
    v^\star(\beta) = \E_{V\sim\mathrm{Uniform}([0,1])}[V\pmb1\{V > 1-\beta\}] = \frac12\beta(2-\beta).
\end{equation*}
If $\beta = \gamma\alpha$, then
\begin{equation*}
    v^\star(\beta) = \frac{\gamma(1-\gamma\alpha)}{2-\alpha}v^\star(\alpha).
\end{equation*}
Using \cref{cor:proportional_budgets}, we can find interim allocation probabilities $p_i$ such that,
\begin{equation*}
    p_iv^\star(\beta) \geq \frac{1-e^{-\gamma}}{\gamma}v^\star(\beta) \geq \frac{(1-e^{-\gamma})(2-\gamma\alpha)}{2-\alpha}.
\end{equation*}
The result follows from substituting $\alpha = \frac1n$ and $\gamma = \Theta(\log n)$.
\end{proof}

\subsection{Deferred Proofs of \texorpdfstring{\cref{lem:knapsackproblemovertime}, \cref{prop:approximate_nash,prop:beta_approximate_nash}, \cref{thm:nash,thm:general_nash}}{Lemma 4.1, Propositions B.2 and 3.3, Theorems B.4 and 3.6}}
\label{ssec:appendix_equilibrium_proofs}
We shall prove these in the context of our more general mechanism \generalmechanism, which is the same as \mechanism except the each agent gets $\beta_i(1+\delta_i^T)$ budget of bid tokens where $\beta_i\in [0,1]$ is a parameter. \mechanism is \generalmechanism when each $\beta_i = \alpha_i$. See \cref{sec:beta_mechanism} for details.

To prove results about agents' behavior in the mechanism, it will be useful to use the following imaginary game.
This imaginary game will be the same game, but we do not enforce budgets and allow agents to bid regardless of whether they have budget remaining.
Let $\tilde b_i^t$, $\tilde U_i[t]$, and $\tilde i^t$ be the bids, utilities, and winners in this imaginary game, respectively.
We couple the imaginary game and the actual game such that the agents have the same values $V_i[t]$, and $\tilde b_i^t = b_i^t$, $\tilde U_i[t] = U_i[t]$, and $\tilde i^t = i^t$ at all times $t$ in which all agents have budget remaining.

We first prove a lemma about the best strategy for an agent $i$ assuming they 1) do not want to exceed their budget in expectation and 2) the other agents $j\neq i$ are acting in a specific way.
\begin{lemma}
	\label{lem:generalized_knapsack_problem_over_time}
	Fix an agent $i$ and assume all other agents $j\neq i$ are playing in a way such that the sets $S_{\neq i}[t] = \{j\neq i: b_j^t=1\}$ of bidding agents $j\neq i$ are i.i.d.
	across rounds drawn from some distribution $\nu$ over subsets of $[n]\setminus\{i\}$.
	Suppose agent $i$ is trying to maximize her imaginary expected utility subject to the constraint that she does not exceed her budget in expectation; that is, she is choosing $(\tilde b_i^t)$ to solve
	\begin{equation}
		\label{eq:knapsackproblemovertime}
		\max\,\frac1T\sum_{t=1}^T\mathbb E[\tilde U_i[t]] \quad\text{subject to}\quad \sum_{t=1}^T \mathbb E[\tilde b_i^t]\leq \beta_i'T
	\end{equation}
	where $\beta_i' = \beta_i(1+\delta_i^T)$.
	Her optimal strategy is to choose $(\tilde b_i^t)$ to be a $\beta_i'$-aggressive strategy.
	Letting
	\begin{equation}
        \label{eq:generalized_interim_probability}
		p_i(\nu) = \sum_{S\subseteq[n]}p_i^S\Pr_{S_{\neq i}\sim\nu}(S_{\neq i}\cup\{i\} = S)
	\end{equation}
	be the probability that agent $i$ wins a round conditioned on bidding, the $\beta_i'$-aggressive strategy yields agent $i$ utility
	\begin{equation*}
		\frac1T\sum_{t=1}^T \mathbb E[\tilde U_i[t]] = p_i(\nu)v^\star(\beta_i').
	\end{equation*}
\end{lemma}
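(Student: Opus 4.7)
The plan is to decouple agent $i$'s decision from the history and reduce her multi-round optimization to a single-round problem that matches exactly the optimization defining the $\beta_i'$-ideal utility. The first step is to observe that since the sets $S_{\neq i}[t]$ are i.i.d.\ across rounds (drawn from $\nu$) and independent of agent $i$'s actions, the probability that agent $i$ wins round $t$ conditional on bidding is exactly $p_i(\nu)$, regardless of history or of the value $V_i[t]$. Formally, conditioning on $\tilde b_i^t = 1$ and $V_i[t] = v$ and marginalizing the allocation rule $(p_k^{S_{\neq i}[t]\cup\{i\}})_k$ over $S_{\neq i}[t]\sim\nu$ gives $\mathbb{E}[\tilde U_i[t]\mid V_i[t]=v,\tilde b_i^t=1] = v\cdot p_i(\nu)$, and hence $\mathbb{E}[\tilde U_i[t]] = p_i(\nu)\cdot\mathbb{E}[V_i[t]\,\tilde b_i^t]$.

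The second step is to collapse agent $i$'s (possibly history-dependent) strategy into a single-round decision rule via averaging. For each $t$ I define $\rho_t(v) := \Pr(\tilde b_i^t = 1 \mid V_i[t] = v)$, where the conditional probability averages over agent $i$'s internal randomness and the realizations of prior rounds. Because $V_i[t]$ is independent of this history, $\mathbb{E}[\tilde b_i^t] = \mathbb{E}_{V\sim\mathcal F_i}[\rho_t(V)]$ and $\mathbb{E}[V_i[t]\,\tilde b_i^t] = \mathbb{E}_{V\sim\mathcal F_i}[V\rho_t(V)]$. Setting $\bar\rho := \frac{1}{T}\sum_{t=1}^T \rho_t$, the program \eqref{eq:knapsackproblemovertime} reduces to
\begin{equation*}
\max_{\bar\rho:[0,\infty)\to[0,1]}\ p_i(\nu)\cdot\mathbb{E}_{V\sim\mathcal F_i}[V\bar\rho(V)] \qquad \text{subject to} \qquad \mathbb{E}_{V\sim\mathcal F_i}[\bar\rho(V)] \leq \beta_i',
\end{equation*}
which is precisely $p_i(\nu)$ times the optimization defining $v^\star(\beta_i')$. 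Hence the optimal value is $p_i(\nu)\,v^\star(\beta_i')$, attained by $\bar\rho = (\rho_i^{\beta_i'})^\star$, and the time-invariant choice $\rho_t \equiv (\rho_i^{\beta_i'})^\star$ is exactly the $\beta_i'$-aggressive strategy.

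The main (modest) subtlety is justifying the reduction from arbitrary history-dependent strategies to single-round decision rules: it is not automatic, but relies on both the linearity of the objective and the constraint in the per-round bid probabilities, and on $V_i[t]$ being i.i.d.\ and independent of the past, so that only the marginal dependence of $\tilde b_i^t$ on $V_i[t]$ (averaged over history) matters. Once this decoupling is in place, the rest follows directly from the definition of the $\beta$-ideal utility.
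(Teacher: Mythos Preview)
Your proposal is correct and follows essentially the same approach as the paper's proof: first factor out the constant win-probability $p_i(\nu)$ from $\mathbb E[\tilde U_i[t]]$ using the i.i.d.\ assumption on $S_{\neq i}[t]$, then reduce any history-dependent strategy to per-round functions $\rho_t(v)=\Pr(\tilde b_i^t=1\mid V_i[t]=v)$, average these to a single $\bar\rho$, and identify the resulting single-round optimization with the definition of $v^\star(\beta_i')$. The paper presents the factoring-out step via conditioning on $\tilde b_i^t$ rather than on $(V_i[t],\tilde b_i^t)$, but the argument is otherwise the same.
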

Notice that \cref{lem:knapsackproblemovertime} is a special case of the above lemma with parameters $\beta_i = \alpha_i$, $\delta_i^T = 0$, and $\nu$ being the distribution over subsets $S_{\neq i}$ where each $j\in S_{\neq i}$ independently with probability $\alpha_j$. In the proof, we use the notion of $\beta$-ideal utility and $\beta$-ideal utility probability function from \cref{def:ideal_utility} defined in \cref{sec:beta_mechanism}.
\begin{proof}[Proof of \cref{lem:generalized_knapsack_problem_over_time}]
	For any strategy $(\tilde b_i^t)$, at any time $t$,
	\begin{equation}
		\label{eq:expectedutilityarbitrarystrategy}
		\begin{split}
			\mathbb E[\tilde U_i[t]] = \mathbb E[V_i[t]\pmb1\{i=\tilde i^t\}] = \mathbb E[V_i[t]\tilde b_i^t\pmb1\{i=\tilde i^t\}] = \mathbb E[\mathbb E[V_i[t]\tilde b_i^t\pmb1\{i=\tilde i^t\}\mid \tilde b_i^t]] \\
			= \mathbb E[V_i[t]\tilde b_i^t]\mathbb E[\pmb1\{i=\tilde i^t\}\mid \tilde b_i^t=1] = p_i(\nu)\mathbb E[V_i[t]\tilde b_i^t].
		\end{split}
	\end{equation}
	Thus, agent $i$'s maximization problem is equivalent to maximizing $\frac1T\sum_{t=1}^T \mathbb E[V_i[t]\tilde b_i^t]$ subject to the same constraint in \eqref{eq:knapsackproblemovertime}.
	It is clear for any feasible solution $(\tilde b_i^t)$ that the solution that bids at time $t$ with probability $\rho[t](V_i[t]) = \Pr(\tilde b_i^t=1\mid V_i[t])$ is also a feasible solution with the same objective value.
	Thus, we can rewrite the agent's maximization problem in terms of maximizing over $\rho[t](V_i[t])$, i.e., the following maximization problem over measurable functions $\rho[t]:[0,\infty)\to[0,1]$:
	\begin{equation*}
		\frac1T\max\,\sum_{t=1}^T \mathbb E[V_i[t]\rho[t](V_i[t])]\quad\text{subject to}\quad \sum_{t=1}^T \mathbb E[\rho[t](V_i[t])]\leq \beta_i' T.
	\end{equation*}
	Given any optimal solution $(\rho[t])$ to the above, observe that setting $\rho^\star[t] = \frac1T\sum_{s=1}^T \rho[s]$ is also a feasible solution with the same objective value.
	Observe then that $\mathbb E[\rho^\star[t](V_i[t])] = \beta_i'$, and so $\rho^\star[t]$ is a feasible solution in \eqref{eq:idealutility}, the definition of $\beta_i'$-ideal utility, and it must maximize the same objective $\mathbb E[V_i[t]\rho[t](V_i[t])]]$.
	Therefore, $\rho^\star[t]$ is exactly the $\beta_i'$-ideal utility probability function $(\rho_i^{\beta_i'})^\star$, so the optimal bidding strategy $(\tilde b_i^t)^\star$ to solve \eqref{eq:knapsackproblemovertime} is precisely a $\beta_i'$-aggressive strategy.

	By \eqref{eq:expectedutilityarbitrarystrategy}, under such a $\beta_i'$-aggressive strategy, agent $i$ obtains utility $p_i(\nu)v^\star(\beta_i')$.
\end{proof}

To start relating the imaginary game to the actual game, we use Chernoff bounds to show that agents $i$ who use $\beta_i$-aggressive strategies will not run out of budget with high probability.
\begin{lemma}
	\label{lem:dontrunoutofmoney}
	If agent $i$ uses a $\beta_i$-aggressive strategy, the probability that they run out of budget is at most $O\left(\frac{1}{T^2}\right)$.
\end{lemma}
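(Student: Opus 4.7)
The plan is to observe that under a $\beta_i$-aggressive strategy, the bid indicators $b_i^1, b_i^2, \ldots, b_i^T$ are i.i.d.\ Bernoulli random variables with mean $\beta_i$, since the values $V_i[t]$ are i.i.d.\ across rounds and the agent bids precisely when $V_i[t]$ falls in the top $\beta_i$-quantile (or equivalently, with probability $(\rho_i^{\beta_i})^\star(V_i[t])$, which has marginal expectation $\beta_i$). Crucially, this is true in the imaginary (budget-free) game, and the agent only runs out of budget in the actual game when the total number of these i.i.d.\ bids exceeds the endowment $B_i[1] = \beta_i(1+\delta_i^T)T$.

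Next, I would apply a multiplicative Chernoff bound to $X = \sum_{t=1}^T b_i^t$, which has mean $\mu = \beta_i T$. The event of running out of budget is exactly $\{X > (1+\delta_i^T)\mu\}$. Using the standard bound $\Pr(X \geq (1+\delta)\mu) \leq \exp(-\delta^2 \mu / 3)$ for $\delta \in (0,1)$, and plugging in $\delta = \delta_i^T = \sqrt{6\ln T / (\beta_i T)}$, the exponent becomes
\begin{equation*}
    \frac{(\delta_i^T)^2 \beta_i T}{3} = \frac{1}{3} \cdot \frac{6\ln T}{\beta_i T} \cdot \beta_i T = 2\ln T,
\end{equation*}
yielding $\Pr(X > (1+\delta_i^T)\mu) \leq T^{-2}$, which is the claimed $O(1/T^2)$ bound.

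The only minor subtlety is ensuring that $\delta_i^T \in (0,1)$ so that the standard form of the Chernoff bound applies; this holds for all sufficiently large $T$ (specifically $T > 6\ln T / \beta_i$), and for small $T$ the $O(1/T^2)$ bound is trivial since probabilities are at most $1$. There is no real obstacle here: the key observation that bids remain i.i.d.\ in the imaginary game (because the agent's strategy depends only on the current round's value) makes the Chernoff bound directly applicable, and the slack $\delta_i^T$ was chosen precisely to make the exponent equal $2\ln T$.
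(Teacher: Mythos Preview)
Your proposal is correct and takes essentially the same approach as the paper: both observe that the bids are i.i.d.\ $\mathrm{Bernoulli}(\beta_i)$ and apply a multiplicative Chernoff bound, with the slack $\delta_i^T = \sqrt{6\ln T/(\beta_i T)}$ calibrated so the exponent is $2\ln T$. The only cosmetic difference is that the paper cites the form $\exp(-\delta^2\mu/(2+\delta))$ rather than $\exp(-\delta^2\mu/3)$, but these agree once $\delta_i^T\le 1$, exactly as you note.
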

\begin{proof}
	By the Chernoff bound,
	\begin{equation*}
		\Pr\left(\sum_{t=1}^Tb_i^t \geq \beta_i(1+\delta_i^T)T\right) \leq \exp\left(-\frac{(\delta_i^T)^2\beta_iT}{2+\delta_i^T}\right).
	\end{equation*}
	The result follows from substituting $\delta_i^T = \sqrt{\frac{6\ln T}{\beta_i T}}$.
\end{proof}

We shall use the following lemma to obtain high probability bounds on the agents' utilities in the actual game. Remember that we coupled in the imaginary game and the actual game such that the agents have the same values $V_i[t]$, and $\tilde b_i^t = b_i^t$, $\tilde U_i[t] = U_i[t]$, and $\tilde i^t = i^t$ at all times $t$ in which all agents have budget remaining. This implies that the strategy used by a player $i$ in the imaginary game directly translates to a strategy used by $i$ in the actual game. We define a $\beta$-aggressive strategy in the imaginary game to be one in which agent $i$ bids when her value is in the top $\beta$-quantile of her value distribution. Notice that if player $i$ is playing a $\beta$-aggressive strategy in the imaginary game, then she is playing a $\beta$-aggressive strategy in the actual game (since the only difference is the budget constraint).
\begin{lemma}
\label{lem:high_probability_utility}
Fix an agent $i$ playing a $\beta_i$-aggressive strategy. Suppose the other agents $j\neq i$ are playing in the imaginary game with strategies as in \cref{lem:generalized_knapsack_problem_over_time}. Let $E_1$ be an event on which agents $j\neq i$ do not run out of budget in the actual game. Then, on an subevent of $E_1$ with probability at least $\Pr(E_1) - O(1/T^2)$,
\begin{equation*}
    \left|\frac1T\sum_{t=1}^T \tilde U_i[t]- p_i(\nu)v_i^\star(\beta_i)\right| \leq O\left(\sqrt{\frac{\log T}{T}}\right).
\end{equation*}
\end{lemma}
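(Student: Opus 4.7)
The plan is to apply Hoeffding's inequality to the imaginary per-round utilities $\tilde U_i[t]$, which under our hypothesis form an i.i.d.\ bounded sequence with the per-round mean $p_i(\nu)v_i^\star(\beta_i)$. The event $E_1$ does not appear in the concentration bound itself; it only shows up at the end, where we intersect the Hoeffding success event with $E_1$.

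First I would verify that $\tilde U_i[t]$ is i.i.d.\ across $t$. By the hypothesis inherited from \cref{lem:generalized_knapsack_problem_over_time}, the sets $S_{\neq i}[t]$ are drawn i.i.d.\ from $\nu$; the values $V_i[t]$ are i.i.d.\ by the model; and agent $i$'s imaginary bid $\tilde b_i^t$ under a $\beta_i$-aggressive strategy is a (possibly randomized) function $\rho_i^{\beta_i}(V_i[t])$ of $V_i[t]$ alone, hence also i.i.d. The winner $\tilde i^t$ is a function of $(\tilde b_i^t, S_{\neq i}[t])$ and fresh randomness used by the allocation rule $p_i^S$, so $\tilde U_i[t]=V_i[t]\mathbf{1}\{\tilde i^t=i\}$ is an i.i.d.\ function of i.i.d.\ inputs. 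Moreover, since $V_i[t]$ is bounded (say in $[0,V_{\max}]$), $\tilde U_i[t]\in[0,V_{\max}]$.

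Next I would compute the per-round expectation exactly as in the proof of \cref{lem:generalized_knapsack_problem_over_time}. Conditional on $\tilde b_i^t=1$, agent $i$ wins with probability precisely $p_i(\nu)$ by \eqref{eq:generalized_interim_probability}, so
\begin{equation*}
    \E[\tilde U_i[t]]=p_i(\nu)\,\E[V_i[t]\tilde b_i^t],
\end{equation*}
and by definition of the $\beta_i$-aggressive strategy (i.e., $\tilde b_i^t=\rho_i^{\beta_i}(V_i[t])$ with $\rho_i^{\beta_i}$ being the $\beta_i$-ideal utility probability function), $\E[V_i[t]\tilde b_i^t]=v_i^\star(\beta_i)$. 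Thus $\E[\tilde U_i[t]]=p_i(\nu)v_i^\star(\beta_i)$.

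Finally I would apply Hoeffding's inequality: for any $\varepsilon>0$,
\begin{equation*}
    \Pr\!\left(\left|\tfrac{1}{T}\textstyle\sum_{t=1}^T \tilde U_i[t]-p_i(\nu)v_i^\star(\beta_i)\right|>\varepsilon\right)\le 2\exp\!\left(-2T\varepsilon^2/V_{\max}^2\right).
\end{equation*}
Choosing $\varepsilon=V_{\max}\sqrt{c\log T/T}$ for a suitable constant $c$ makes the right-hand side $O(1/T^2)$. Let $E_2$ be the event on which this concentration inequality succeeds. Then $E_1\cap E_2\subseteq E_1$ with $\Pr(E_1\cap E_2)\ge \Pr(E_1)-\Pr(E_2^c)\ge \Pr(E_1)-O(1/T^2)$, giving the desired subevent. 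There is no substantive obstacle beyond bookkeeping; the only delicate point is checking i.i.d.-ness, which relies crucially on the hypothesis of \cref{lem:generalized_knapsack_problem_over_time} that the other agents bid in an i.i.d.\ fashion from $\nu$ rather than following an arbitrary history-dependent policy.
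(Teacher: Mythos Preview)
Your proposal is correct and follows essentially the same approach as the paper: both observe that the $\tilde U_i[t]$ are i.i.d.\ bounded with mean $p_i(\nu)v_i^\star(\beta_i)$, apply Hoeffding with $\varepsilon=\Theta(\sqrt{\log T/T})$, and intersect the success event with $E_1$. The only cosmetic difference is that the paper also intersects with the event $E_3$ that agent $i$ herself does not run out of budget (via \cref{lem:dontrunoutofmoney}), which lets it additionally equate $\tfrac1T\sum_t U_i[t]$ with $\tfrac1T\sum_t \tilde U_i[t]$ on the resulting event; this extra step is not needed for the lemma as stated.
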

\begin{proof}
Observe that the random variables $\tilde U_i[t] = V_i[t]\pmb1\{\tilde i^t = i\}$ are i.i.d. By \eqref{eq:expectedutilityarbitrarystrategy}, each has mean $p_i(\nu)\mathbb E[V_i[t]\tilde b_i^t]$. Since player $i$ is playing a $\beta_i$-aggressive strategy, this mean is $\mathbb E[\tilde U_i[t]] = p_i(\nu)v_i^\star(\beta_i)$. Recall that we assume the value distribution $\mathcal F_i$ is bounded so the $U_i$ are bounded by some $\bar v$. Let $\epsilon > 0$. By Hoeffding's inequality,
\begin{equation}
\label{eq:hoeffding_utility}
    \Pr\left(\left|\sum_{t=1}^T \tilde U_i[t] - p_i(\nu)v_i^\star(\beta_i) T\right| \geq \epsilon \right)\leq 2\exp\left(-\frac{2\epsilon^2}{\bar v^2T}\right)
\end{equation}
Let $E_2$ be the event that the above event does not occur. Let $E_3$ be the event that agent $i$ does not run out of budget. Let $E = E_1\cap E_2\cap E_3$. On $E$,
\begin{equation*}
    \left|\frac1T\sum_{t=1}^T U_i[t] - p_i(\nu)v_i^\star(\beta_i)\right| = \left|\frac1T\sum_{t=1}^T \tilde U_i[t]- p_i(\nu)v_i^\star(\beta_i)\right| \leq \frac{\epsilon}{T}.
\end{equation*}
Substituting $\epsilon = \bar v\sqrt{T\ln T}$, the above is at most $O\left(\sqrt{\frac{\log T}{T}}\right)$, and by also substituting this $\epsilon$ into \eqref{eq:hoeffding_utility} and by \cref{lem:dontrunoutofmoney}, we have $\Pr(E) \geq \Pr(E_1) - O(1/T^2)$, giving the result.
\end{proof}

The following lemma gives some form of continuity in the ideal utility that we need to bound the utility an agent can obtain from deviating from the proposed equilibrium.
\begin{lemma}
	\label{lem:idealutilitylipschitz}
	Let $\beta_i' = (1+\delta_i^T)\beta_i$.
	Then,
	\begin{equation*}
		v_i^\star(\beta') - v_i^\star(\beta) \leq \delta_i^Tv_i^\star(\beta_i).
	\end{equation*}
\end{lemma}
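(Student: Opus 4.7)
The plan is a short homogeneity/scaling argument. Unpacking the claim, I want to show $v_i^\star(\beta_i') \leq (1+\delta_i^T)\,v_i^\star(\beta_i)$; equivalently, that the map $\beta \mapsto v_i^\star(\beta)/\beta$ is non-increasing on $(0,1]$. This should follow directly from the structure of the program in \cref{def:ideal_utility}: both the constraint and the objective are linear in $\rho$, so shrinking $\rho$ by a scalar $\geq 1$ preserves feasibility while scaling the objective by the same factor.

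Concretely, my steps are as follows. First, I would take any $\rho^\star:[0,\infty)\to[0,1]$ achieving $v_i^\star(\beta_i')$, so that $\E_{V_i\sim\mathcal F_i}[\rho^\star(V_i)] \leq \beta_i'$ and $\E_{V_i\sim\mathcal F_i}[V_i\rho^\star(V_i)] = v_i^\star(\beta_i')$. Next, I would define the rescaled function $\tilde\rho = \rho^\star/(1+\delta_i^T)$. Since $1+\delta_i^T \geq 1$ (as $\delta_i^T\geq 0$), the values of $\tilde\rho$ remain in $[0,1]$, so $\tilde\rho$ is a valid measurable function for the program defining $v_i^\star(\beta_i)$. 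By linearity of expectation,
\begin{equation*}
\E_{V_i\sim\mathcal F_i}[\tilde\rho(V_i)] \;=\; \frac{1}{1+\delta_i^T}\,\E_{V_i\sim\mathcal F_i}[\rho^\star(V_i)] \;\leq\; \frac{\beta_i'}{1+\delta_i^T} \;=\; \beta_i,
\end{equation*}
so $\tilde\rho$ is feasible. Hence by optimality,
\begin{equation*}
v_i^\star(\beta_i) \;\geq\; \E_{V_i\sim\mathcal F_i}[V_i\tilde\rho(V_i)] \;=\; \frac{v_i^\star(\beta_i')}{1+\delta_i^T},
\end{equation*}
which I would rearrange to $v_i^\star(\beta_i') \leq (1+\delta_i^T)v_i^\star(\beta_i)$ and then subtract $v_i^\star(\beta_i)$ from both sides to obtain the claimed bound.

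I do not expect any genuine obstacle. The only subtle point worth flagging in the write-up is that $\tilde\rho \in [0,1]$ uses $1+\delta_i^T \geq 1$ (equivalently $\delta_i^T \geq 0$); the conclusion would fail for $\delta_i^T < 0$, where one would instead need a concavity argument (noting that $v_i^\star$ is concave in $\beta$ as the value of a linear program with a single right-hand side parameter, with $v_i^\star(0)=0$). Either viewpoint gives the same qualitative statement: the ideal-utility curve is concave through the origin, so dilating the budget by a factor $1+\delta_i^T$ scales the optimum by at most the same factor.
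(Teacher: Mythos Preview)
Your argument is correct. It differs slightly from the paper's proof: the paper simply cites that $\beta\mapsto v_i^\star(\beta)$ is concave (from \cite{fikioris2023online}) and combines this with $v_i^\star(0)=0$ to conclude. Your rescaling argument instead proves the specific inequality $v_i^\star(\beta_i')\leq(1+\delta_i^T)v_i^\star(\beta_i)$ directly from the definition of the optimization problem, without invoking concavity as a black box. The two are morally the same---your scaling of $\rho^\star$ by $1/(1+\delta_i^T)$ is exactly the convex combination that witnesses concavity through the origin---but your version is more self-contained, while the paper's is shorter by outsourcing the work. You correctly note the concavity viewpoint as an alternative and flag the $\delta_i^T\geq 0$ assumption your scaling uses; nothing is missing.
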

\begin{proof}
	It was proven in \cite{fikioris2023online} that $\beta\mapsto v_i^\star(\beta)$ is concave.
	The lemma statement follows from concavity and the fact that $v_i^\star(0)=0$.
\end{proof}

With those lemmas, we can now prove that each player following a $\beta_i$-aggressive strategy and give their utility guarantee. We use $p_i$ to denote player $i$'s interim allocation probability, which we define in \cref{def:beta_interim_allocation_probability}. Equivalently, $p_i$ denotes $p_i(\nu)$ as defined in \eqref{eq:generalized_interim_probability} when $\nu$ is the distribution over $S_{\neq i}$ where $j\in S_{\neq i}$ independently with probability $\beta_j$. The below theorem is \cref{prop:beta_approximate_nash,thm:general_nash} combined. It is a direct generalization of \cref{prop:approximate_nash}, which can be obtained by setting $\beta_i = \alpha_i$ for each $i$. When using $\beta_i=\alpha_i$ and \cref{lem:specific_border_modification,lem:key_lemma} to set the allocation probabilities, we also obtain \cref{thm:nash}.
\begin{theorem}
\label{thm:appendix_nash}
Suppose we run \generalmechanism with slack parameters $\delta_i^T = \sqrt{\frac{6\ln T}{\beta_i T}}$. Each player $i$ playing a $\beta_i$-aggressive is an $O\left(\sqrt{\frac{\log T}{T}}\right)$-approximate Nash equilibrium. At this approximate equilibrium, with probability at least $1 - O(1/T^2)$, player $i$ gets utility
\begin{equation*}
    \frac1T\sum_{t=1}^T U_i[t] \geq p_iv^\star(\beta_i) - O\left(\sqrt{\frac{\log T}{T}}\right).
\end{equation*}
\end{theorem}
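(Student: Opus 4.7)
The plan is to establish the two conclusions separately, both by bridging between the actual budget-enforced game and the imaginary no-budget game for which the building-block lemmas are already stated. First I will handle the utility guarantee assuming everyone plays their prescribed $\beta_j$-aggressive strategy, and then use it, together with a deviation bound, to conclude that no agent gains more than $O(\sqrt{\log T / T})$ by deviating.

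For the utility bound, observe that when every agent $j \ne i$ plays a $\beta_j$-aggressive strategy in the actual game, their bids are i.i.d.\ Bernoulli$(\beta_j)$ in the imaginary game, so the induced distribution $\nu$ over $S_{\neq i}$ is exactly the product distribution that makes $p_i(\nu) = p_i$ in the sense of \cref{def:beta_interim_allocation_probability}. Applying \cref{lem:dontrunoutofmoney} to each agent $j \ne i$ and taking a union bound gives an event $E_1$ of probability $1 - O(1/T^2)$ on which no agent $j \ne i$ runs out of budget; on $E_1$ the actual game and the imaginary game agree for all these agents. Plugging this into \cref{lem:high_probability_utility} (with $i$ playing a $\beta_i$-aggressive strategy, which by construction has expected spend $\beta_i T \le \beta_i(1+\delta_i^T) T$) yields the desired lower bound $p_i v_i^\star(\beta_i) - O(\sqrt{\log T / T})$ on a further subevent of probability $1 - O(1/T^2)$.

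For the approximate-equilibrium part, fix any deviation strategy $\pi_i'$ for player $i$ while the others continue to play $\beta_j$-aggressive. The actual game hard-enforces that $i$ never makes more than $\beta_i(1+\delta_i^T)T = \beta_i' T$ bids, so in the coupled imaginary game her total (and hence expected) bid count is at most $\beta_i' T$. Thus $\pi_i'$ is feasible for the optimization problem of \cref{lem:generalized_knapsack_problem_over_time} with budget $\beta_i'$ against distribution $\nu$ equal to the product Bernoulli$(\beta_j)$, so her expected per-round imaginary utility is at most $p_i v_i^\star(\beta_i')$. By \cref{lem:idealutilitylipschitz}, $v_i^\star(\beta_i') \le (1+\delta_i^T) v_i^\star(\beta_i)$, so the expected imaginary utility is at most $p_i v_i^\star(\beta_i) + p_i \delta_i^T v_i^\star(\beta_i) = p_i v_i^\star(\beta_i) + O(\sqrt{\log T / T})$ using $\delta_i^T = \sqrt{6 \ln T / (\beta_i T)}$. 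Since actual and imaginary utilities agree on the high-probability event that the other agents do not exhaust their budgets, the expected actual utility of $\pi_i'$ differs from the expected imaginary utility by at most $\bar v \cdot O(1/T^2)$. Combining with the lower bound from the previous paragraph shows the deviation gain is $O(\sqrt{\log T / T})$, which is the approximate-Nash claim.

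The main subtlety I expect to have to be careful about is precisely the expectation-vs.-hard-budget translation in the deviation direction. \Cref{lem:generalized_knapsack_problem_over_time} bounds the optimum only among strategies that respect the budget \emph{in expectation}, whereas in the actual game the budget is hard-enforced; fortunately hard enforcement implies expectation respect with slack $\beta_i(1+\delta_i^T)$, so the slack $\delta_i^T$ is precisely what must be absorbed into the error term, and \cref{lem:idealutilitylipschitz} is the right tool to do so. A second minor point is that the deviator $\pi_i'$ can in principle be adaptive, but since the distribution of $(S_{\neq i}[t])_t$ in the imaginary game is i.i.d.\ and independent of $i$'s actions, \cref{lem:generalized_knapsack_problem_over_time}'s argument still applies to her best-response problem verbatim.
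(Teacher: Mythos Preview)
Your proposal is correct and follows essentially the same argument as the paper: both use \cref{lem:dontrunoutofmoney} and a union bound to get the high-probability event $E_1$, then invoke \cref{lem:high_probability_utility} for the utility lower bound; for the deviation direction, both note that the hard budget cap implies the expectation cap with slack $\beta_i(1+\delta_i^T)$, apply \cref{lem:generalized_knapsack_problem_over_time} to upper-bound the imaginary utility by $p_i v_i^\star(\beta_i(1+\delta_i^T))$, translate back to the actual game via the $O(1/T^2)$ budget-failure correction, and absorb the $\delta_i^T$ slack using \cref{lem:idealutilitylipschitz}. Your explicit remark that the high-probability lower bound must be converted to an expectation lower bound before comparing with the deviation upper bound is the one step you leave slightly implicit, but it is straightforward (the paper writes it out as \eqref{eq:originalgameutilityguarantee}).
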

\begin{proof}
	Suppose every agent $i$ is using a $\beta_i$-aggressive strategy.
        Let $E_1$ be the event that no agent runs out of budget.
        By \cref{lem:dontrunoutofmoney}, $\Pr(E_1) \geq 1 - O(1/T^2)$.
        Using \cref{lem:high_probability_utility}, there is an event $E$ of probability at least $1 - O(1/T^2)$ on which
	\begin{equation*}
	   \frac1T\sum_{t=1}^T \tilde U_i[t] \geq p_iv^\star(\beta_i) - O\left(\sqrt{\frac{\log T}{T}}\right).
	\end{equation*}
        This establishes the high probability utility guarantee if every agent $i$ is playing a $\beta_i$-aggressive strategy.
        Now let us show this strategy profile is indeed a Nash equilibrium.
        The high probability utility guarantee translates to a utility guarantee in expectation in that
	\begin{equation}
		\label{eq:originalgameutilityguarantee}
		\frac1T\sum_{t=1}^T \mathbb E[U_i[t]] \geq p_iv^\star(\beta_i) - O\left(\sqrt{\frac{\log T}{T}}\right).
	\end{equation}

	Now we upper bound the utility of any deviating strategy by player $i$, still assuming players $j\neq i$ are following a $\beta_j$-aggressive strategy.
	Any strategy $(b_i^t)$ used by player $i$ in the actual game satisfies the budget constraint $\sum_{t=1}^T b_i^T \leq \beta_i(1+\delta_i^T)T$ almost surely.
	In particular, it satisfies the budget constraint in expectation, so we can use \cref{lem:knapsackproblemovertime} to conclude that under the strategy $(b_i^t)$ in the imaginary game,
	\begin{equation*}
		\frac1T\sum_{t=1}^T \mathbb E[\tilde U_i[t]] \leq p_iv^\star(\beta_i(1+\delta_i^T)).
	\end{equation*}
	Therefore,
	\begin{equation}
		\begin{split}
			\label{eq:originalgameutilityupperbound}
			\frac1T\sum_{t=1}^T \mathbb E[U_i[t]] & = \frac1T\sum_{t=1}^T \mathbb E[U_i[t]\pmb1_{E_1}] + \frac1T\sum_{t=1}^T \mathbb E[U_i[t]\pmb1_{E_1^c}]\\
            & \leq \frac1T\sum_{t=1}^T \mathbb E[\tilde U_i[t]] + \frac1T\sum_{t=1}^T \mathbb E[V_i[t]\pmb1_{E_1^c}]\\
            &\leq p_iv_i^\star(\beta_i(1+\delta_i^T)) + O\left(\frac{1}{T^2}\right).
		\end{split}
	\end{equation}
	By \eqref{eq:originalgameutilityguarantee} and \eqref{eq:originalgameutilityupperbound}, by deviating from a $\beta_i$-aggressive strategy, player $i$ can only gain an additive utility difference of
	\begin{equation*}
		p_iv_i^\star(\beta_i(1+\delta_i^T))T - p_iv^\star(\beta_i)T + O\left(\sqrt{\frac{\log T}{T}}\right).
	\end{equation*}
	By substituting $\delta_i^T = \sqrt{\frac{6\ln T}{\beta_i T}}$ and using \cref{lem:idealutilitylipschitz}, this implies that this additive difference is at most $O\left(\sqrt{\frac{\log T}{T}}\right)$, thus proving the theorem.

\end{proof}

\subsection{Deferred Proofs from \texorpdfstring{\cref{sec:anytime}}{section C}}
\label{ssec:appendix_anytime_proofs}
In all the proofs in this subsection, we use the following notation.
Let $\tilde b_i^t$ be the bids of agent $i$ corresponding to a $\beta_i$-aggressive strategy in the imaginary game introduced in \cref{ssec:appendix_equilibrium_proofs} where there is no budget constraint that agree with the actual bids $b_i^t$ at times $t$ where agent $i$ has budget.
Let $U_i[t,T]$ denote the utility of player $i$ gained at time $t$ under the policy $\tilde b_i^t$ if only the $n$ constraints $\sum_{t=1}^T b_k^t \leq \beta_k(1+\delta_k^T)T$ were enforced.
If agent $i$ deviates to a policy $(b_i^t)$, we let $U_i'[t]$ and $U_i'[t,T]$ be analogous to $U_i[t]$ and $U_i[t,T]$ for the deviating policy.
\begin{proof}[Proof of \cref{thm:anytime_equilibrium}]

By \cref{lem:dontrunoutofmoney}, the event $E_1^s$ that no agent runs out of budget at time $s$ has probability at least $1 - O(1/s^2)$. By the union bound, the event $E_1 = \bigcap_{s=\lceil\sqrt{t}\rceil}^t E_1^s$ that no agent runs out of budget at any time $\lceil\sqrt{t}\rceil$ and $t$ has probability at least $1 - \sum_{s=\lceil\sqrt t\rceil}^t (1-\Pr(E_1^s)) = 1-O(1/\sqrt{t})$. By \cref{lem:high_probability_utility}, on an subevent $E$ of $E_1$ with probability at least $1- O(1/\sqrt t)$,
\begin{equation*}
    \frac1t\sum_{s=1}^t U_i[s, t] = \frac1t\sum_{s=1}^t U_i[s,t] \geq p_iv_i^\star(\beta_i) - O\left(\sqrt{\frac{\log t}{t}}\right).
\end{equation*}
Since on $E_1$, we have $U_i[s,t] = U_i[s]$ for each $s$ between $\lceil \sqrt t\rceil$ and $t$, we obtain on $E$,
\begin{equation*}
    \frac1t\sum_{s=1}^t U_i[s] \geq \frac1t \sum_{s=\lceil \sqrt t\rceil}^t U_i[s,t] \geq  \frac1t \sum_{s=1}^t U_i[s,t] - \frac1t\sum_{s=1}^{\lceil \sqrt t\rceil -1}V_i[s] \geq p_iv_i^\star(\beta_i) - O\left(\sqrt{\frac{\log t}{t}}\right),
\end{equation*}
establishing the utility guarantee.

Suppose agent $i$ deviates to a policy $(b_i^t)'$. By \cref{thm:appendix_nash},
\begin{equation*}
\begin{split}
    \frac1t \sum_{s=1}^t \mathbb E[U_i'[s]] & = \frac1t\sum_{s=1}^t \mathbb E[U_i'[s]\pmb1_{E_1}] + \frac1t\sum_{s=1}^t \mathbb E[U_i'[s]\pmb1_{E_1^c}]\\
    & = \frac1t\sum_{s=1}^{\lceil \sqrt t\rceil}\mathbb E[U_i'[s]]  + \frac1t\sum_{s=\lceil\sqrt t\rceil}^t\mathbb E[U_i'[s, t]] + \frac1t\sum_{s=1}^t \mathbb E[U_i'[s]\pmb1_{E_1^c}]\\
    & \leq \frac1t\sum_{s=1}^{\lceil \sqrt t\rceil-1}\mathbb E[U_i'[s]]  + \frac1t\sum_{s=1}^t\mathbb E[U_i[s, t]]  + \frac1t\sum_{s=1}^t \mathbb E[U_i'[s]\pmb1_{E_1^c}]\\
    & = \frac1t\sum_{s=1}^{\lceil \sqrt t\rceil-1}\mathbb E[U_i'[s]]  + \frac1t\sum_{s=1}^{\lceil \sqrt t\rceil - 1}\mathbb E[U_i[s,t]\pmb1_{E_1}] + \frac1t\sum_{s=\lceil \sqrt t\rceil}^t\mathbb E[U_i[s,t]\pmb1_{E_1}]\\
    & \quad + \frac1t\sum_{s=1}^t \mathbb E[U_i[s,t]\pmb1_{E_1^c}] + \frac1t\sum_{s=1}^t \mathbb E[U_i'[s]\pmb1_{E_1^c}]\\
    & = \frac1t\sum_{s=1}^{\lceil \sqrt t\rceil-1}\mathbb E[U_i'[s]]  + \frac1t\sum_{s=1}^{\lceil \sqrt t\rceil - 1}\mathbb E[U_i[s,t]\pmb1_{E_1}] + \frac1t\sum_{s=\lceil \sqrt t\rceil}^t\mathbb E[U_i[s]\pmb1_{E_1}]\\
    & \quad + \frac1t\sum_{s=1}^t \mathbb E[U_i[s,t]\pmb1_{E_1^c}] + \frac1t\sum_{s=1}^t \mathbb E[U_i'[s]\pmb1_{E_1^c}]\\
    & \leq \frac1t\sum_{s=1}^t \mathbb E[U_i[s]] + O\left(\sqrt{\frac{\log t}{t}}\right).
\end{split}
\end{equation*}
By substituting $t=T$, we see that everyone playing a $\beta_i$-aggressive strategy is indeed an $O\left(\sqrt{\frac{\log T}{T}}\right)$-equilibrium.
\end{proof}

\begin{proof}[Proof of \cref{thm:anytime_robustness}]
Assume without loss of generality that the agents $j\neq i$ never bid when they're out of budget. As in proof of \cref{thm:anytime_equilibrium}, the event $E_1$ that agent $i$ does not run out of budget at any time between $\lceil \sqrt{t}\rceil $ and $t$ has probability at least $1 - O(1/\sqrt{t})$. Then, on $E_1$, no one runs out of budget between time $\lceil \sqrt t\rceil$ and $t$. Using \cref{thm:nash}, there is an subevent $E$ of $E_1$ of probability at least $1 - O(1/\sqrt t)$ on which
\begin{equation*}
\begin{split}
    \frac1t\sum_{s=1}^t U_i[s] & \geq \frac1t\sum_{s=\lceil \sqrt t\rceil}^tU_i[s,t]\\
    & \geq \left(\frac12 + \frac12\alpha_i^2\right)v_i^\star - O\left(\sqrt{\frac{\log t}{t}}\right) - \frac1t\sum_{s=1}^{\lceil \sqrt t\rceil -1}V_i[s]\\
    & \geq \left(\frac12 + \frac12\alpha_i^2\right)v_i^\star - O\left(\sqrt{\frac{\log t}{t}}\right).
\end{split}
\end{equation*}
\end{proof}

\begin{proof}[Proof of \cref{thm:exactnash}]
	By \cref{lem:dontrunoutofmoney} and the Borel-Cantelli Lemma, there exists a random time $t_0$ such that $\sum_{s=1}^t \tilde b_i^s \leq \beta_i(1+\delta_i^t)t$ for all $t > t_0$ and all agents $i$ where $t_0 < \infty$ almost surely.
	Suppose agent $i$ deviates to a policy $(b_i^s)'$.
	Using \cref{thm:appendix_nash},
	\begin{equation*}
		\begin{split}
			\frac1t\sum_{s=1}^t \mathbb E[U_i[s]] & = \mathbb E\left[\frac1t\sum_{s=1}^{t_0}
			U_i[s]\right] + \mathbb E\left[\frac1t\sum_{s=t_0+1}^t U_i'[t, T]\right]         \\ & \leq \mathbb E\left[\frac1t\sum_{s=1}^{t_0}U_i[s]\right] + \mathbb E\left[\frac1t\sum_{s=1}^t U_i'[t, T]\right]\\ & \leq \mathbb E\left[\frac1t\sum_{s=1}^{t_0}U_i[s]\right] + \mathbb E\left[\frac1t\sum_{s=1}^t U_i[s,t]\right] + O\left(\sqrt{\frac{\log t}{t}}\right)\\ & = \mathbb E\left[\frac1t\sum_{s=1}^{t_0}U_i[s]\right] + \mathbb E\left[\frac1t\sum_{s=1}^{t_0}U_i[s,t]\right] + \mathbb E\left[\frac1t\sum_{s=t_0+1}^tU_i[s]\right]+ O\left(\sqrt{\frac{\log t}{t}}\right)\\ & \leq 2\mathbb E\left[\frac1t\sum_{s=1}^{t_0}V_i[s]\right] + \frac1t\sum_{s=1}^t\mathbb E[U_i[s]]+ O\left(\sqrt{\frac{\log t}{t}}\right).
		\end{split}
	\end{equation*}
	Clearly, $\frac1t\sum_{s=1}^{t_0}V_i[s] \overset{\mathrm{a.s.
			}}\to 0$, so by uniform integrability, $\mathbb E\left[\frac1t\sum_{s=1}^{t_0}
		V_i[s]\right]\to 0$.
	Therefore,
	\begin{equation*}
		\liminf_{t\to\infty}\frac1t\sum_{s=1}^t\mathbb E[U_i[s]] \leq \liminf_{t\to\infty}\frac1t\sum_{s=1}^t \mathbb E[U_i[s]],
	\end{equation*}
	proving the Nash equilibrium claim.

	For the utility claim, let $\tilde U_i[s]$ denote the utility of agent $i$ in the imaginary game where no budget constraints are enforced and all agents $j$ are following a $\beta_j$-aggressive strategy.
	Observe that the $\tilde U_i[s]$ are i.i.d.
	$\mathrm{Bernoulli}(p_iv^\star(\beta_i))$.
	By the strong law of large numbers, $\frac1t\sum_{s=1}^t \tilde U_i[s]\overset{\mathrm{a.s.}}\to p_iv^\star(\beta_i)$.
	The utilities in the actual game satisfy
	\begin{equation*}
		\frac1t\sum_{s=1}^t U_i[s] = \frac1t\sum_{s=1}^{t_0}
		U_i[s] + \frac1t\sum_{s=t_0+1}^t U_i[s] = \frac1t\sum_{s=1}^{t_0} U_i[s] + \frac1t\sum_{s=t_0+1}^t \tilde U_i[s], \end{equation*} which has the same limit as $t\to\infty$.
\end{proof}

\subsection{Deferred Proof of \texorpdfstring{\cref{lem:bangforbuck}}{Lemma 5.1}}
\label{ssec:appendix_bang_for_buck_proof}
First, we note that the worst-case value distribution for robustness is a $\mathrm{Bernoulli}(\alpha_i)$ value distribution.
\begin{lemma}
	\label{lem:bernoulliworstcasevaluedistribution}
    Assume player $i$ has a policy $\hat\pi_i$ such that if they had a $\hat{\mathcal F_i} =\mathrm{Bernoulli}(\alpha_i)$ value distribution, regardless of the behavior of other agents $j\neq i$ they would obtain utility
    \begin{equation*}
        \frac1T\sum_{t=1}^T \hat U_i[t] \geq \lambda_i \hat v_i^\star
    \end{equation*}
    with probability at least $1 - O(1/T^2)$ where $\hat v_i^\star = \alpha_i$ is the ideal utility of agent $i$ had they a $\mathrm{Bernoulli}(\alpha_i)$ value distribution.
	Then, if, instead, player $i$ had an arbitrary value distribution $\mathcal F_i$, we can construct a policy $\pi_i$ such that regardless of the behavior of other agents $j\neq i$, player $i$ would obtain utility
    \begin{equation*}
        \frac1T\sum_{t=1}^T U_i[t] \geq \lambda_iv_i^\star - O\left(\sqrt{\frac{\log T}{T}}\right)
    \end{equation*}
    with probability at least $1 - O(1/T^2)$.
\end{lemma}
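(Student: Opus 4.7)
The plan is to reduce the general distribution case to the Bernoulli hypothesis by a coupling argument. Let $\rho_i^\star:[0,\infty)\to[0,1]$ be the $\alpha_i$-ideal utility probability function, i.e.\ the maximizer of $\E[V_i\rho(V_i)]$ subject to $\E[\rho(V_i)]\leq\alpha_i$; without loss of generality we may take $\rho_i^\star$ to saturate the budget constraint, so that $\E[\rho_i^\star(V_i)]=\alpha_i$ and $\E[V_i\rho_i^\star(V_i)]=v_i^\star$. Given the true values $V_i[t]\sim\mathcal F_i$, introduce auxiliary Bernoulli variables $Z_i[t]\sim\mathrm{Bernoulli}(\rho_i^\star(V_i[t]))$, sampled independently across $t$ and independent of all other randomness in the system. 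Marginally the $Z_i[t]$ are i.i.d.\ $\mathrm{Bernoulli}(\alpha_i)$, exactly matching the distribution of $\hat V_i[t]$. Define the new policy $\pi_i$ by simulating $\hat\pi_i$ on the fictitious value sequence $(Z_i[t])_t$: agent $i$ bids in round $t$ iff $\hat\pi_i$ would bid given the observations $Z_i[1],\dots,Z_i[t]$. Without loss of generality $\hat\pi_i$ bids only when its observed value equals $1$ (otherwise the bid wastes budget with no utility gain in the Bernoulli game), so $X_i[t]=1\Rightarrow Z_i[t]=1$. Under this coupling, the joint distribution of the bids and allocations produced by $\pi_i$ is identical to that produced by $\hat\pi_i$ on a genuine $\mathrm{Bernoulli}(\alpha_i)$ stream, so the hypothesis yields $|W|:=\sum_t X_i[t]\geq\lambda_i\alpha_i T$ with probability at least $1-O(1/T^2)$, regardless of the behavior of agents $j\neq i$.

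The heart of the argument is a conditional independence observation. Let $\mathcal E=\sigma(Z_i[1..T],V_{-i}[1..T],\text{mechanism randomness})$. Conditional on $\mathcal E$, the variables $\{V_i[t]\}_{t=1}^T$ remain independent, and each $V_i[t]$ has conditional distribution $V_i\mid Z_i=Z_i[t]$; this is because $V_i[t]$ is independent of $V_i[s]$ for $s\neq t$, of all $V_{-i}[\cdot]$, and of internal mechanism randomness, so the only information about $V_i[t]$ that $\mathcal E$ contains is $Z_i[t]$. The random winning set $W$ is $\mathcal E$-measurable: $\pi_i$'s bids depend on $V_i$ only through $(Z_i[\cdot])$, and even adversarial collusion among the other agents can only use public history, which depends on $V_i$ solely through $(Z_i[\cdot])$ as well. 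Since $X_i[t]=1\Rightarrow Z_i[t]=1$, the family $\{V_i[t]:t\in W\}$ is, conditional on $\mathcal E$, an i.i.d.\ collection of draws from the law $V_i\mid Z_i=1$, whose mean is $\E[V_i\rho_i^\star(V_i)]/\alpha_i=v_i^\star/\alpha_i$.

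With that structure in hand, the finishing move is Hoeffding's inequality applied conditionally on $\mathcal E$, leveraging that values are bounded. Taking deviation parameter of order $\sqrt{T\log T}$ gives
\begin{equation*}
    \sum_{t\in W}V_i[t]\;\geq\;|W|\cdot\frac{v_i^\star}{\alpha_i}\;-\;O\bigl(\sqrt{T\log T}\bigr)
\end{equation*}
with conditional probability $1-O(1/T^2)$, which transfers to the unconditional setting by taking expectation over $\mathcal E$. Intersecting this with the event $\{|W|\geq\lambda_i\alpha_i T\}$ produces
\begin{equation*}
    \sum_{t=1}^T U_i[t]\;=\;\sum_{t\in W}V_i[t]\;\geq\;\lambda_i v_i^\star T\;-\;O\bigl(\sqrt{T\log T}\bigr)
\end{equation*}
with probability $1-O(1/T^2)$; dividing by $T$ delivers the claimed bound.

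The main obstacle is the careful setup of the conditional independence under adversarial and potentially collusive behavior of the other agents: one must verify that no feedback channel leaks $V_i[t]$ to them beyond what $Z_i[t]$ already reveals through $\pi_i$'s bids, so that the coupling to the Bernoulli game is faithful. Once this is done cleanly, both the transfer of the win-count guarantee and the Hoeffding concentration on values are routine.
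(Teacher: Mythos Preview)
Your proposal is correct and follows essentially the same approach as the paper: both construct the policy $\pi_i$ by sampling auxiliary Bernoulli values $Z_i[t]\sim\mathrm{Bernoulli}(\rho_i^\star(V_i[t]))$ and running $\hat\pi_i$ on this fictitious stream, then exploit the fact that the mechanism's outcomes are independent of $V_i[t]$ conditional on $Z_i[t]$. The only cosmetic difference is in the concentration step: the paper packages the argument as a martingale $M[t]=\sum_{s\le t}V_i[s]\mathbf 1\{i^s=i\}-\frac{v_i^\star}{\alpha_i}\sum_{s\le t}\hat V_i[s]\mathbf 1\{i^s=i\}$ and applies Azuma--Hoeffding, whereas you condition on $\mathcal E$ to make the winning-round values i.i.d.\ and apply ordinary Hoeffding; these are two standard ways of leveraging the same conditional-independence structure.
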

\begin{proof}
	Suppose agent $i$ has arbitrary value distribution $\mathcal F_i$.
	Construct the policy $\pi_i$ as follows.
	At each time $t$, agent $i$ will sample $\hat V_i[t]\sim\mathrm{Bernoulli}((\rho_i^{\alpha_i})^\star(V_i[t]))$, where $(\rho_i^{\alpha_i})^\star$ is agent $i$'s $\alpha_i$-ideal utility probability function with value distribution $\mathcal F_i$. (We define $\alpha_i$-ideal utility probability function in \cref{def:ideal_utility}; informally, it is $1$ if $V_i[t]$ is in the top $\alpha_i$-quantile of $\mathcal F_i$ and $0$ otherwise.)
	Then, agent $i$ will bid if and only if $\hat\pi_i$ would bid with the Bernoulli value $\hat V_i[t]$.
	In other words, the policy $\pi_i$ is simply following the policy $\hat \pi_i$ but with the Bernoulli values $\hat V_i[t]$ instead of the actual values.

	Notice that the $\hat V_i[t]$ are indeed i.i.d.
	$\mathrm{Bernoulli}(\alpha_i)$.
        By the hypothesis of the lemma,
        \begin{equation*}
            \frac1T\sum_{t=1}^T \hat V_i[t]\pmb1\{i^t = i\} \geq \lambda_i\alpha_i
        \end{equation*}
        on an event $E_1$ of probability at least $1 - O(1/T^2)$.

        Let $\mathcal H_t$ denote the history up to time $t$. Let $\mathcal G_t$ be the $\sigma$-algebra generated by $\mathcal H_t$, $\hat V_i[t+1]$, and $i^{t+1}$. Define the $\mathcal G_t$-adapted process
        \begin{equation*}
            M[t] = \sum_{s=1}^t V_i[s]\pmb1\{i^s = s\} - \frac{v_i^\star}{\alpha_i}\sum_{s=1}^t\hat V_i[s]\pmb1\{i^s = s\}
        \end{equation*}
	Observe that when agent $i$ uses the policy $\hat\pi_i$, everything in the mechanism is independent of the actual values $V_i[t]$ conditioned on the Bernoulli values $\hat V_i[t]$.
	Using this fact,
	\begin{equation*}
		\label{eq:actualvalueversusbernoulli}
		\begin{split}
			\mathbb E[V_i[t]\pmb1\{i^t=i\}\mid \mathcal G_{t-1}] & = \mathbb E[V_i[t]\mid\hat V_i[t]]\pmb1\{i^t=i\}                                          \\
			                                & = \frac{\mathbb E[V_i[t]\pmb1\{\hat V_i[t]=1\}]}{\Pr(\hat V_i[t]=1)}\hat V_i[t]\pmb1\{i^t=i\} \\
			                                & = \frac{\mathbb E[V_i[t](\rho_i^{\alpha_i})^\star(V_i[t])]}{\Pr(\hat V_i[t])=1)}\hat V_i[t]\pmb1\{i^t=i\}                         \\
			                                & = \frac{v_i^\star}{\alpha_i}\cdot\hat V_i[t]\pmb1\{i^t=i\}.
		\end{split}
	\end{equation*}
        Therefore, $M[t]$ is a $\mathcal G_t$-martingale. Let $\bar v$ be an upper bound on the distribution $\mathcal F_i$ (recall we assumed value distributions are bounded). By the Azuma-Hoeffding inequality, for any $\epsilon > 0$,
        \begin{equation*}
            \Pr(M[T] \leq -\epsilon) \leq \exp\left(-\frac{2\epsilon^2}{\bar v^2T}\right).
        \end{equation*}
        Set $\epsilon = \bar v\sqrt{T\ln T}$, so that the above is at most $O(1/T^2)$. Let $E_2$ be the event that the above does not occur and let $E = E_1\cap E_2$. We have $\Pr(E) \geq 1 - O(1/T^2)$. On $E$,
        \begin{equation*}
            \frac1T\sum_{t=1}^T V_i[t]\pmb1\{i^t = i\} = \frac1T M[T] + \frac{v_i^\star}{\alpha_i T}\sum_{t=1}^T \hat V_i[t]\pmb1\{i^t = i\} \geq \lambda_iv_i^\star - O\left(\sqrt{\frac{\log T}{T}}\right),
        \end{equation*}
        thereby proving the lemma.
\end{proof}

Now we can prove \cref{lem:bangforbuck}.
\begin{proof}[Proof of \cref{lem:bangforbuck}]
	Let $\mathcal H_t$ denote the history up to time $t$.
	Let $\mathcal G_t$ be the $\sigma$-algebra generated by $\mathcal H_t$ and $b_j^{t+1}$ for $j\neq i$, i.e., make the bids of agents $j\neq i$ predictable processes with respect to $\mathcal G_t$.
	Define the $\mathcal G_t$-adapted process
	\begin{equation*}
		M_i^t = \sum_{s=1}^t \left(b_i^s\pmb1\{i^t\neq i\} - \alpha_i\bar p\sum_{j\neq i}b_j^s\right).
	\end{equation*}
	If $\sum_{j\neq i}b_j^s=0$, then
	\begin{equation*}
		\mathbb E\left[b_i^s\pmb1\{i^t\neq i\}\,\middle|\,\mathcal G_{s-1}\right] = 0.
	\end{equation*}
	If $\sum_{j\neq i}b_j^s = 1 = b_j^s$, then
	\begin{equation*}
		\mathbb E\left[b_i^s\pmb1\{i^t\neq i\}\,\middle|\,\mathcal G_{s-1}\right] \leq \alpha_ip_j^{\{i,j\}} \leq \alpha_i\bar p.
	\end{equation*}
	If $\sum_{j\neq i}b_j^s \geq 2$, then
	\begin{equation*}
		\mathbb E\left[b_i^s\pmb1\{i^t\neq i\}\,\middle|\,\mathcal G_{s-1}\right] \leq \alpha_i.
	\end{equation*}
	In any case, since $\bar p\geq \frac12$,
	\begin{equation*}
		\mathbb E\left[b_i^s\pmb1\{i^t\neq i\}\,\middle|\,\mathcal G_{s-1}\right] \leq \alpha_i\bar p\sum_{j\neq i}b_j^s.
	\end{equation*}
	Therefore, $M_i^t$ is a $\mathcal G_t$-supermartingale.
	Letting $\delta^T = \max_{j\neq i}\delta_j^T$, observe that
	\begin{equation*}
		\sum_{s=1}^T \sum_{j\neq i}b_j^s \leq (1-\alpha_i)(1+\delta^T)T \leq (1-\alpha_i)T + O(\sqrt{T\log T}).
	\end{equation*}
	So,
	\begin{equation*}
		\begin{split}
			\sum_{s=1}^T b_i^s\pmb1\{i^t\neq i\} & \leq \sum_{s=1}^T \mathbb E[b_i^s\pmb1\{i^t\neq i\}\mid\mathcal G_{s-1}] + M_i^T \\
			                                     & \leq \alpha_i\bar p\sum_{s=1}^T\sum_{j\neq i}b_j^s + M_i^T                       \\
			                                     & \leq \alpha_i\bar p(1-\alpha_i)T + O(\sqrt{T\log T}) + M_i^T.
		\end{split}
	\end{equation*}
	This will be at most $\alpha_i\bar p(1-\alpha_i)T + O(\sqrt{T\log T})$ on an event $E_1$ of probability at least $1-O\left(\frac{1}{T^2}\right)$ by Azuma-Hoeffding applied to $M_i^T$.
	Also, agent $i$ will bid at least $\alpha_i T - (\sqrt{T\log T})$ times on some event $E_2$ or probability at least $1-O\left(\frac{1}{T^2}\right)$ by standard Chernoff bounds.
	Let $E = E_1\cap E_2$.
        We lower bound agent $i$'s utility on $E$.
	Appealing to \cref{lem:bernoulliworstcasevaluedistribution}, we assume without loss of generality that agent $i$ has a $\mathrm{Bernoulli}(\alpha_i)$ value distribution.
	Then, the $\alpha_i$-aggressive strategy is simply the strategy of bidding when $V_i[t]=1$ if there is budget left.
	Agent $i$'s utility can then be bounded below on $E$ as
	\begin{equation*}
		\begin{split}
			\frac1T\sum_{t=1}^T U_i[t] & = \frac1T\sum_{t=1}^T V_i[t]b_i^t\pmb1\{i^t=i\} = \frac1T\sum_{t=1}^T b_i^t\pmb1\{i^t=i\}\\
            & = \frac1T\sum_{t=1}^T b_i^t - \frac1T\sum_{t=1}^Tb_i^t\pmb1\{i\neq i^t\} \geq \left(1 - \alpha_i\bar p(1-\alpha_i)\right)\alpha_i - O\left(\sqrt{\frac{\log T}{T}}\right).
		\end{split}
	\end{equation*}
	Because the $\alpha_i$-ideal utility under a $\mathrm{Bernoulli}(\alpha_i)$ value distribution is $\alpha_i$, this establishes the lemma.
\end{proof}

\subsection{Completion of the Proof of \texorpdfstring{\cref{lem:key_lemma}}{Lemma 5.3}}
\label{ssec:appendix_check_schur_convexity_proof}
What is left to verify is that $f(\cdot, (y_j)_{j\notin I})$ is Schur-concave for each $(y_j)_{j\notin I}$ and $f((x_i)_{i\in I}, \cdot)$ is Schur-convex for each $(x_i)_{i\in I}$. Notice that both $f(\cdot, (y_j)_{j\notin I})$ and $f((x_i)_{i\in I}, \cdot)$ are symmetric functions. By the Schur–Ostrowski criterion, it suffices to show that
\begin{align*}
    (x_{i_1} - x_{i_2})\left(\frac{\partial f}{\partial x_{i_1}} - \frac{\partial f}{\partial x_{i_2}}\right)\leq 0
\end{align*}
and
\begin{align*}
    (y_{j_1} - y_{j_2})\left(\frac{\partial f}{\partial y_{j_1}}- \frac{\partial f}{\partial y_{j_2}}\right)\geq 0
\end{align*}
for all $((x_i)_{i\in I}, (y_j)_{j\notin I})\in K$.

First, let us prove the following lemma.
\begin{lemma}
	\label{lem:sumconvex}
	Let $\phi:[0,b]\to[0,\infty]$ be convex and nondecreasing with $\phi(0) = 0$.
	For any nonnegative $x_1, \dots, x_n$ with $\sum_{i=1}^n x_i\leq b$,
	\begin{equation*}
		\sum_{i=1}^n \phi(x_i) \leq \phi\left(\sum_{i=1}^nx_i\right).
	\end{equation*}
\end{lemma}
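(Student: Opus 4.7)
The plan is to reduce the inequality to the superadditivity of $\phi$ on $[0,b]$, and then chain the superadditivity by induction on $n$. The key structural fact is that, under the hypotheses $\phi(0)=0$ and $\phi$ convex, the function $\phi$ satisfies $\phi(a)+\phi(c)\le \phi(a+c)$ whenever $a,c\ge 0$ and $a+c\le b$. This is the classical observation that a convex function through the origin is superadditive on its domain.

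First I would establish superadditivity. For $a,c>0$ with $a+c\le b$, write $a=\tfrac{a}{a+c}(a+c)+\tfrac{c}{a+c}\cdot 0$ and apply convexity to get $\phi(a)\le \tfrac{a}{a+c}\phi(a+c)$; symmetrically $\phi(c)\le \tfrac{c}{a+c}\phi(a+c)$, and summing gives $\phi(a)+\phi(c)\le \phi(a+c)$. Note the monotonicity hypothesis is not strictly needed for this step (only $\phi(0)=0$ and convexity), but we will use monotonicity implicitly when we apply the inequality to a partial sum $\sum_{i=1}^{k}x_i$ that may be smaller than $\sum_{i=1}^{n}x_i$.

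Next I would induct on $n$. The case $n=1$ is immediate (in fact with equality). Assume the inequality for $n-1$ terms. Given $x_1,\dots,x_n\ge 0$ with $s_n:=\sum_{i=1}^n x_i\le b$, set $s_{n-1}:=\sum_{i=1}^{n-1}x_i$, which lies in $[0,b]$. The inductive hypothesis yields
\[
\sum_{i=1}^{n-1}\phi(x_i)\le \phi(s_{n-1}),
\]
and superadditivity applied to $s_{n-1}$ and $x_n$ (whose sum $s_n$ lies in $[0,b]$) gives $\phi(s_{n-1})+\phi(x_n)\le \phi(s_n)$. Adding $\phi(x_n)$ to both sides of the inductive bound and combining yields $\sum_{i=1}^n \phi(x_i)\le \phi(s_n)$, closing the induction.

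I do not anticipate any real obstacle here; the only subtlety is making sure each intermediate quantity $s_k$ stays within $[0,b]$ so that $\phi(s_k)$ is well-defined and that superadditivity applies, which follows immediately from nonnegativity of the $x_i$ and the assumption $\sum_i x_i\le b$.
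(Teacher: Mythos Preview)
Your proof is correct. The superadditivity step $\phi(a)+\phi(c)\le\phi(a+c)$ via the convex combination $a=\tfrac{a}{a+c}(a+c)+\tfrac{c}{a+c}\cdot 0$ is exactly the right engine, and the induction goes through cleanly. One small remark: despite what you say, you never actually use the monotonicity hypothesis anywhere---your argument only needs convexity and $\phi(0)=0$. The comment about ``using monotonicity implicitly'' for partial sums is a red herring, since you pass from $\phi(s_{n-1})$ to $\phi(s_n)$ via superadditivity, not via $\phi(s_{n-1})\le\phi(s_n)$.

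The paper takes a different route: it treats $\sum_i\phi(y_i)$ as a convex function on the simplex $\{y\ge 0:\sum_i y_i\le \bar x\}$, observes that a convex objective over a polytope attains its maximum at an extreme point, and then identifies the maximizing extreme point as one with all mass on a single coordinate (using monotonicity to rule out the origin). Your approach is more elementary and self-contained---it avoids invoking the extreme-point structure of polytopes and the fact that convex functions are maximized at vertices---while the paper's argument is a one-shot optimization view that generalizes more readily if one later wants to maximize over other polytopes. Both are short; yours is arguably cleaner for this specific statement and, as noted, reveals that the nondecreasing hypothesis is superfluous.
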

\begin{proof}
	Define $\bar x = \sum_{i=1}^n x_i$, and consider the optimization problem below.
	\begin{align*}
		\max_{(y_i)}     & \sum_{i=1}^n \phi(y_i) \nonumber                             \\
		\text{s.t.}\quad & \sum_{i=1}^n y_i \leq\bar x                                \\
		\quad\quad       & y_i\geq 0                        & \forall i\in[n] \nonumber
	\end{align*}
	Since $\phi$ is convex, an optimal solution $(y_i^\star)$ lies on an extreme point of the feasible region.
	Since the feasible region is a polytope defined by $n+1$ constraints, at least $n$ of them must be tight at an extreme point.
	Since $\phi$ is nondecreasing, this implies that there exists a unique $i^\star$ such that $y^\star_{i^\star} = \bar x$ and all other $y^\star_i=0$.
	Notice that $(x_i)$ is a feasible solution to the optimization problem. It follows that
	\begin{equation*}
		\sum_{i=1}^n \phi(x_i) \leq \sum_{i=1}^n \phi(y^\star_i) = \phi(y^\star_{i^\star}) + \sum_{i\neq i^\star}\phi(y^\star_i)= \phi(\bar x) = \phi\left(\sum_{i=1}^n x_i\right).
	\end{equation*}
\end{proof}

To verify that $f(\cdot, (y_j)_{j\notin I})$ is Schur-concave for each $(y_j)_{j\notin I}$, compute
\begin{equation*}
	\begin{split}
		(x_{i_1} - x_{i_2})\left(\frac{\partial f}{\partial x_{i_1}} - \frac{\partial f}{\partial x_{i_2}}\right) = \frac12(x_{i_1} - x_{i_2})^2\prod_{i\in I\setminus\{i_1, i_2\}}(1-x_i)\left(-2 + 2\prod_{j\notin I}(1-y_j)\right. \\
		\left. + \prod_{j\notin I}(1-y_j)\sum_{i\in I\setminus\{i_1, i_2\}}\frac{x_i}{1-x_i}(-1 + X)\right) \leq 0
	\end{split}
\end{equation*}
where the inequality comes from the inequalities $-2 + 2\prod_{j\notin I}(1-y_j) \leq 0$ and $-1 + X \leq 0$.

To verify that $f((x_i)_{i\in I}, \cdot)$ is Schur-convex for each $(x_i)_{i\in I}$, compute
\begin{equation}
	\label{eq:verify_schur_convexity}
	\begin{split}
		(y_{j_1} - y_{j_2})\left(\frac{\partial f}{\partial y_{j_1}}- \frac{\partial f}{\partial y_{j_2}}\right) = \frac12\prod_{i\in I}(1-x_i)\prod_{j\notin I\cup\{j_1, j_2\}}(1-y_j)\left(2X - \sum_{i\in I}\frac{x_i}{1-x_i}(1-X)\right).
	\end{split}
\end{equation}
By \cref{lem:sumconvex} applied to the function $x\mapsto \frac{x}{1-x}$ on $[0,1]$,
\begin{equation*}
	2X - \sum_{i\in I}\frac{x_i}{1-x_i}(1-X) \geq 2X - \frac{\sum_{i\in I}x_i}{1 - \sum_{i\in I}x_i}(1-X) = 2X - \frac{X}{1-X}(1-X) = X \geq 0.
\end{equation*}
Therefore, \eqref{eq:verify_schur_convexity} is nonnegative so $f((x_i)_{i\in I}, \cdot)$ is Schur-convex.

\subsection{Deferred Proofs of \texorpdfstring{\cref{prop:assymetric_non_robustness,lem:anticorrelatedbidding}}{Proposition 5.4 and Lemma 5.5}}
\label{ssec:appendix_upper_bound_proofs}
Notice that \cref{prop:assymetric_non_robustness} is a special case of \cref{lem:anticorrelatedbidding}, so we only need to prove \cref{lem:anticorrelatedbidding}.
\begin{proof}[Proof of \cref{lem:anticorrelatedbidding}]
	The other agents $j\neq i$ will use the following strategy.
	At each time $t$, if every $j\neq i$ has budget remaining, either no $j\neq i$ will bid or a single agent $j$ will bid, where an agent $j$ bids with probability $\alpha_j$ (and therefore no agent will bid with probability $1-\sum_{j\neq i}\alpha_j = \alpha_i$).
	Their strategy will be independent across times, but notice that the agents' bidding are very much not independent across agents.

	For any single agent $j$, their bids are i.i.d.
	$\mathrm{Bernoulli}(\alpha_j)$ across time conditioned on them having budget remaining.
	Let $E$ be the event that no agent $j\neq i$ runs out of budget.
	For the same reason as in \cref{lem:dontrunoutofmoney}, the probability that agent $j$ runs out of budget is at most $O(1/T^2)$, so $\Pr(E) \geq 1 - O(1/T^2)$.
	Using the idea and notation of the imaginary game where budgets are not enforced as introduced in \cref{ssec:appendix_equilibrium_proofs}, agent $i$'s expected utility can be bounded as
	\begin{equation*}
		\label{eq:robustnessimaginarygamedifference}
		\begin{split}
			\frac1T\sum_{t=1}^T \mathbb E[U_i[t]] \leq \frac1T\sum_{t=1}^T \mathbb E[\tilde U_i[t]] + \frac1T\sum_{t=1}^T \mathbb E[V_i[t]\pmb1_{E^c}] \leq \frac1T\sum_{t=1}^T \mathbb E[\tilde U_i[t]] + O\left(\frac{1}{T^2}\right).
		\end{split}
	\end{equation*}
	Thus, we just need to bound the imaginary utility $\frac1T\sum_{t=1}^T \mathbb E[\tilde U_i[t]]$.

	By the strategy of the other agents, (we use the notation for $\beta$-ideal utility as in \cref{def:ideal_utility})
	\begin{equation*}
		\begin{split}
			\frac1T\sum_{t=1}^T \mathbb E[\tilde U_i[t]] & = \frac1T\sum_{t=1}^T \mathbb E[V_i[t]b_i^t\pmb1\{i^t=i\}] = \frac1T\sum_{t=1}^T\left(\sum_{j\neq i}\alpha_jp_j^{\{i,j\}} + \alpha_i\right)\mathbb E[V_i[t]b_i^t] \\
			                                      & \leq \left(\sum_{j\neq i}\alpha_jp_j^{\{i,j\}} + \alpha_i\right)v_i^\star((1+\delta_i^T)\alpha_i)                                                    \\
			                                      & \leq \left(\sum_{j\neq i}\alpha_jp_j^{\{i,j\}} + \alpha_i\right)v_i^\star(\alpha_i) + O\left(\sqrt{\frac{\log T}{T}}\right),
		\end{split}
	\end{equation*}
	using \cref{lem:generalized_knapsack_problem_over_time} for the first inequality and \cref{lem:idealutilitylipschitz} for the second.
	The above and \eqref{eq:robustnessimaginarygamedifference} imply the lemma statement.
\end{proof}

\subsection{Deferred Proof of \texorpdfstring{\cref{thm:robustnesshardness}}{Theorem 5.6}}
\label{ssec:appendix_convex_combination_upper_bound_proof}
\begin{proof}[Proof of \cref{thm:robustnesshardness}]
	First, compute
	\begin{equation*}
		\begin{split}
			\sum_{i=1}^n\sum_{j\neq i}\alpha_i\alpha_jp_i^{\{i,j\}} =  \sum_{i=1}^n\sum_{j\neq i}\alpha_i\alpha_j(1 - p_j^{\{i,j\}}) = \sum_{i=1}^n \sum_{j\neq i}\alpha_i\alpha_j - \sum_{i=1}^n \sum_{j\neq i}\alpha_i\alpha_j p_j^{\{i,j\}} \\
			= 1- \sum_{i=1}^n \alpha_i^2 - \sum_{i=1}^n \sum_{j\neq i}\alpha_i\alpha_jp_i^{\{i,j\}}.
		\end{split}
	\end{equation*}
	Solving for $\sum_{i=1}^n\sum_{j\neq i}\alpha_i\alpha_jp_i^{\{i,j\}}$,
	\begin{equation*}
		\sum_{i=1}^n\sum_{j\neq i}\alpha_i\alpha_jp_i^{\{i,j\}} = \frac12-\frac12\sum_{i=1}^n\alpha_i^2.
	\end{equation*}
	Using the above and \cref{lem:anticorrelatedbidding},
	\begin{equation*}
		\begin{split}
			\lambda = \lambda\sum_{i=1}^n\alpha_i & \leq \sum_{i=1}^n \left(\alpha_i + \sum_{j\neq i}\alpha_j p_i^{\{i,j\}}\right)\alpha_i + O\left(\sqrt{\frac{\log T}{T}}\right) \\
			                                      & = \sum_{i=1}^n \alpha_i^2 + \left(\frac12 - \frac12\sum_{i=1}^n\alpha_i^2\right) + O\left(\sqrt{\frac{\log T}{T}}\right)       \\
			                                      & = \frac12 + \frac12\sum_{i=1}^n\alpha_i^2 + O\left(\sqrt{\frac{\log T}{T}}\right).
		\end{split}
	\end{equation*}

\end{proof}

\subsection{Deferred Proofs from \texorpdfstring{\cref{sec:computation}}{section 6}}
\label{ssec:appendix_computation_proofs}

\crefname{algorithm}{Algorithm}{Algorithms} 
\Crefname{algorithm}{Algorithm}{Algorithms} 

\floatname{algorithm}{Algorithm}

\begin{proof}[Proof of \cref{lem:computation_lemma}]
The runtime claim is easily checked.

Let $\mathcal S_i^{(k)} = \{S\in\mathcal S^{(k)}:i\in S\}$. By the Chernoff bound, for any iteration $k$, agent $i$, and $0 < \epsilon < 1$,
\begin{equation}
\label{eq:sufficient_samples_probability}
	\Pr_{\mathcal S^{(k)}\sim\mu^C}\left(|\mathcal S_i^{(k)}| \leq (1-\epsilon)\alpha_i C\right)\leq \exp\left(-\frac{\epsilon^2C\alpha_i}{2}\right).
\end{equation}
By the Hoeffding bound, for any fixed allocation probabilities $p_i^S$ and a fixed iteration $k$,
\begin{equation*}
\begin{split}
	\Pr_{\mathcal S^{(k)}\sim \mu^C} & \left(\left|\frac{1}{|\mathcal S_i^{(k)}|}\sum_{S\in |\mathcal S_i^{(k)}|}p_i^S - \E_{S\sim \mu}[p_i^S\mid i\in S]\right|\geq \delta\,\middle|\,|\mathcal S_i^{(k)}|\right)\leq 2\exp\left(-2|\mathcal S_i^{(k)}|\delta^2\right).
\end{split}
\end{equation*}
Therefore,
\begin{equation*}
	\Pr_{\mathcal S^{(k)}\sim \mu^C} \left(\left|\frac{1}{|\mathcal S_i^{(k)}|}\sum_{S\in |\mathcal S_i^{(k)}|}p_i^S - \E_{S\sim \mu}[p_i^S\mid i\in S]\right|\geq \delta\right) \leq 2\exp(-2(1-\epsilon)\alpha_i C\delta^2) + \exp\left(-\frac{\epsilon^2C\alpha_i }{2}\right).
\end{equation*}

Let $E_1$ be the event where no LP at any iteration is infeasible. Using the allocation probabilities $p_i^S$ that satisfy \eqref{eq:equilibrium_utility_condition_repeated} and \eqref{eq:robustness_condition_repeated} in the above equation, and the union bound over iterations $k\in[K]$ and agents $i\in[n]$, the above implies that
\begin{equation}
\label{eq:lp_infeasibility_probability}
	\Pr(E_1^c) \leq \sum_{i=1}^n\left(2K\exp(-2(1-\epsilon)\alpha_i C\delta^2) + K\exp\left(-\frac{\epsilon^2C\alpha_i }{2}\right)\right).
\end{equation}

By assumption, each $(p_i^{S^*})^{(k)}$ for $k\in[K]$ satisfies $\mathcal F(S^*)$. Since $\mathcal F(S^*)$ just consists of linear constraints, the average $\frac1K\sum_{k=1}^K(p_i^{S^*})^{(k)}$ satisfies $\mathcal F(S^*)$. \cref{alg:compute_allocation_probabilities} sets $(p_i^{S^*}) = \frac1K\sum_{k=1}^K(p_i^{S^*})^{(k)}$ on $E_1$. We now show that this output approximately satisfies \eqref{eq:interim_allocation_constraint}.

On $E_1$, the multiplicative weights guarantee for solving linear programs, \cref{thm:multiplicative_weights_lp}, says for every $i$,
\begin{equation}
\label{eq:sampled_interim_allocation_probability}
	\left|\frac1K\sum_{k=1}^K\frac{1}{|\mathcal S_i^{(k)}|}\sum_{S\in\mathcal S_i^{(k)}} (p_i^S)^{(k)} - \left(1 - \prod_{j\in [n]}(1-\alpha_j)\right)\right| \leq \delta + \left(\eta + \frac{\ln 2n}{\eta K}\right).
\end{equation}

For a fixed iteration $k$ and a randomly chosen $S$, $(p_i^S)^{(k)}$ is a random variable whose law only depends on $y^{(k)}$ and $S$. Since $\mathcal S^{(k)}$ is independent of $y^{(k)}$, conditioning on $y^{(k)}$, the values of $(p_i^S)$ for $S\in\mathcal S^{(k)}$ are just i.i.d. samples from the distribution of the $(p_i^S)^{(k)}$ for $S\sim\mu$. This is a distribution over $[0,1]$ with mean $\E_{S\sim\mu}[(p_i^S)^{(k)}\mid y^{(k)}] = \E_{S\sim\mu}[(p_i^S)^{(k)}\mid (\mathcal S^{(k')})_{k'\in[K]}]$.

Therefore, by Hoeffding's inequality, for any $\delta'>0$ and fixed $i$,
\begin{equation*}
\begin{split}
	\Pr_{\mathcal S^{(k)}\sim \mu^C} \left(\left|\frac{1}{|\mathcal S_i^{(k)}|}\sum_{S\in\mathcal S_i^{(k)}}(p_i^S)^{(k)} - \E_{S\sim \mu}\left[(p_i^S)^{(k)}\mid i\in S, (\mathcal S^{(k')})_{k'\in[K]}, |\mathcal S_i^{(k)}|\right]\right| \geq \delta'\,\middle|\,|\mathcal S_i^{(k)}|, y^{(k)}\right)\\
    \leq 2\exp(-2|\mathcal S_i^{(k)}|\delta'^2).
\end{split}
\end{equation*}
By \eqref{eq:sufficient_samples_probability}, this implies, for fixed $i$,
\begin{equation*}
\begin{split}
	\Pr_{\mathcal S^{(k)}\sim \mu^C}\left(\left|\frac{1}{|\mathcal S_i^{(k)}|}\sum_{S\in\mathcal S_i^{(k)}}(p_i^S)^{(k)} - \E_{S\sim \mu}\left[(p_i^S)^{(k)}\mid i\in S, (\mathcal S^{(k')})_{k'\in[K]}\right]\right| \geq \delta'\,\middle|\,(\mathcal S^{(k')})_{k'\in[K]}\right)\\
	\leq2\exp(-2(1-\epsilon)\alpha_i C\delta'^2) + \exp\left(-\frac{\epsilon^2C\alpha_i}{2}\right).
\end{split}
\end{equation*}
Let $E_2$ be the event that the above does not happen for any iteration $k$ and any agent $i$, so by the union bound over iterations $k\in[K]$ and agents $i\in[n]$ and the above,
\begin{equation*}
	\Pr(E_2^c)\leq \sum_{i=1}^n\left(2K\exp(-2(1-\epsilon)\alpha_i C\delta'^2) + K\exp\left(-\frac{\epsilon^2C\alpha_i}{2}\right)\right)
\end{equation*}

Let $E = E_1\cap E_2$. On $E$, by \eqref{eq:sampled_interim_allocation_probability}, the definition of $E_2$, and the triangle inequality,
\begin{equation*}
\begin{split}
	\left|\E_{\substack{S\sim\mu}}\left[\frac1K\sum_{k=1}^K(p_i^S)^{(k)}\,\middle|\, i\in S, (\mathcal S^{(k')})_{k'\in[K]}\right] - \left(1 - \prod_{j\in [n]}(1-\alpha_j)\right)\right| \leq \delta' + \delta + \left(\eta + \frac{\ln 2n}{\eta K}\right).
\end{split}
\end{equation*}
By the union bound,
\begin{equation*}
	\Pr(E^c) \leq \sum_{i=1}^n\left(2K\exp(-2(1-\epsilon)\alpha_i C\delta^2) + 2K\exp(-2(1-\epsilon)\alpha_i C\delta'^2) + 2K\exp\left(-\frac{\epsilon^2C\alpha_i}{2}\right)\right).
\end{equation*}

Substituting the choice of parameters in \eqref{eq:computation_lemma_parameter_choice} along with $\epsilon = \min\left\{\frac12,\sqrt{\frac{6\ln nK}{\underline \alpha C}}\right\}$ and $\delta' = \delta = \sqrt{\frac{3\ln nK}{\underline \alpha C}}$ in the above two displays, we obtain the lemma statement.
\end{proof}

\begin{proof}[Proof of \cref{thm:compute_dual_vectors_once}]
Conditioned on the initially computed samples $\mathcal S^{(k)}$ and on the sets $S^t$ bidding at each time, at any time $t$, if $S^t=S$, agent $i$ wins the item with probability
\begin{equation}
	p_i^S := \E_{k\sim [K]}\left[(p_i^S)^{(k)}\mid (\mathcal S^{(k')})_{k'\in[K]}\right] = \frac1K\sum_{k=1}^K (p_i^S)^{(k)}.
\end{equation}
Thus, running \cref{alg:compute_BRB} is equivalent to running \BRB with preset allocation probabilities $p_i^S$. Since $\frac1K\sum_{k=1}^K (p_i^S)^{(k)}$ is exactly the output of \cref{alg:compute_allocation_probabilities} when run with bidding agents $S$, we can use \cref{lem:computation_lemma} to see that
\begin{equation*}
	\left|\E_{\substack{S\sim\mu}}\left[p_i^{S}\,\middle|\, i\in S, (\mathcal S^{(k)})_{k\in [K]}\right] - \left(1 - \prod_{j\in [n]}(1-\alpha_j)\right)\right|\leq O\left(\sqrt{\frac{\log n}{K}}\right)\leq O\left(\sqrt{\frac{\log T}{T}}\poly(n, (1/\alpha_j)_{j\in[n]})\right)
\end{equation*}
with probability at least $1 - O(1/n^2T^2) \geq 1-O(\poly(n, (1/\alpha_j)_{j\in[n]})/T^2)$. We obtain the high probability equilibrium utility guarantee by applying \cref{thm:appendix_nash}.

The robustness guarantee follows from the fact that the each $(p_i^S)^{(k)}$ satisfies $\mathcal F(S)$ and that $\mathcal F(S)$ is convex, so the expectation of the $(p_i^S)^{(k)}$ satisfies $\mathcal F(S)$, which suffices to get the robustness guarantee from \cref{lem:bangforbuck}.

\end{proof}

\crefname{algorithm}{Mechanism}{Mechanisms} 
\Crefname{algorithm}{Mechanism}{Mechanisms} 

\bibliographystyle{siamplain}
\bibliography{bibliography}
\end{document}